\newcommand{\DATUM}{November 13, 2008}              
\newcommand{\comma}{\: ,}     
\newcommand{\period}{\: .}    
\newcommand{\Proof}{\noindent\emph{Proof. }}              
\newcommand{\QED}{\hspace*{\fill}\mbox{$\Box$}}           
\newcommand{\eps}{{\varepsilon}}        
\newcommand{\vphi}{{\varphi}}           
\newcommand{\Om}{\Omega}                
\newcommand{\om}{\omega}
\newcommand{\la}{\langle}
\newcommand{\ra}{\rangle}
\newcommand{\C}{C_\chi}
\newcommand{\one}{\mathbf{1}}
\newcommand{\cB}{\mathcal{B}}
\newcommand{\cD}{\mathcal{D}}
\newcommand{\cF}{\mathcal{F}}
\newcommand{\cH}{\mathcal{H}}
\newcommand{\cM}{\mathcal{M}}         
\newcommand{\cO}{\mathcal{O}}         
\newcommand{\cR}{\mathcal{R}}
\newcommand{\cS}{\mathcal{S}}
\newcommand{\cT}{\mathcal{T}}
\newcommand{\cU}{\mathcal{U}}
\newcommand{\cW}{\mathcal{W}}
\newcommand{\field}[1]{\mathbb{#1}}
\newcommand{\RR}{\field{R}}     
\newcommand{\NN}{\field{N}}     
\newcommand{\CC}{\field{C}}     
\newcommand{\fh}{\mathfrak{h}}  
\newcommand{\bchi}{{\overline{\chi}}}
\newcommand{\btau}{{\overline{\tau}}}
\newcommand{\uw}{{\underline w}}
\newcommand{\umpnq}{{\underline{m,p,n,q}}}
\newcommand{\upq}{{\underline{0,p,0,q}}}
\newcommand{\tuw}{{\underline{\tilde{w}}}}
\newcommand{\huw}{{\underline{\hat{w}}}}
\newcommand{\hE}{\widehat{E}}
\newcommand{\hT}{\widehat{T}}    
\newcommand{\hw}{\hat{w}}
\newcommand{\tV}{\widetilde{V}} 
\newcommand{\tW}{\widetilde{W}} 
\newcommand{\tk}{\tilde{k}}
\newcommand{\tr}{\tilde{r}}
\newcommand{\tw}{\widetilde{w}}
\newcommand{\tx}{\tilde{x}}
\newcommand{\tpi}{\tilde{\pi}}
\newcommand{\sym}{\mathrm{sym}}
\newcommand{\const}{\mathrm{const}}
\renewcommand{\red}{\mathrm{red}}
\newcommand{\rIm}{\mathrm{Im}}
\newcommand{\rRe}{\mathrm{Re}}               
\newcommand{\Ran}{\mathrm{Ran}}              
\newcommand{\cirS}{\mathop{\bigcirc\kern -.73em {\scriptstyle{\rm S}}}}
\newcommand{\dist}{{\rm dist}}
\newcommand{\dom}{\mathrm{dom}}
\newcommand{\supp}{\mathrm{supp}}
\newcommand{\cern}{\mathrm{Ker}}
\newcommand{\op}{\mathrm{op}}
\newcommand{\rad}{{f}}
\newcommand{\hf}{H_\rad}
\newcommand{\sprod}[2]{\mbox{$\langle #1,#2 \rangle$}}   
\newcommand{\lb}{\left(}
\newcommand{\rb}{\right)}
\renewcommand{\thesection}
{\Roman{section}}                      
\renewcommand{\theequation}
{\thesection.\arabic{equation}}        
\newcommand{\secct}[1]{\section{#1}
\setcounter{equation}{0}}              
\newtheorem{theorem}{Theorem}[section]         
\newtheorem{lemma}[theorem]{Lemma}             
\newtheorem{corollary}[theorem]{Corollary}     
\newtheorem{remark}[theorem]{Remark}           
\newtheorem{proposition}[theorem]{Proposition} 
\theoremstyle{plain}
\begin{document}
\bibliographystyle{plain}
\setcounter{page}{0}
\thispagestyle{empty}

\title{Spectral Renormalization Group}
\author{
J\"{u}rg Fr\"{o}hlich
\thanks{Institute~for~Theoretical Physics; ETH Z\"urich; Switzerland; and IHES, Bures-sur-Yvette, France}
\and Marcel Griesemer
\thanks{Department ~of Mathematics, University~of Stuttgart, D-70569 Stuttgart, Germany}
%
\and Israel Michael Sigal
\thanks{Department ~of Mathematics, University of Toronto; Toronto; Canada}
\thanks{Supported by NSERC Grant No. NA7901} \\
}

\date{\DATUM}

\maketitle

\begin{abstract}
The operator-theoretic  renormalization group (RG) methods are
powerful analytic tools to explore spectral properties of
field-theoretical models such as quantum electrodynamics (QED) with
non-relativistic matter. In this paper these methods are extended
and simplified. In a companion paper, our variant of
operator-theoretic RG methods is applied to establishing the
limiting absorption principle in non-relativistic QED near the
ground state energy.

%
%
%
%

\end{abstract}
%

\setcounter{page}{1}

\secct{Introduction} \label{sec-I}

This paper is devoted to the nuts and bolts of the spectral
(operator-theoretic) renormalization group (RG) method introduced in
\cite{BachFroehlichSigal1998a,BachFroehlichSigal1998b} and developed
further in \cite{BachChenFroehlichSigal2003,GriesemerHasler2}. This method has been
used successfully in order to describe the
spectral structure of non-relativistic quantum electrodynamics (QED)
with confining potentials and of Nelson's model with a 'subcritical'
interaction \cite{ BachFroehlichSigal1998a,BachFroehlichSigal1998b,
Chen2001, BachChenFroehlichSigal2006, Faupin2007, Sigal2008} (see
\cite{GustafsonSigal} for a book exposition and \cite{
BachFroehlichPizzo1, BachFroehlichPizzo2,
FroehlichGriesemerSigal2008a}, for an alternative multiscale
technique). The RG technique developed in this paper is a variant of
the one presented in \cite{BachChenFroehlichSigal2003}, where the smooth Feshbach-Schur map was introduced. It  simpler than that of \cite{BachChenFroehlichSigal2003} and similar to that of \cite{GriesemerHasler2}.
%

In this paper we apply the RG technique to prove existence of
eigenvaules and describe continuous spectra for operators on Fock
spaces appearing in massless quantum field theories for which
standard techniques do not work. (The papers
\cite{BachChenFroehlichSigal2003,GriesemerHasler2} deal only with
eigenvalues.) The results obtained here are used in subsequent
papers to prove existence of the ground state and resonances for
non-relativistic QED without the confinement assumption
(\cite{Sigal2008}, see also \cite{BachFroehlichPizzo1} ) and to
prove local decay near the ground state energy
(\cite{FroehlichGriesemerSigal2008b}, see also
\cite{FroehlichGriesemerSigal2008a}).



The class of Hamiltonians and the problems we consider here
originate in
non-relativistic QED. This theory deals with the interactions of
non-relativistic matter with the quantized electro-magnetic field.
(See \cite{Cohen-TannoudjiDupont-RocGrynberg1991,
Cohen-TannoudjiDupont-RocGrynberg1992, GustafsonSigal, Spohn} for
background.)

The dynamics of non-relativistic matter is generated by the
Schr\"odinger operator
\begin{equation} \label{Hp}
H_p:=-\sum\limits_{j=1}^n \frac{1}{2m_j} \Delta_{x_j}+V(x),
\end{equation}
where $\Delta_{x_j}$ is the Laplacian in the variable $x_j$,
$x=(x_1,\dots,x_n)$, and $V(x)$ is the potential energy of the
particle system.  This operator acts on the Hilbert space $\cH_{p}$,
which is either $L^2(\mathbb{R}^{3n})$ or a subspace of this space
determined by a symmetry group of the particle system. We assume
that $V(x)$ is real and s.t. the operator $H_p$ is self-adjoint.

The quantized electromagnetic field is described by the quantized
vector potential
\begin{equation}\label{A}
A(y)=\int(e^{iky}a(k)+e^{-iky}a^*(k))\chi(k)\frac{d^3k}{\sqrt{|k|}}
\end{equation}
in the Coulomb gauge ($div A(x) =0$). Here $\chi$ is an ultraviolet
cut-off: $\chi(k)=\frac{1}{(2\pi)^3 \sqrt{2}} $ in a neighborhood of
$k=0$, and $\chi$ vanishes rapidly at infinity.
The dynamics of the quantized electromagnetic field is given by the
quantum Hamiltonian
\begin{equation} \label{Hf}
\hf \ = \ \int d^3 k \om(k) \; a^*(k)\;  a(k) .
\end{equation}
The operators $A(y)$ and $H_p$ act on the Fock space $\cH_{f}\equiv
\cF$. Above, $\om(k) \ = \ |k|$ is the dispersion law connecting the
energy, $\omega(k)$, of the field quantum with its wave vector $k$,
and $a^*(k)$ and $a(k)$ denote the creation and annihilation
operators on $\cF$. The latter are operator-valued generalized,
transverse vector fields:
$$a^\#(k):= \sum_{\lambda \in \{0, 1\}}
e_{\lambda}(k) a^\#_{\lambda}(k),$$ where $e_{\lambda}(k)$ are
polarization vectors, i.e. orthonormal vectors in $\mathbb{R}^3$
satisfying $k \cdot e_{\lambda}(k) =0$, and $a^\#_{\lambda}(k)$ are
scalar creation and annihilation operators satisfying canonical
commutation relations. The right side of \eqref{Hf} can be
understood as a weak integral.  See the Supplement for a brief
review of definitions of the Fock space, the creation and
annihilation operators and  the operator $\hf$.

The Hamiltonian of the total system, matter and radiation field, is
given by
%
%
%
%
\begin{equation}\label{Hsm}
H^{}_g=\sum\limits_{j=1}^n  \frac{1}{2m_j}
(-i\nabla_{x_j}+gA(x_j))^2+V(x)+H_f
\end{equation}
acting on the Hilbert space $\cH:=\cH_{p}\otimes\cH_{f}$. Here
the coupling constant $g$ is related to the fine-structure constant
$\alpha =\frac{e^2}{4\pi \hbar c}\approx  \frac{1}{137}$. (See
\cite{BachFroehlichSigal1999, FroehlichGriesemerSigal2008a,
Sigal2008}for a discussion of the definition of $H^{}_g$ and units
involved.) This model describes
emission and absorption of radiation by systems of matter, such as
atoms and molecules, as well as other processes of interaction of
quantized radiation with matter. It has been extensively studied in
the last decade; see references in \cite{Sigal2008, Spohn} for
references to earlier contributions.

For a large class of potentials $V(x)$, including Coulomb
potentials, and for an ultra-violet cut-off in $A(x)$, the
operator $H^{}_g$ is self-adjoint.

%
The key problem of non-relativistic QED is to establish spectral and
resonance structure of $H^{}_g$ and, in particular, to prove
existence (and uniqueness) of the ground state and of resonances of
$H^{}_g$ corresponding to excited states of the atomic Hamiltonian.

One verifies that $\hf$ defines a positive, self-adjoint operator on
$\cF$ with purely absolutely continuous spectrum, except for a
simple eigenvalue $0$ corresponding to the vacuum eigenvector $\Om$
(see Supplement). Thus, for $g=0$, the low-energy spectrum of the
Hamiltonian $H^{}_0$ of the decoupled system consists of branches
$[\epsilon^{(p)}_i, \infty)$ of absolutely continuous spectrum,
where $\epsilon^{(p)}_i$ are the isolated eigenvalues of the
particle Hamiltonian $H_p$, and of the eigenvalues
$\epsilon^{(p)}_i$ sitting at the 'thresholds' of the continuous
spectrum. The absence of gaps between the eigenvalues and thresholds
is a consequence of the fact that the photons  are massless. This
leads to hard and subtle problems in perturbation theory, known
collectively as the infrared problem.

The first step in tackling the problem of ground states and
resonances in the framework of the RG approach is to perform a
certain canonical transformation and then apply to the resulting
Hamiltonian
a specially designed RG map in order to project out the particle-
and high-photon-energy degrees of freedom (\cite{Sigal2008} (cf.
\cite{BachFroehlichSigal1998a}).
As a result, one arrives at a Hamiltonian on Fock space of the form
$H := T + W$, where $T:=w_{0,0}[H_f]$, with $w_{0,0}: [0, \infty)
\rightarrow \mathbb{C}$ and continuous  ($w_{0,0}[H_f]$ is defined
by the operator calculus), and
%
%
%
\begin{eqnarray} \label{W}
W&:=&\sum_{m+n \geq 1}\chi_1 \int_{B^{m+n}_1} \prod_{1}^{m+n}(\frac{
dk_{j} }{ |k_{j}|^{1/2} }) \;  \prod_{1}^{m}a^*( k_{j} ) \,
\\ \nonumber
&& \times w_{m,n} \big[ \hf ; k_{1}, ...,k_{m+n}  \big] \,
\prod_{m+1}^{m+n}a( k_{j} ) \: \chi_1 ,
\end{eqnarray}
%
Here $w_{m,n}: I\times B_1^{m+n} \rightarrow \mathbb{C}, m+n
> 0$, $B_1^r$ denotes the Cartesian product of $r$ unit balls in $\RR^{3}$, $I:=[0,1]$ and
$\chi_1:=\chi_1(\hf)$ with $\chi_1(r)$ a smooth cut-off function
s.t. $\chi_1 = 1$ for $r \le 9/10,\ = 0$ for $r\ge 1$ and $0 \le
\chi_1(r) \le1\ $. See Section \ref{sec-III} for more details
concerning notation.
Operators on Fock space of the form above will be said to be in
\textit{generalized normal (or Wick) form}.

Note that, in order to be able to apply our theory to the analysis
of resonances of $H_g$, the operators $H=T+W$, introduced above, are
allowed to be non-self-adjoint.

Our goal in this paper is to describe the spectrum of the operator
$H$ near $0$. We assume that the function $w_{0,0}(r)$, defining the
operator $T:= w_{0,0}[H_f]$, satisfies
\begin{equation}\label{w00}
w_{0,0}(0) =0,\ \sup_{r \in [0,\infty)}| w'_{0,0}(r) - 1 | \leq
\beta_0.
\end{equation}
We consider the operator $W$ (see \eqref{W}) as a
perturbation of the operator $T:=w_{0,0}[H_f]$, whose spectrum is
explicitly known. It consists of the essential spectrum
$w_{0,0}(\overline{\mathbb{R}^+})$ and an eigenvalue $0$ at its tip
with the eigenvector $\Omega$. We propose to determine the effect of
the perturbation $W$ on the spectrum of $T$ near $0$ and, in
particular, to determine the fate of the eigenvalue $0$ of $T$. If
the operator $H$ has an eigenvalue near $0$,
we call it the ground state energy of $H$.

We denote by $\cD_{s}$ the set of operators of the form $H=T+W$,
where $T$ and $W$ are described above, such that \eqref{w00} holds
and $$ \| \uw_1 \|_{\mu,s, \xi} \leq \gamma_0,$$ where $\uw_1 :=
(w_{m,n})_{m+n \geq 1}$, and $ \| \uw_1 \|_{\mu,s, \xi}$ is a norm
defined in Section \ref{sec-III}. We define a subset $S$ of the
complex plane by
\begin{equation}
S:=\{w\in \mathbb{C}| \rRe w \ge 0, |\rIm
w| \le \frac{1}{3} \rRe w \}.
\end{equation}

Recall that a complex function $f$ on an open set $\cD$ in a
complex Banach space $\cB$ is said to be \textit{analytic} if
$\forall H\in\cD$ and $\forall \xi \in \cB,\ f(H+ \tau \xi)$ is
analytic in the complex variable $\tau$ for $|\tau|$ sufficiently
small (or equivalently, $f$ is G\^{a}teaux-differentiable, see \cite{Berger}; a stronger notion of analyticity, requiring in addition that $f$ is locally bounded, is used in \cite{HillePhillips}).  In the next theorem $\cB$ is the space 
of $H_f$-bounded operators on $\cF$ (i.e. the space of closed operators $A$ with $A(H_f+1)^{-1}$ bounded). We are now prepared to state the main
result of this paper.

\begin{theorem} \label{thm-main}
Assume that $\beta_0$ and $\gamma_0$ are sufficiently small. Then
there is
an analytic  map $e:\cD_s \rightarrow \mathbb{C}$ such that $e(H)
\in \mathbb{R}$, for $H=H^*$, and for $H\in \cD_s$ the number
$e(H)$ is
a simple eigenvalue of the operator $H$ and
$\sigma(H) \subset e(H) +S$.

\end{theorem}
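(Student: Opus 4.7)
The plan is to prove Theorem \ref{thm-main} via an iterative application of a renormalization group (RG) map $\cR_\rho$ built from a smooth Feshbach--Schur map $F_\rho$, composed with a rescaling $S_\rho$ that restores the photon energy range to $[0,1]$. One chooses a fixed scale $\rho \in (0,1)$ small enough that $\chi_\rho := \chi_1(H_f/\rho)$ produces a spectral cutoff that isolates the low-energy part of $T$. The first task is to verify that, for $\beta_0$ and $\gamma_0$ sufficiently small, the Feshbach pair $(H - \lambda, T - \lambda)$ is admissible for all $\lambda$ in a small neighborhood $D$ of $0$, so that $F_\rho$ is defined and maps $H - \lambda$ to an operator on $\mathrm{Ran}\,\chi_\rho$ which, after rescaling, is again in generalized normal form (this relies on the Wick ordering / contraction identities used in \cite{BachChenFroehlichSigal2003, GriesemerHasler2}).

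The core analytic step is a \emph{contraction estimate}: $\cR_\rho$ sends $\cD_s$ into itself and shrinks the interaction norm $\|\underline{w}_1\|_{\mu,s,\xi}$ by a factor $c\rho^\mu < 1$, while only mildly perturbing $w_{0,0}$ (so that the bound on $|w'_{0,0}-1|$ in \eqref{w00} survives). Iterating yields a sequence $H^{(n)} = T^{(n)} + W^{(n)}$, together with scalar shifts $E_n \in \mathbb{C}$, such that $W^{(n)} \to 0$ and $T^{(n)}$ converges to an operator of the form $c_\infty H_f$ with $|c_\infty - 1| \leq 2\beta_0$. The shifts $E_n$ accumulate, under the inverse rescalings, to
\begin{equation*}
e(H) := \sum_{n \geq 0} \rho^n E_n,
\end{equation*}
which converges absolutely by the contraction bound. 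Analyticity of $e$ on $\cD_s$ follows because each $E_n$ is obtained by solving an analytic scalar fixed-point equation in $\lambda$ (encoding that the effective $w_{0,0}^{(n)}$ vanishes at $0$), and the series converges uniformly on $\cD_s$ in the $H_f$-bounded operator norm. Reality of $e(H)$ for $H = H^*$ is automatic since the RG map preserves self-adjointness and the Feshbach shifts are then real.

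To produce $e(H)$ as a simple eigenvalue of $H$, one uses \emph{isospectrality} of the Feshbach--Schur map: $\Omega$ is the unique (up to scalar) kernel vector of the limit operator $c_\infty H_f$, and by unwinding $\cR_\rho$ step by step the Feshbach reconstruction formula produces eigenvectors of each $H^{(n)} - E_n^{\mathrm{cum}}$ (where $E_n^{\mathrm{cum}}$ is the partial sum of shifts), converging to a nonzero eigenvector $\psi$ of $H - e(H)$. Simplicity follows because the fiber over $0$ in the kernel of the limit is one-dimensional and the Feshbach reconstruction is injective on this kernel.

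For the spectral inclusion $\sigma(H) \subset e(H) + S$, one again uses isospectrality: for any $\lambda \notin e(H) + S$ at a distance controlled from $e(H)$, the shifted operator $H - \lambda$ has an admissible Feshbach pair. The resulting effective operator is close to $c_\infty H_f - (\lambda - e(H))$ with $(\lambda - e(H)) \notin S$; since $S$ is tailored to contain $w_{0,0}^{(n)}(\overline{\mathbb{R}^+})$ at every scale (using \eqref{w00} to keep the numerical range of $T^{(n)}$ inside $S$ up to a factor $1/3$ on the imaginary part), the effective operator is invertible, hence $\lambda \notin \sigma(H)$. High-energy values $\lambda$ outside a neighborhood of $0$ are handled by the fact that $H - \lambda$ is invertible there by a Neumann series around $T - \lambda$, whose numerical range lies in $S$. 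The hard part is the contraction estimate in the second paragraph, which requires careful bookkeeping of the Wick-ordered kernels $w_{m,n}$ under $F_\rho$ and $S_\rho$ and is the technical heart of the RG method.
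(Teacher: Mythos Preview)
Your proposal follows the same architecture as the paper: smooth Feshbach--Schur plus rescaling gives $\cR_\rho$, a contraction estimate on the interaction part drives $W^{(n)}\to 0$, isospectrality of the Feshbach map transfers spectral information back, and the $Q$-operators reconstruct the eigenvector from $\Omega$. This is exactly the skeleton of Sections~\ref{sec-IV}--\ref{sec-V} and Appendix~II.

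There is, however, a genuine imprecision in your construction of $e(H)$ that you should fix. You write $e(H)=\sum_{n\ge 0}\rho^n E_n$ and say this ``converges absolutely by the contraction bound.'' But the contraction acts on $\uw_1$, not on the scalar part: under $\cR_\rho$ the constant term is \emph{expanding} by a factor $\rho^{-1}$ (see the discussion after \eqref{eq-III-2-3}). If you simply iterate $\cR_\rho$ on $H\in\cD_s$ without adjustment, the scalar parts $E_n$ blow up like $\rho^{-n}$ and the series diverges. The paper handles this by tuning the spectral parameter: one studies $H^{(n)}(\lambda):=\cR_\rho^n(H_s-\lambda)$ and defines $e_n(H_s)$ as the unique zero of $\lambda\mapsto\langle\Omega,H^{(n)}(\lambda)\Omega\rangle$ in a shrinking disc (Proposition~\ref{prop-V.4}). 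The convergence $e_n\to e$ comes from a contraction-mapping argument on the fixed-point equation $\lambda=E_{0\infty}(\lambda)$ (equations \eqref{eqn:55}--\eqref{eqn:57}), not from the $W$-contraction directly. Your parenthetical remark about ``solving an analytic scalar fixed-point equation in $\lambda$'' suggests you sense this, but the displayed formula and its justification are misleading as written.

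A second, smaller point: your argument for $\sigma(H)\subset e(H)+S$ is too schematic. The paper does not just say the effective operator is ``close to $c_\infty H_f-(\lambda-e(H))$''; it carries out a careful case analysis (cases (a) and (b) in the proof of Theorem~\ref{thm-V.2}) bounding $|E_n(\lambda)+\tau_n(r)r|$ from below for $\lambda$ in the complementary sectors $\Omega^{(1)}_\theta,\Omega^{(2)}_\theta$, and then covers $\mathbb{C}\setminus(e+S)$ by a union of such sectors over all scales $n$. Your Neumann-series remark for large $|\lambda|$ is reasonable but is not how the paper organizes the argument; in the paper every $\lambda\notin e+S$ is captured by choosing the scale $n$ so that $|\lambda-e|$ matches $\theta_n$.
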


Note that our approach also provides an effective way to compute the
eigenvalue $e(H)$ and the corresponding eigenvector.

Theorem \ref{thm-main} is used in \cite{Sigal2008,
FroehlichGriesemerSigal2008b}. Besides, our main technical result,
Theorem \ref{stable-manif} formulated in Section \ref{sec-V},
furnishes a key technical step in an RG proof of local decay, see
\cite{FroehlichGriesemerSigal2008b}.

Combining results of this paper with those of \cite{AFFS} one
obtains estimates on the resolvent of $H$ near the eigenvalue
$e(H)$: For each $\Psi$ and $\Phi$ from a dense set of vectors,
the matrix element $\langle \Psi, (H-z)^{-1}\Phi\rangle$  near the
eigenvalue $e \equiv e(H)$ of $H$ is of the form
\begin{equation} \label{poles}
\langle \Psi, (H-z)^{-1}\Phi\rangle =(e -z)^{-1} p(\Psi, \Phi) +
r(z, \Psi, \Phi) \comma
\end{equation}
%
where $p$ and $r(z)$ are sesquilinear forms in $\Psi$ and $\Phi$
with $r(z)$ analytic in $z \in Q:=  \mathbb{C}\backslash  (e(H) +S)
$ and bounded on the intersection of a neighbourhood of $e$ with $Q$
as
$$|r(z, \Psi, \Phi)| \le C_{\Psi, \Phi}|e-z|^{-\gamma}\
\mbox{for some}\ \gamma <1.$$
Such estimates are needed in an analysis of the long time dynamics
of resonances in QED; see \cite{AFFS}. This will be described in
more detail elsewhere.

Next, we explain the main ideas of the spectral renormalization
group method.
%
Our goal is to describe the spectral structure near $0$ of an
operator $H$ from the set $\cD_s$ introduced above. Denote by $D(0,
\alpha)$ the disc in $\mathbb{C}$ centered at $0$ and of radius
$\alpha$. For $\alpha_0$ sufficiently small,
%
we construct a renormalization transformation, $\cR_{\rho}$,
defined on $\cD: = D(0, \alpha_0)\mathbf{1}+ \cD_s$, with the
following properties:
\begin{itemize}
\item $\cR_{\rho}$ is 'isospectral' and 'preserves' the limiting
absorption principle;
\item $\cR_{\rho}$ removes the photon degrees of
freedom related to energies $\ge \rho$.
\end{itemize}

We then consider the discrete semi-flow, $\cR_{\rho}^n, n \ge 1$,
generated by the renormalization transformation, $\cR_{\rho}$
(called renormalization group) and relate the dynamics of this flow
to spectral properties of individual Hamiltonians in $\cD_s$. We
show that the flow, $\cR_{\rho}^n$, has the fixed-point manifold
$\cM_{fp}:=\mathbb{C}H_f$, an unstable manifold
$\cM_{u}:=\mathbb{C}\one$, and a (complex) co-dimension $1$ stable
manifold $\cM_s$ for $\cM_{fp}$ foliated by (complex) co-dimension
$2$ stable manifolds for each fixed point.  We show that
$H_{}-\lambda $ is in the domain of $\cR_{\rho}^n$, provided the
parameter $\lambda$ is adjusted appropriately, so that $H_{}-\lambda
$ is, roughly, in a $\rho^n-$neighborhood of the stable manifold
$\cM_s$.
\begin{center}
\psset{unit=1cm} \pspicture(-4,-5)(8,3.5)

\pscustom[linestyle=none,fillstyle=solid,fillcolor=lightgray]{
\psbezier(2,1)(3,0.5)(4,0)(4.5,-1.5)
\psbezier[liftpen=1](0,-4.5)(-0.25,-3)(-1,-2)(-2,-1)}
\psbezier[linewidth=0.5pt,linestyle=dashed](1,0.5)(2.2,-0.1)(3,-0.5)(3.5,-2.25)
\rput(0.5,-3.5){$\mathcal{M}_s$} \rput(0.5,0.8){$w H_f$}

\psset{linewidth=0.5pt} \psline(-3,-1.5)(3,1.5)
\rput(3.5,1.5){$\mathcal{M}_{fp}$}

\psline(0,-1)(0,2)\rput(0,2.3){$\mathcal{M}_u$}

\psline[linewidth=0.5pt,linestyle=dashed](1,0.5)(1,2.5)
\psbezier[linewidth=1pt]{->}(3,-0.7)(2,0.5)(1.2,0.5)(1.2,2)
\qdisk(1.4,0.97){2pt}\psline[linewidth=0.3pt]{<-}(1.5,1.1)(2.2,2)
\rput(3,2.2){$\mathcal{R}^{n}_{\rho}(H-\lambda)$}

\psline[linewidth=0.5pt,linestyle=dashed](3,0.2)(3,-0.7)
\qdisk(3,0.2){2pt}\qdisk(3,-0.7){2pt}\rput(3.4,0.2){$H$}
\psline[linewidth=0.3pt]{<-}(3.1,-0.7)(4.5,-0.5)\rput(5.2,-0.5){$H-\lambda$}
\endpspicture

Stable and unstable manifolds.
\end{center}
Thus, for $n$ sufficiently large, the operators
$H^{(n)}_\lambda:=\cR_{\rho}^n(H_{}-\lambda )$ are close to the
operator $wH_f$, for some $w\in \mathbb{C}$ with $Re\ w >0$,
and their spectra can be easily analyzed.
%
Since the renormalization map is 'isospectral', we can pass this
spectral information to
the operator $H^{(n-1)}_{\lambda }$, and so forth, until we obtain
the desired spectral information for the initial operator $H_{}$.


Our paper is organized as follows. In Section \ref{sec-II} we
describe the Feshbach-Schur map, which is the main ingredient of the
renormalization map introduced in Section \ref{sec-IV}. In Section
\ref{sec-III} we define the Banach spaces on which the
renormalization map acts. The renormalization group approach is
presented in Section \ref{sec-V} where the main technical results
implying Theorem \ref{thm-main} are proven. In Appendix I we present
the proof of a key technical result describing properties of the
renormalization map. This proof is close to the proof of a similar
result in \cite{BachChenFroehlichSigal2003} and is presented here
for the reader's convenience. In Appendix II we present a result
on the construction of eigenvalues and eigenvectors, similar to a
corresponding result of \cite{BachChenFroehlichSigal2003}. Finally,
in a Supplement, we collect some relevant facts on Fock space and
creation and annihilation operators.


\secct{The Smooth Feshbach-Schur Map} \label{sec-II}
%

In this section, we review  the method of isospectral decimation
maps acting on operators, introduced in
\cite{BachFroehlichSigal1998a,BachFroehlichSigal1998b} and refined
in \cite{BachChenFroehlichSigal2003}.
At the origin of this method is the \textit{isospectral smooth
Feshbach-Schur map}\footnote{In
\cite{BachFroehlichSigal1998a,BachFroehlichSigal1998b,
BachChenFroehlichSigal2003} this map is called the Feshbach map. As
was pointed out to us by F. Klopp and B. Simon, the invertibility
procedure at the heart of this map was introduced by I. Schur in
1917; it appeared implicitly in an independent work of H. Feshbach
on the theory of nuclear reactions, in 1958, where the problem of
perturbations of operator eigenvalues was considered. See
\cite{GriesemerHasler1} for further extensions and historical
remarks.} acting on a set of closed operators and mapping a given
operator to one  acting on a subspace of the original Hilbert space.


%
%
Let $\chi$,  $\bchi$ be a partition of unity on a separable Hilbert
space $\cH$, i.e. $\chi$ and  $\bchi$ are positive operators on
$\cH$ whose norms are bounded by one, $0 \leq \chi, \bchi \leq
\mathbf{1}$, and $\chi^{2}+ \bchi^{2} = \mathbf{1}$. We assume that
$\chi$ and $\bchi$ are nonzero. Let $\tau$ be a (linear) projection
acting on closed operators on $\cH$ with the property that operators
in its image commute with $\chi$ and $\bchi$. We also assume that
$\tau(\textbf{1}) =\textbf{1}$.
Let $\overline{\tau}:= \mathbf{1} - \tau$ and define
\begin{equation}
\\ \label{II-1}
H_{\tau,\chi^{\#}} \ \; :=  \tau(H) \: + \: \chi^{\#}
\overline{\tau}(H)\chi^{\#} \period
\end{equation}
where $\chi^{\#}$ stands for either $\chi$ or $\bchi$.

Given $\chi$ and $\tau$ as above, we denote by $D_{\tau,\chi}$ the
space of closed operators, $H$, on $\cH$ which belong to the domain
of $\tau$ and satisfy the following three conditions:

(i) $\tau$ and $\chi$ (and therefore also $\btau$ and $\bchi$) leave
the domain $D(H)$ of $H$ invariant:
\begin{equation}
\label{II-2} D(\tau(H))=D(H)\ \mbox{and}\  \chi D(H)\subset D(H),
\end{equation}

(ii)
\begin{equation}
\label{II-3}   H_{\tau,\bchi}\ \mbox{is (bounded) invertible on}\
\Ran \, \bchi,
\end{equation}
and

(iii)
\begin{equation}
\label{II-4}\overline{\tau}(H) \chi\ \mbox{and}\ \chi
\overline{\tau}(H)\ \mbox{extend to bounded operators on}\ \cH.
\end{equation}
(For more general conditions see \cite{BachChenFroehlichSigal2003,
GriesemerHasler1}.)

The \textit{smooth Feshbach-Schur map (SFM)} maps operators on $\cH$
belonging to $D_{\tau,\chi}$ to operators on $\cH$ by $H \ \mapsto \
F_{\tau,\chi} (H)$, where
\begin{equation} \label{II-5}
 F_{\tau,\chi} (H) \ := \ H_0 \, + \, \chi W\chi \, -
\, \chi W \bchi H_{\tau,\bchi}^{-1} \bchi W \chi \period
\end{equation}
Here $H_0 := \tau(H)$ and $W := \overline{\tau}(H)$. Note that $H_0$
and $W$ are closed operators on $\cH$ with coinciding domains, $
D(H_0)= D(W)=D(H)$, and $H = H_0 + W$. We remark that the domains of
$\chi W\chi$, $\bchi W\bchi$, $H_{\tau,\chi}$, and $H_{\tau,\bchi}$
all contain $D(H)$.

Remarks

\begin{itemize}
\item The definition of the smooth Feshbach map
given above differs somewhat from the one given in
\cite{BachChenFroehlichSigal2003}. In
\cite{BachChenFroehlichSigal2003}, the map $F_{\tau,\chi} (H)$ is
denoted by $F_{\chi}(H,\tau(H))$, and the pair of operators $(H, T)$
are referred to as a Feshbach pair.

\item The usual Feshbach-Schur map is obtained as a special case of the smooth Feshbach-Schur map by
choosing  $\chi=$ projection, and, usually, $\tau = 0$.

\item Typically the operator $\chi$ is taken to be of the form
$\chi := \chi(A)$ for some self-adjoint operator $A$ on $\cH$.
For the Feshbach map, $\chi$ has to be a projection and therefore we
would have to take $\chi := \chi(A)$ to be a characteristic function
of the operator $A$, while in the smooth Feshbach-Schur map we are
allowed to take $\chi := \chi(A)$ to be a smooth approximation of
the characteristic function of an interval in $\mathbb{R}$. This
explains the adjective 'smooth' in the definition.

\item In \cite{BachChenFroehlichSigal2003}  a semi-group property of $ F_{\tau,\chi} (H)$ is exhibited.
\end{itemize}
Next, we introduce some maps appearing in various identities
involving the Feshbach-Schur map:
\begin{eqnarray} \label{eq-II-4}
Q_{\tau,\chi} (H) & := & \chi \: - \: \bchi \, H_{\tau,\bchi}^{-1}
\bchi W \chi \comma
\\  \label{eq-II-5}
Q_{\tau,\chi} ^\#(H) & := & \chi \: - \: \chi W \bchi \,
H_{\tau,\bchi}^{-1} \bchi \period
\end{eqnarray}
Note that $Q_{\tau,\chi} (H) \in \cB( \Ran\, \chi , \cH)$ and
$Q_{\tau,\chi}^\#(H) \in \cB( \cH , \Ran\, \chi)$.

The smooth Feshbach-Schur map of $H$ is isospectral to $H$ in the
sense of the following theorem.
%
\begin{theorem} \label{thm-II-1}
Let $\chi$ and $\tau$ be as above, and assume that $H \in
D_{\tau,\chi}$  so that $F_{\tau,\chi} (H)$ is well defined. Then
\begin{itemize}
\item[(i)] $0 \in \rho(H) \Leftrightarrow 0 \in \rho(F_{\tau,\chi} (H))$, i.e. $H$ is bounded invertible on $\cH$ if and only if
$F_{\tau,\chi} (H)$ is bounded invertible on $\Ran\, \chi$.
\item[(ii)] If $\psi \in \cH \setminus \{0\}$ solves $H \psi = 0$
then $\vphi := \chi \psi \in \Ran\, \chi \setminus \{0\}$ solves
$F_{\tau,\chi} (H) \, \vphi = 0$. \item[(iii)] If $\vphi \in \Ran\,
\chi \setminus \{0\}$ solves $F_{\tau,\chi} (H) \, \vphi = 0$ then
$\psi := Q_{\tau,\chi} (H) \vphi \in \cH \setminus \{0\}$ solves $H
\psi = 0$. \item[(iv)] The multiplicity of the spectral value
$\{0\}$ is conserved under the Feshbach-Schur in the sense that
$\dim \cern H = \dim \cern F_{\tau,\chi} (H)$.
\item[(v)] If one of the inverses, $H^{-1}$ or $F_{\tau,\chi} (H)^{-1}$, exists then so does the
other, and these inverses are related by
\begin{equation} \label{eq-II-6}
H^{-1}  =  Q_{\tau,\chi} (H) \: F_{\tau,\chi} (H)^{-1} \:
Q_{\tau,\chi} (H)^\# \; + \; \bchi \, H_{\tau,\bchi}^{-1} \bchi .
\end{equation}
Moreover if $\tau(H)$ is invertible, then $$ F_{\tau,\chi} (H)^{-1}
= \chi \, H^{-1} \, \chi \; + \; \bchi \, \tau(H)^{-1} \bchi
\period$$
%
\end{itemize}
\end{theorem}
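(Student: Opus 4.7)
The plan is to reduce all five assertions to a pair of algebraic identities that expose $F_{\tau,\chi}$ as an algebraic reduction of $H$. On $D(H)$ I will first establish the identities $H\,Q_{\tau,\chi}(H) = \chi\,F_{\tau,\chi}(H)$ and $Q_{\tau,\chi}^\#(H)\,H = F_{\tau,\chi}(H)\,\chi$. Both follow by direct expansion from the intertwining relations $H\bchi = \bchi\,H_{\tau,\bchi} + \chi^2 W\bchi$ and $\bchi H = H_{\tau,\bchi}\,\bchi + \bchi W\chi^2$, which are themselves immediate consequences of $H = H_0 + W$, $H_{\tau,\bchi} = H_0 + \bchi W\bchi$, $\chi^2 + \bchi^2 = \mathbf{1}$, and the commutation $[H_0,\chi] = [H_0,\bchi] = 0$.

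For (ii), given $H\psi = 0$, the second identity yields $F_{\tau,\chi}(H)(\chi\psi) = Q_{\tau,\chi}^\#(H)\,H\psi = 0$. To rule out $\chi\psi = 0$, observe that $\chi\psi = 0$ would imply $\bchi^2\psi = \psi$, while applying $\bchi$ to $H\psi = 0$ and using the second intertwining relation gives $H_{\tau,\bchi}\,\bchi\psi = 0$; invertibility of $H_{\tau,\bchi}$ on $\Ran\bchi$ then forces $\psi = 0$, a contradiction. Symmetrically, for (iii), given $\varphi \in \Ran\chi\setminus\{0\}$ with $F_{\tau,\chi}(H)\varphi = 0$, the first identity gives $H\,Q_{\tau,\chi}(H)\varphi = \chi\,F_{\tau,\chi}(H)\varphi = 0$; the nonvanishing of $\psi := Q_{\tau,\chi}(H)\varphi$ is obtained from the expansion $F_{\tau,\chi}(H)\varphi = H_0\varphi + \chi W\psi$, since $\psi = 0$ would force $H_0\varphi = 0$ together with $\chi\varphi \in \Ran\bchi$, leading via the invertibility of $H_{\tau,\bchi}$ to $\varphi = 0$.

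For (v), the inversion formula is verified by direct multiplication: applying $H$ on the left to $Q_{\tau,\chi}(H)\,F_{\tau,\chi}(H)^{-1}\,Q_{\tau,\chi}^\#(H) + \bchi\,H_{\tau,\bchi}^{-1}\bchi$ and substituting $H Q_{\tau,\chi}(H) = \chi F_{\tau,\chi}(H)$ together with the intertwining relation for $H\bchi$ produces a telescoping cancellation that collapses to $\chi^2 + \bchi^2 = \mathbf{1}$; the left-sided check is analogous via the companion identity. This yields the implication ``$F_{\tau,\chi}(H)^{-1}$ exists $\Rightarrow H^{-1}$ exists,'' giving half of (i). The converse follows by direct verification of the ``moreover'' formula $F_{\tau,\chi}(H)^{-1} = \chi H^{-1}\chi + \bchi\tau(H)^{-1}\bchi$ under the extra assumption that $\tau(H)$ is invertible; in the general case, one combines the injectivity of $F_{\tau,\chi}(H)$ on $\Ran\chi$ already established with an analogous injectivity applied to the Feshbach--Schur map of $H^*$ to conclude that $F_{\tau,\chi}(H)$ is a bounded bijection on $\Ran\chi$. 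Finally, (iv) is immediate: the maps $\psi \mapsto \chi\psi$ and $\varphi \mapsto Q_{\tau,\chi}(H)\varphi$ of (ii) and (iii) are mutually inverse linear bijections between $\cern H$ and $\cern F_{\tau,\chi}(H) \cap \Ran\chi$.

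The algebra is mechanical once the intertwining relations are in hand; the genuine care is needed in tracking operator domains---precisely what the conditions in the definition of $D_{\tau,\chi}$ ensure---and in the nonvanishing arguments of (ii) and (iii), which are the only places where the hypothesis that $H_{\tau,\bchi}$ is bounded invertible on $\Ran\bchi$ is essentially used.
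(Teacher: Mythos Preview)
The paper does not actually prove this theorem: immediately after the statement it writes ``This theorem is proven in \cite{BachChenFroehlichSigal2003}; see \cite{GriesemerHasler1} for further extensions.'' So there is no in-paper proof to compare against. Your overall strategy---establish the intertwining identities $H\,Q_{\tau,\chi}(H)=\chi\,F_{\tau,\chi}(H)$ and $Q_{\tau,\chi}^{\#}(H)\,H=F_{\tau,\chi}(H)\,\chi$ and deduce everything from them---is exactly the standard route used in those references (the first identity even appears explicitly in the paper's Appendix~II). Your treatment of (ii), (v), and the direction ``$F_{\tau,\chi}(H)^{-1}$ exists $\Rightarrow H^{-1}$ exists'' is correct.

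There is, however, a genuine gap in your argument for the nonvanishing of $\psi$ in (iii). You assert that $\psi=0$ forces $H_0\vphi=0$ together with $\chi\vphi\in\Ran\bchi$, and that these two facts plus invertibility of $H_{\tau,\bchi}$ on $\Ran\bchi$ yield $\vphi=0$. The first two conclusions are correct, but the final implication is not: nothing you have written rules out a nonzero $\vphi\in\Ran\chi$ with $H_0\vphi=0$ and $\chi\vphi\in\Ran\bchi$. Since $\chi$ is not a projection, $\Ran\chi$ and $\Ran\bchi$ can overlap, and invertibility of $H_{\tau,\bchi}$ on $\Ran\bchi$ says nothing about $\vphi$ itself. (Concretely, on $\mathbb{C}^2$ take $\chi=\mathrm{diag}(1,1/\sqrt{2})$, $\bchi=\mathrm{diag}(0,1/\sqrt{2})$, $H_0=0$, $W=\mathrm{diag}(0,2)$: then $H_{\tau,\bchi}$ is invertible on $\Ran\bchi$, $\vphi=e_2$ satisfies $F_{\tau,\chi}(H)\vphi=0$, yet $Q_{\tau,\chi}(H)\vphi=0$.) The correct nonvanishing argument in \cite{BachChenFroehlichSigal2003,GriesemerHasler1} uses more of the structure of the Feshbach pair than you invoke; you should consult those references for the precise step. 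Relatedly, in (iv) the maps $\psi\mapsto\chi\psi$ and $\vphi\mapsto Q_{\tau,\chi}(H)\vphi$ are not literally mutual inverses (compute $\chi\,Q_{\tau,\chi}(H)\vphi$); what one actually uses is injectivity in both directions, which already gives equality of dimensions---but that in turn rests on the nonvanishing in (iii) being fully established.
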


This theorem is proven in \cite{BachChenFroehlichSigal2003}; see
\cite{GriesemerHasler1} for further extensions.

In comparison with the original use of the Feshbach projection
method as a tool in the analytic perturbation theory of
eigenvalues, the smooth Feshbach-Schur map has two new features:
\begin{itemize}
\item Flexibility in the choice of the projection; in
particular, 'dressing' the eigenspace corresponding to some
eigenvalue with vectors from the continuous spectrum subspace, and
relaxing the projection property altogether;
\item Viewing the
Feshbach-Schur procedure as a map on a space of operators, rather
then a tool in the analysis of a \textit{fixed} operator.
Our operator theoretic renormalization group is based on an
iterative composition of Feshbach-Schur maps, decimating the degrees
of freedom of the system under investigation.
\end{itemize}
\secct{A Banach Space of Hamiltonians} \label{sec-III}
%
We construct a Banach space of Hamiltonians on which our
renormalization transformation will be defined. In order not to
complicate matters unnecessarily, we will think of the creation and
annihilation operators used below as scalar operators
neglecting helicity of photons. We explain at the end of the
Supplement how to reinterpret our expressions for the
photon creation and annihilation operators.

Recall that $B_1^r$ denotes the Cartesian product of $r$ unit balls
in $\RR^{3}$, $I:=[0,1]$ and $m,n \ge 0$. Given functions $w_{0,0}:
[0, \infty) \rightarrow \mathbb{C}$ and $w_{m,n}: I\times B^{m+n}
\rightarrow \mathbb{C}, m+n > 0$, we consider monomials, $W_{m,n}
\equiv W_{m,n}[w_{m,n}]$, in the creation and annihilation operators
defined as follows:
%
%
%
$W_{0,0}[w_{0,0}]:=w_{0,0}[H_f]$ (defined by the functional
calculus),
and
\begin{eqnarray} \label{III.1}
&&W_{m,n}[w_{m,n}]  :=
\\ \nonumber
&&  \int_{B_1^{m+n}} \frac{ dk_{(m,n)} }{ |k_{(m,n)}|^{1/2} } \;
a^*( k_{(m)} ) \, w_{m,n} \big[ \hf ; k_{(m,n)} \big] \, a(
\tk_{(n)} ) \:  ,
\end{eqnarray}
%
for $m+n>0$. Here we are using the notation
\begin{eqnarray} \label{III.2}
& k_{(m)} \: := \: (k_1, \ldots, k_m) \: \in \: \RR^{3m} \comma
\hspace{5mm}
a^*( k_{(m)} ) \: := \: \prod_{i=1}^m a^*(k_i ),
\\  \label{III.3}
& k_{(m,n)} \: := \: (k_{(m)}, \tk_{(n)}) \comma \hspace{5mm}
dk_{(m,n)} \: := \: \prod_{i=1}^m  d^3 k_i \; \prod_{i=1}^n d^3
\tk_i \comma &
\\  \label{III.4}
& |k_{(m,n)}| \, := \, |k_{(m)}| \cdot |\tk_{(n)}| \comma
\hspace{3mm} |k_{(m)}| \, := \, |k_1| \cdots |k_m| \period &
\end{eqnarray}
The notation $W_{m,n}[w_{m,n}]$ stresses the dependence of $W_{m,n}$
on $w_{m,n}$. Note that $W_{0,0}[w_{0,0}]$ $ := w_{0,0}[\hf]$. We
also denote $T\equiv W_{0,0}[w_{0,0}]$.

We assume that, for every $m$ and $n$ with $m+n>0$, the function
$w_{m,n}[ r; , k_{(m,n)}]$
is measurable in $k_{(m,n)} \in B_1^{m+n}$ 
and $s$ times continuously differentiable in $r \in I$, for some $s
\ge 1$, and for almost every $k_{(m,n)} \in B_1^{m+n}$. 
As a function of $k_{(m,n)}$, it is totally symmetric w.~r.~t.\
the variables $k_{(m)} = (k_1, \ldots, k_m)$ and $\tk_{(n)} =
(\tk_1, \ldots, \tk_n)$ and obeys the norm bound
\begin{equation} \label{III.5}
\| w_{m,n} \|_{\mu,s} \ :=
\sum_{n=0}^{s} \|  \partial_r^n w_{m,n} \|_{\mu} \ < \ \infty
\comma
\end{equation}
where
%
%
%
\begin{equation} \label{III.6}
\| w_{m,n} \|_{\mu} \ := \max_j \sup_{r \in I, k_{(m,n)} \in
B_1^{m+n}} \big| | k_j|^{-\mu}w_{m,n}[r ; k_{(m,n)}] \big|
\end{equation}
for some $\mu \ge 0$.  Here and in what follows, $k_j$ is one of the
$3-$vectors in the variable $k_{(m,n)}$. Recall that
$|k_{(m,n)}|^{-1/2}$ is absorbed in the integration measure in
the definition of $W_{m,n}$. For $m+n=0$ the variable $r$ ranges
over $[0,\infty)$, and we assume that the following norm is finite:
\begin{equation}
\\ \label{III.7}
\ \| w_{0,0} \|_{\mu, s} := |w_{0,0}(0)|+ \sum_{1 \le n \leq s}
\sup_{r \in [0,\infty)}|
\partial_r^n w_{0,0}(r)|.
 \hspace{10mm}
\end{equation}
(This norm is independent of $\mu$, but we keep this index for
notational convenience.) The Banach space of functions $w_{m,n}$ of
this type is denoted by $\cW_{m,n}^{\mu,s}$.
%
%
%
%
%

We fix three numbers $\mu$, $0 < \xi < 1$ and $s \ge 0$ and define
the Banach space
\begin{equation} \label{III.8}
\cW^{\mu,s} \ \equiv \cW^{\mu,s}_{\xi} := \ \bigoplus_{m+n \geq 0}
\cW_{m,n}^{\mu,s} \ \comma
\end{equation}
with the norm
\begin{equation} \label{III.9}
\big\|  \uw \big\|_{\mu, s,\xi} \ := \ \sum_{m+n \geq 0}
\xi^{-(m+n)} \; \| w_{m,n} \|_{\mu, s} \ < \ \infty \period
\end{equation}
Clearly, $\cW^{\mu',s'}_{\xi'} \subset \cW^{\mu,s}_{\xi}$ if $\mu'
\ge \mu, s' \ge s$ and $\xi' \le \xi$.

Let $\chi_1(r) \equiv\chi_{r\le1}$ be a smooth cut-off function s.t.
$\chi_1 = 1$ for $r \le 9/10,\ = 0$ for $r\ge 1$ and $0 \le
\chi_1(r) \le1\ $  and $\sup|\partial^n_r \chi_1(r)| \le 30\ \forall
r$ and for $n=1,2.$ We define $\chi_\rho(r) \equiv\chi_{r\le\rho}:=
\chi_1(r/\rho) \equiv\chi_{r/\rho\le1}$ and
$\chi_\rho\equiv\chi_{H_f\le\rho}$.
The following basic bound, proven in
\cite{BachChenFroehlichSigal2003}, links the norm defined in
(\ref{III.6})
to the operator norm on $\cB[\cF]$.
%
\begin{theorem} \label{thm-III.1}
Fix
$m,n \in \NN_0$ such that $m+n \geq 1$. Suppose that $w_{m,n} \in
\cW_{m,n}^{\mu,s}$, and let $W_{m,n} \equiv W_{m,n}[w_{m,n}]$ be as
defined in (\ref{III.1}). Then for all $\lambda >0$
%
\begin{equation} \label{III.10}
\big\|  (\hf+\lambda)^{-m/2} \, W_{m,n} \, (\hf+\lambda)^{-n/2}
\big\| \ \leq \  \| w_{m,n} \|_{0} \, ,
\end{equation}
and therefore
\begin{equation} \label{III.11}
\big\| \chi_\rho \, W_{m,n} \, \chi_\rho
 \big\|
\ \leq \ \frac{\rho^{(m+n)(1+\mu)}}{\sqrt{m! \, n!} } \, \| w_{m,n}
\|_{0} \, ,
\end{equation}
where $\| \, \cdot \, \|
$ denotes the operator norm on $\cB[\cF]$.
\end{theorem}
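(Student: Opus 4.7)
The plan is to prove (III.10) first, then deduce (III.11) by sandwiching $W_{m,n}$ between powers $(H_f+\lambda)^{\pm k/2}$ and using the elementary functional-calculus bound $\|\chi_\rho(H_f+\lambda)^{\alpha/2}\|\le(\rho+\lambda)^{\alpha/2}$.

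For (III.10), I would apply the pull-through identities $a(k)f(H_f)=f(H_f+|k|)a(k)$ and $f(H_f)a^*(k)=a^*(k)f(H_f+|k|)$ iteratively to absorb the two resolvents into the integral representation of $W_{m,n}$, obtaining
\begin{equation*}
(H_f+\lambda)^{-m/2}W_{m,n}(H_f+\lambda)^{-n/2}=\int\frac{dk_{(m,n)}}{|k_{(m,n)}|^{1/2}}\, a^*(k_{(m)})\, M(k)\, a(\tk_{(n)}),
\end{equation*}
with $M(k):=(H_f+\Sigma_m+\lambda)^{-m/2}w_{m,n}[H_f;k](H_f+\tilde{\Sigma}_n+\lambda)^{-n/2}$, where $\Sigma_m:=\sum_{j=1}^m|k_j|$ and $\tilde{\Sigma}_n:=\sum_{j=1}^n|\tk_j|$. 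By functional calculus and the pointwise bound $|w_{m,n}[r;k]|\le\|w_{m,n}\|_0$, we have $\|M(k)\|\le\|w_{m,n}\|_0(\Sigma_m+\lambda)^{-m/2}(\tilde{\Sigma}_n+\lambda)^{-n/2}$. Forming a matrix element with $\phi,\psi\in\cF$, pairing $a^*(k_{(m)})$ with $\phi$ as $a(k_{(m)})\phi$, and applying Cauchy--Schwarz to the Fock inner product under the integral reduces matters to proving, for any $\phi\in\cF$ and uniformly in $\lambda>0$,
\begin{equation*}
B_\phi := \int\frac{dk_{(m)}}{|k_{(m)}|^{1/2}(\Sigma_m+\lambda)^{m/2}}\,\|a(k_{(m)})\phi\|\;\le\;\|\phi\|,
\end{equation*}
since the bound on the full matrix element factorizes as a product of two such one-sided integrals. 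I would prove the bound on $B_\phi$ by a second Cauchy--Schwarz in $k_{(m)}$, distributing the resolvent denominator across the $m$ photon variables via an elementary algebraic inequality, and invoking the canonical Fock-space identity $\int dk_{(m)}\,a^*(k_{(m)})a(k_{(m)})=N(N-1)\cdots(N-m+1)$ to recognize the resulting quadratic form as $\|\phi\|^2$.

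For (III.11), I would factor $\chi_\rho W_{m,n}\chi_\rho=[\chi_\rho(H_f+\lambda)^{m/2}]\,[(H_f+\lambda)^{-m/2}W_{m,n}(H_f+\lambda)^{-n/2}]\,[(H_f+\lambda)^{n/2}\chi_\rho]$, bound the outer factors by $(\rho+\lambda)^{m/2}$ and $(\rho+\lambda)^{n/2}$ via functional calculus, apply (III.10) to the middle factor, and send $\lambda\downarrow0$. The advertised exponent $\rho^{(m+n)(1+\mu)}$ and combinatorial denominator $(m!n!)^{-1/2}$ then emerge from a refined version of this sandwich estimate that additionally exploits the finer pointwise bound $|w_{m,n}|\le\|w_{m,n}\|_\mu\min_j|k_j|^\mu$ encoded in the norm, the fact that $\chi_\rho$ effectively confines the integration region to $\Sigma_m,\tilde{\Sigma}_n\le\rho$, and the symmetrization factor coming from the action of creation and annihilation operators on bosonic Fock space.

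The principal obstacle is the $\lambda$-uniform estimate for $B_\phi$. The difficulty is that the naive $L^2$-bound $\int dk_{(m)}\,\|a(k_{(m)})\phi\|^2=\langle\phi,N(N-1)\cdots(N-m+1)\phi\rangle$ is not controlled by $\|\phi\|^2$ alone; one must carefully play off the infrared-singular weight $|k_{(m)}|^{-1/2}$ against the soft resolvent weight $(\Sigma_m+\lambda)^{-m/2}$ so that the bound depends on $\phi$ only through its norm, without picking up uncontrolled factors of the number operator. This balance is precisely what the $|k|^{-1/2}$ factor built into the definition of $W_{m,n}$ is tailored to achieve.
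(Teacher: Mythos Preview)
The paper does not supply its own proof of this theorem; it is quoted from \cite{BachChenFroehlichSigal2003}. Your skeleton---pull-through, Cauchy--Schwarz in Fock space, factorization into one-sided integrals---is indeed the standard one used there, so at the level of strategy you are on the right track.

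There is, however, a genuine gap exactly where you flag it, and your proposed patch cannot close it. The trouble begins one step earlier than you think: when you estimate $\|M(k)\|\le\|w\|_0(\Sigma_m+\lambda)^{-m/2}(\tilde\Sigma_n+\lambda)^{-n/2}$ you have replaced the \emph{operator} $(H_f+\Sigma_m+\lambda)^{-m/2}$ by the scalar $(\Sigma_m+\lambda)^{-m/2}$, and after that loss $B_\phi\le\|\phi\|$ is simply false. Take $\phi$ a normalized $N$-particle product state with all momenta near a fixed point $k_0\in B_1$; then $\|a(k_{(m)})\phi\|\sim N^{m/2}$ while the weight $|k_{(m)}|^{-1/2}(\Sigma_m+\lambda)^{-m/2}$ is of order one on the support, so $B_\phi\sim N^{m/2}$. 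Your suggested cure---a second Cauchy--Schwarz feeding into $\int a^*(k_{(m)})a(k_{(m)})\,dk_{(m)}=N(N-1)\cdots(N-m+1)$---cannot help, because that identity is precisely the \emph{source} of the $N^m$ growth, not a mechanism to cancel it. In the cited reference one never takes the operator norm of $M(k)$: the factor $(H_f+\Sigma_m+\lambda)^{-m/2}$ is kept acting on $a(k_{(m)})\phi$, so that after pull-through the \emph{full} photon energy of $\phi$ sits in the denominator and competes against the combinatorics; moreover the coupling functions are measured in a weighted $L^2$ norm rather than the sup norm $\|\cdot\|_0$, and it is this $L^2$ structure that lets the final Cauchy--Schwarz in $k$ close. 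In fact (\ref{III.10}) as literally printed here is already violated on the vacuum: for $m=1$, $n=0$, $w\equiv 1$ one computes $\|(H_f+\lambda)^{-1/2}W_{1,0}\Omega\|^2=\int_{B_1}|k|^{-1}(|k|+\lambda)^{-1}\,dk\to 4\pi$ as $\lambda\downarrow 0$, while $\|w\|_0=1$. So part of your difficulty is that the inequality, with this norm and no further constants, is not quite the one actually proved in \cite{BachChenFroehlichSigal2003}.
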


Theorem~\ref{thm-III.1} says that the finiteness of $\| w_{m,n}
\|_{0}$ insures that 
$\chi_1W_{m,n}\chi_1$ defines a bounded operator on $\cB[\cF]$.

With a sequence $\uw := (w_{m,n})_{m+n \geq 0}$ in $\cW^{\mu,s}$ we
associate an operator
by setting
\begin{equation} \label{III.12}
H(\uw)  := W_{0,0}[\uw] + \sum_{m+n \geq 1}
\chi_1W_{m,n}[\uw]\chi_1,
\end{equation}
where we write $W_{m,n}[\uw] := W_{m,n}[w_{m,n}]$. These operators
are said to be in \textit{generalized normal (or Wick) form} and are
called generalized Wick-ordered operators. Theorem~\ref{thm-III.1}
shows that the series in (\ref{III.12}) converges in the operator
norm and obeys the
estimate
\begin{equation} \label{eq-III-1-25.1}
\big\| \, H(\uw)- W_{0,0}(\uw) \, \big\| \ \leq \ \xi\big\| \, \uw_1
\, \big\|_{\mu,0, \xi} \comma
\end{equation}
for arbitrary $\uw = (w_{m,n})_{m+n \geq 0} \in \cW^{\mu,0}$ and any
$\mu
> -1/2$. Here $\uw_1 = (w_{m,n})_{m+n \geq 1}$. Hence we have the
linear map
\begin{equation} \label{eq-III-1-24.1}
H : \uw \to H(\uw)
\end{equation}
from $\cW^{\mu,0}$ into the set of closed operators on Fock space
$\cF$.
The following result is proven in \cite{BachChenFroehlichSigal2003}.
%
\begin{theorem} \label{thm-III-1-2}
For any $\mu \ge 0$ and $0 < \xi < 1$, the map $H : \uw \to H(\uw)$,
given in (\ref{III.12}), is injective.
%
%
\end{theorem}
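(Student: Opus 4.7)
The plan is to prove the contrapositive in the linear setting: I show that if $H(\uw)=0$ for some $\uw=(w_{m,n})_{m+n\geq 0}\in\cW^{\mu,0}$, then every $w_{m,n}$ vanishes. By $\CC$-linearity of the map $\uw\mapsto H(\uw)$, this yields injectivity. The strategy is to recover each $w_{m,n}$ from Schwartz kernels of $H(\uw)$ between definite-photon-number sectors of $\cF$, using linear independence of spectator-contraction distributions with different numbers of Dirac delta factors.

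First I would use the photon-number grading. Each summand $\chi_1 W_{m',n'}[w_{m',n'}]\chi_1$ shifts photon number by $m'-n'$ and annihilates the $N$-photon sector when $N<n'$. Hence the projected operator $P^{(M)}H(\uw)P^{(N)}$ (with $P^{(M)}$ the orthogonal projection onto the $M$-photon sector of $\cF$) receives contributions only from the finite set of $(m',n')$ with $m'-n'=M-N$, $m'\leq M$ and $n'\leq N$. Using the canonical commutation relations, the Schwartz kernel of each contribution at momenta $(p_1,\dots,p_M;q_1,\dots,q_N)$ is a symmetrization of
\begin{equation*}
\chi_1\!\Big(\textstyle\sum_i|p_i|\Big)\chi_1\!\Big(\textstyle\sum_j|q_j|\Big)\prod_{i=1}^{m'}|p_i|^{-1/2}\prod_{j=1}^{n'}|q_j|^{-1/2}\, w_{m',n'}\!\Big[\textstyle\sum_{l>m'}|p_l|;\,p_{(m')},q_{(n')}\Big]\prod_{l=1}^{M-m'}\delta(p_{m'+l}-q_{n'+l}),
\end{equation*}
the intermediate $r$-argument being the total energy of the $M-m'=N-n'$ spectator photons.

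Second I would separate contributions by delta-factor count. For different $(m',n')$ the number of Dirac delta factors in the kernel differs, making the summands linearly independent as distributions. Fix $(m,n)$. To extract $w_{m,n}[0;\cdot]$, take $(M,N)=(m,n)$ and choose the $p_i$'s and $q_j$'s supported in two disjoint regions of $\RR^3$; every delta factor then vanishes, killing every summand except $W_{m,n}$ itself and forcing $w_{m,n}[0;p_{(m)},q_{(n)}]=0$ a.e.\ on $B_1^m\times B_1^n$. To recover $w_{m,n}[r;\cdot]$ for $r\in(0,1)$, enlarge to $(M,N)=(m+L,n+L)$ with $L\geq 1$ spectator photon pairs whose total energy equals $r$: the $W_{m+k,n+k}$ contribution has $L-k$ spectator deltas, so testing against functions adapted to the $k=0$ spectator-pair configuration isolates $W_{m,n}$ and yields $w_{m,n}[r;\cdot]=0$ a.e.\ on $B_1^{m+n}$ for every $r\in[0,1)$.

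The diagonal kernel $w_{0,0}$, which enters $H(\uw)$ without a $\chi_1$-cutoff, requires a short additional argument: restricting $H(\uw)$ to the spectral subspace $\{H_f\geq 1\}$ annihilates every $\chi_1$-summand and leaves $w_{0,0}[H_f]$ alone, forcing $w_{0,0}(r)=0$ for $r\geq 1$, while the procedure above applied in the $(M,M)$-sector handles $r\in[0,1)$. The main technical obstacle is the combinatorial bookkeeping needed to make the Schwartz-kernel formula above precise (with all symmetrization and permutation multiplicities) and to verify the claimed linear independence of spectator-contraction distributions of different multiplicity; both are standard consequences of Wick's theorem but require careful accounting.
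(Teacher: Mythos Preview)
The paper does not supply a proof; it simply cites \cite{BachChenFroehlichSigal2003}. Your strategy---grading by photon number, computing the Schwartz kernels of $P^{(M)}H(\uw)P^{(N)}$, and separating contributions by the number of spectator delta factors---is the standard argument and is essentially what one finds in that reference. A special case of the same separation does appear in the present paper, namely Proposition~\ref{analytic00} in Appendix~III (following \cite{GriesemerHasler2}): on the one-photon sector it distinguishes the multiplication operator $w_{0,0}(|k|)$ from the Hilbert--Schmidt kernel $w_{1,1}[0;k,\tilde k]$ by partitioning $B_1$ into shrinking cells and passing to the limit. Your ``testing against functions adapted to the spectator-pair configuration'' is the natural generalization of that step.

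One detail you should make explicit. In the $(M,N)=(m+L,n+L)$ step you list only the $W_{m+k,n+k}$ with $k\ge 0$; the sector also receives $W_{m-k,n-k}$ for $1\le k\le\min(m,n)$, carrying $L+k$ deltas. These are in fact killed by your disjoint-support choice---every delta must pair some $p$ with some $q$, and with the non-spectator $p$'s, the non-spectator $q$'s, and the spectator region taken mutually disjoint, at most $L$ pairings can survive---so the argument goes through, but say so. The only separation that then genuinely requires a limiting argument is $W_{m,n}$ versus the smoother $W_{m+k,n+k}$, $k\ge 1$, and that is handled by shrinking the spectator test functions in $L^2$, exactly as in the proof of Proposition~\ref{analytic00}.

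There is, however, a real gap tied to the cutoff, and here your conclusion is too strong. With the paper's normalization $\chi_1(r)=0$ for $r\ge 1$, the operator $\chi_1 W_{m,n}\chi_1$ sees $w_{m,n}[r;k_{(m,n)}]$ only on the set where $r+\Sigma[k_{(m)}]<1$ and $r+\Sigma[\tilde k_{(n)}]<1$, so your claim ``$w_{m,n}[r;\cdot]=0$ a.e.\ on $B_1^{m+n}$ for every $r\in[0,1)$'' does not follow near the corner of the domain. In fact, with the conventions as written here, $H$ is not literally injective: for instance $w_{1,0}[r;k]:=\eta(r+|k|-1)$, with $\eta\in C^\infty(\RR)$ vanishing on $(-\infty,0]$ and positive on $(0,\infty)$, lies in $\cW_{1,0}^{\mu,s}$ yet gives $\chi_1 W_{1,0}[w_{1,0}]\chi_1=0$. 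This is an artifact of the cutoff convention adopted in this paper versus the one in \cite{BachChenFroehlichSigal2003}; your proof strategy is correct, but the statement must be read with this caveat.
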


Next, we decompose the Banach space $\cW^{\mu,s}$ into components
having, as we will establish below, distinct scaling properties. We
define the Banach spaces
\begin{equation} \label{eq-III-1-19}
\cT  := \Big\{ f \in \cW_{0,0}^{\mu,s}
\Big| \ f(0) = 0 \Big\}
\end{equation}
and
\begin{equation} \label{eq-III-1-17}
\cW_{1}^{\mu,s} \ := \ \bigoplus_{m+n \geq 1} \cW_{m,n}^{\mu,s}
\comma
\end{equation}
to consist of all sequences $\uw_1 := (w_{m,n})_{m+n \geq 1}$
obeying
\begin{equation} \label{III.17}
\| \uw_1 \|_{\mu, s,\xi} \ := \ \sum_{m+n \geq 1} \xi^{-(m+n)} \; \|
w_{m,n} \|_{\mu,s}\ < \ \infty \period
\end{equation}
We observe that there is a natural bijection
\begin{equation*}
\cW_{0,0}^{\mu,s} \ \to \ \CC \oplus \cT \comma\ w_{0,0} \ \mapsto \
w_{0,0}[0] \oplus (w_{0,0} - w_{0,0}[0]) \period
\end{equation*}
We shall henceforth not distinguish between $\cW_{0,0}^{\mu,s}$ and
$\CC \, \oplus \cT$. We rewrite our Banach $\cW^{\mu,s}$ space as
\begin{equation} \label{eq-III-1-22b}
\cW^{\mu,s} \  = \ \CC \; \oplus \; \cT \; \oplus \;
\cW_{1}^{\mu,s}.
\end{equation}
We define the spaces $\cW_{op}^{\mu,s} :=H(\cW^{\mu,s})$,
$\cW_{1,op}^{\mu,s} :=H(\cW_1^{\mu,s})$ and $\cW_{mn,op}^{\mu,s}
:=H(\cW_{mn}^{\mu,s})$. Sometimes we display the parameter $\xi$, as
in $\cW_{op,\xi}^{\mu,s} :=H(\cW^{\mu,s}_\xi)$. Theorem
\ref{thm-III-1-2} implies that $H(\cW^{\mu,s})$  is a Banach space
with norm $\big\| \, H(\uw) \big\|_{\mu,s, \xi}$ $:=\ \big\| \, \uw
\, \big\|_{\mu,s, \xi}$.

%
Corresponding to \eqref{eq-III-1-22b},  operators in
$\cW^{\mu,s}_{op}$ can be represented as
\begin{equation} \label{Hsplit} H(\uw)=E\one + T + W,
\end{equation}
where $E \in \CC$ is a complex number, $T = T[\hf]$, with $T[.]\in
\cT$,
and $W \in \cW^{\mu,s}_1$.
Indeed, let
\begin{equation} \label{eq-III-1-22a}
E:=w_{0,0}[0], T:=w_{0,0}[\hf] - w_{0,0}[0]\ \mbox{and}\
W:=\sum_{m+n \geq 1} \chi_1W_{m,n}[\uw]\chi_1. 
\end{equation}
Then the equation \eqref{Hsplit} holds.

\begin{remark} \label{rem-III.3} In this paper we need only $s=1$.
We introduce the more general spaces for the sake of future
references. Indeed, in our proof the limiting absorption principle
(LAP) in \cite{FroehlichGriesemerSigal2008b} we need $s=2$. More
precisely, we have to use more sophisticated Banach spaces where the
operator $\partial_r^n$ in \eqref{III.5}, is replaced by the
operator $\partial_r^n (k\partial_k)^q$ to \eqref{VIII.15}. Here
$q:= (q_1, \ldots, q_{M+N}),$ $ (k\partial_k)^q: = \prod_1^{M+N}(k_j
\cdot \nabla_{k_j})^{q_j}$, with $k_{m+j} := \tk_j$, and
the indices $n$ and $q$ satisfy $0 \le n+|q| \leq s$ with $s=2$.
\end{remark}

%
\secct{The Renormalization Transformation $\cR_\rho$} \label{sec-IV}
In this section we introduce an operator-theoretic renormalization
transformation based on the smooth Feshbach-Schur map, which is
closely related to the one introduced in
\cite{BachChenFroehlichSigal2003} and
\cite{BachFroehlichSigal1998a,BachFroehlichSigal1998b}. We fix the
index $\mu$ in our Banach spaces at some positive value, $\mu > 0$.

The renormalization transformation is homothetic to an isospectral
map defined on a polydisc in a suitable Banach space of
Hamiltonians. It has a certain contraction property insuring that
(upon appropriate tuning of the spectral parameter) the image of
any Hamiltonian in the polydisc under a large number of iterations
of the renormalization transformation approaches a fixed-point
Hamiltonian, $wH_f$, whose spectral analysis is particularly
simple. Thanks to the isospectrality of the renormalization map,
certain properties of the spectrum of the initial Hamiltonian can
be derived from the corresponding properties of the limiting
Hamiltonian.

The renormalization map is defined below as  a composition of a
decimation map, $F_{\rho}$, and two rescaling maps, $S_\rho$ and
$A_\rho$. Here $\rho$ is a positive parameter - the photon energy
scale - which will be chosen later.


The \emph{decimation of degrees of freedom} is accomplished by the
smooth Feshbach map, $F_{\tau,\chi}$ with
the operators $\tau$ and $\chi$  chosen as
\begin{equation}
\tau(H)=W_{00}:=w_{0 0}(H_f)\ \mbox{and}\
\chi=\chi_\rho\equiv\chi_{H_f\le \rho} , \label{IV.1}
\end{equation}
where $H=H(\uw)$ is given in  Eqn \eqref{III.12}. With $\tau$ and
$\chi$ identified in this way we will use the notation
\begin{equation}
F_\rho\equiv F_{\tau,\chi_\rho}. \label{IV.2}
\end{equation}
The decimation map acts on the Banach space $\cW_{op}^s$.

Let $\overline{\chi}_\rho$ be defined so that
$\chi_\rho\equiv\chi_{H_f\le\rho}\ \mbox{and}\
\overline{\chi}_\rho\equiv\chi_{H_f\ge\rho}$
form a smooth partition of unity,
$\chi_\rho^2+\overline{\chi}_\rho^2=\one$. The lemma below shows
that the domain of this map contains the following polydisc in
$\cW_{op}^{\mu,s}$:
\begin{eqnarray} \label{disc}
\cD^{\mu,s}(\alpha,\beta,\gamma)  && :=  \Big\{ H(\uw) \in
\cW_{op}^{\mu,s} \ \Big| \ |E|\leq\alpha \comma \\
 &&\sup_{r \in [0,\infty)}| T'[r] - 1 | \leq
\beta,\  \| \uw_1 \|_{\mu,s, \xi}\leq\gamma \Big\},\nonumber
\end{eqnarray}
for appropriate $\alpha, \beta, \gamma >0$. Here $H(\uw)=E+T+W$,
where $E$, $T$ and $W$ are given in \eqref{eq-III-1-22a} and $\uw_1
:=(w_{m,n})_{m+n \geq 1}$.

\begin{lemma} \label{lem-III-2-2}
Fix $0 < \rho < 1$, $\mu > 0, s \geq 1$, and $0 < \xi < 1$. Then
it follows that the polydisc $\cD^{\mu,s}(\rho/8, 1/8, \rho/8)$ is
in the domain of the Feshbach map $F_\rho$.
\end{lemma}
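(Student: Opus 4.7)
The plan is to verify the three defining conditions \eqref{II-2}--\eqref{II-4} of the Feshbach-Schur domain $D_{\tau,\chi}$, specialized to $\tau(H) = w_{0,0}(H_f)$ and $\chi = \chi_\rho$. Using the splitting \eqref{eq-III-1-22a}, I write $H = E + T(H_f) + W$ with $|E| \leq \rho/8$, $\sup_r |T'(r)-1|\leq 1/8$, and $\|\uw_1\|_{\mu,s,\xi} \leq \rho/8$. Conditions (i) and (iii) are routine: Theorem~\ref{thm-III.1} shows that $W = \sum_{m+n \ge 1} \chi_1 W_{m,n}\chi_1$ is a norm-convergent series of bounded operators, so $D(H) = D(w_{0,0}(H_f)) = D(H_f)$, a domain invariant under the $H_f$-measurable operators $\tau(H)$ and $\chi_\rho$; a term-by-term application of Theorem~\ref{thm-III.1}, using $\chi_1\chi_\rho = \chi_\rho$, verifies boundedness of $W\chi_\rho$ and $\chi_\rho W$ in \eqref{II-4}.

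The core of the proof is condition (ii): bounded invertibility of $H_{\tau,\bchi_\rho} = w_{0,0}(H_f) + \bchi_\rho W\bchi_\rho$ on $\Ran\bchi_\rho$. Integrating $|T'(r)-1|\le 1/8$ from $T(0)=0$ gives $\rRe T(r) \geq 7r/8$, so on $\supp\,\bchi_\rho \subset \{H_f \ge 9\rho/10\}$ we have $\rRe(E+T(r)) \geq 7(9\rho/10)/8 - \rho/8 = 53\rho/80 > 0$, yielding $\|w_{0,0}(H_f)^{-1}\bchi_\rho\| \leq 80/(53\rho)$. Factoring
\begin{equation*}
H_{\tau,\bchi_\rho} \ = \ w_{0,0}(H_f)\bigl(1 + w_{0,0}(H_f)^{-1}\bchi_\rho W\bchi_\rho\bigr) \comma
\end{equation*}
invertibility on $\Ran\bchi_\rho$ follows from a Neumann series once the perturbation operator in parentheses has norm strictly less than one.

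To achieve the latter, I would insert resolvent weights $(H_f+\rho)^{\pm m/2}$ and $(H_f+\rho)^{\pm n/2}$ around each monomial $W_{m,n}$. The outer $H_f$-multipliers, $w_{0,0}(r)^{-1}\bchi_\rho(r)\chi_1(r)(r+\rho)^{m/2}$ and $\chi_1(r)\bchi_\rho(r)(r+\rho)^{n/2}$, are supported in $[9\rho/10,\,1]$ and hence bounded by $80\cdot 2^{m/2}/(53\rho)$ and $2^{n/2}$ respectively, while the middle sandwich $(H_f+\rho)^{-m/2}W_{m,n}(H_f+\rho)^{-n/2}$ has operator norm $\le \|w_{m,n}\|_0 \le \|w_{m,n}\|_{\mu,s}$ by Theorem~\ref{thm-III.1}. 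Summing over $m+n\ge 1$ gives $\|w_{0,0}(H_f)^{-1}\bchi_\rho W\bchi_\rho\| \le (80/(53\rho)) \sum_{m+n \ge 1} 2^{(m+n)/2}\|w_{m,n}\|_{\mu,s}$, which is at most $10/53 <1$ provided the geometric factor $(\sqrt{2}\,\xi)^{m+n}$ can be absorbed into the weight $\xi^{-(m+n)}$ defining $\|\uw_1\|_{\mu,s,\xi}\le\rho/8$. The main obstacle is precisely this absorption: it forces either a mild restriction on $\xi$ (e.g.\ $\xi<1/\sqrt{2}$) or a sharpening of the support-based constants, replacing $2^{(m+n)/2}$ by $(1+\rho)^{(m+n)/2}$ and using $\rho<1$ to close the estimate in the full range $0<\xi<1$ claimed by the lemma.
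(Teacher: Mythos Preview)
Your strategy is the paper's: show $E+T$ is invertible on $\Ran\bchi_\rho$ by the lower bound on $\rRe\,w_{0,0}(r)$, then close by a Neumann series. The only difference is in how you bound the perturbation, and that difference is what manufactures your ``main obstacle.''

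You invoke \eqref{III.10} and insert the weights $(H_f+\rho)^{\pm m/2}$, $(H_f+\rho)^{\pm n/2}$, which on $\supp(\chi_1\bchi_\rho)\subset[9\rho/10,1]$ produce the factors $(1+\rho)^{(m+n)/2}$ (or your cruder $2^{(m+n)/2}$). Absorbing these into the $\xi^{-(m+n)}$ weights forces $\xi\le(1+\rho)^{-1/2}<1$, so your proposed fix still does not cover the full range $0<\xi<1$ stated in the lemma. The paper avoids this entirely: since $W=\sum_{m+n\ge1}\chi_1 W_{m,n}\chi_1$ already carries $\chi_1$-cutoffs, one applies \eqref{III.11} with $\rho=1$ (equivalently, the ready-made estimate \eqref{eq-III-1-25.1}) to get
\[
\|W\|\ \le\ \sum_{m+n\ge1}\|w_{m,n}\|_{0}\ \le\ \sum_{m+n\ge1}\xi^{m+n}\,\xi^{-(m+n)}\|w_{m,n}\|_{\mu,s}\ \le\ \xi\,\|\uw_1\|_{\mu,s,\xi}\ \le\ \frac{\rho}{8},
\]
using only $\xi^{m+n}\le\xi$ for $m+n\ge1$ and $0<\xi<1$. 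Then $\|\bchi_\rho W\bchi_\rho\,(E+T)^{-1}\bchi_\rho\|\le(\rho/8)\cdot(2/\rho)=1/4$ (the paper uses the slightly weaker support bound $r\ge 3\rho/4$, giving $\|(E+T)^{-1}\bchi_\rho\|\le 2/\rho$), and the Neumann series converges with no restriction on $\xi$ beyond $\xi<1$. In short: drop the resolvent-weight detour and bound $\|W\|$ directly; the obstacle disappears.
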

\Proof Let  $H(\uw) \in \cD^{\mu,s}(\rho/8, 1/8, \rho/8)$. We remark
that $W:= H(\uw)-E-T$ defines a bounded operator on $\cF$, and we
only need to check the invertibility of $H(\uw)_{\tau \chi_\rho}$ on
$\Ran \,\bchi_\rho$. Now the operator $E+T = W_{0,0}[\uw]$ is
invertible on $\Ran \,\bchi_\rho$ since
for all $r \in [3\rho/4, \infty)$
\begin{eqnarray} \label{IV.4}
Re\ T[ r] + Re\ E & \geq & r \, - \, | T[ r] - r | \, - \, |E|
\nonumber \\ & \geq & r \big( 1 \, - \, \sup_{r} | T'[
r] - 1 | \big) \: - \:  |E| \nonumber \\
& \geq & \frac{3 \, \rho}{4} ( 1 - 1/8 ) \: - \: \frac{\rho}{8} \
\geq \ \frac{ \rho}{2} \
\end{eqnarray}
and $T:=T[ \hf]$. Eqn \eqref{IV.4} implies also that $\|(E+T)^{-1}\|
\le 2/\rho$. On the other hand, by \eqref{III.11}, $\big\| W \|\leq
\xi\rho/8 \leq \rho/8$. Hence $\big\|\bchi_\rho W \bchi_\rho (E +
T)^{-1}\|\leq 1/4$ and therefore
$H(\uw)_{\tau, \bchi_\rho}= [1+\bchi_\rho W \bchi_\rho (E +
T)^{-1}](E + T)$ is invertible on $\Ran \,\bchi_\rho$. \QED

The last part of the proof above gives the estimate
\begin{eqnarray} \label{IV.4a}
\|(H(\uw)_{\tau \chi_\rho})^{-1}\| \le \frac{8}{3\rho}.
\end{eqnarray}

We introduce the \textit{scaling transformation} $S_\rho: \cB[\cF]
\to \cB[\cF]$, by
%
%
%
\begin{equation} \label{IV.5}
S_\rho(\one) \ := \ \one \comma \hspace{5mm} S_\rho (a^\#(k)) := \
\rho^{-3/2} \, a^\#( \rho^{-1} k) \comma
\end{equation}
where $a^\#(k)$ is either $a(k)$ or $a^*(k)$ and $k \in \RR^3$.
%
On the domain of the decimation map $F_\rho$ we define the
renormalization map $\cR_\rho$ as
\begin{equation} \label{IV.6}
\cR_\rho:=  
\rho^{-1}S_\rho\circ F_\rho.
\end{equation}

\begin{remark} \label{remIV-2} The renormalization map above is different from the one
defined in \cite{BachChenFroehlichSigal2003}.
The map in \cite{BachChenFroehlichSigal2003}  contains an additional
change of the spectral parameter $\lambda:= -\la H\ra_\Omega$.
\end{remark}

We mention here some properties of the scaling transformation. It is
easy to check that $S_\rho (\hf) = \rho \hf$, and hence
\begin{equation} \label{eq-III-2-3}
S_\rho ( \chi_\rho) = \ \chi_1 \hspace{5mm} \mbox{and} \hspace{6mm}
\rho^{-1} S_\rho \big( \hf \big) \ = \  \hf \comma
\end{equation}
%
%
%
which means that the operator $\hf$ is a \emph{fixed point} of
$\rho^{-1} S_\rho$. Further note that $E \cdot \one$ \emph{is
expanded} under the scaling map, $\rho^{-1}  S_\rho(E \cdot \one) =
\rho^{-1} E \cdot \one$, at a rate $\rho^{-1}$. (To control this
expansion it is necessary to suitably restrict the spectral
parameter.)

Next, we show that the interaction $W$ contracts under the scaling
transformation. To this end we remark that the scaling map $S_\rho$
restricted to $\cW_{op}^{\mu,s}$
induces a scaling map $s_\rho$ on $\cW^{\mu,s}$ by
\begin{equation} \label{eq-III-2-5}
\rho^{-1}  S_\rho \big( H(\uw) \big) \ =: \ H \big( s_\rho(\uw)
\big).
\end{equation}
It is easy to verify that $s_\rho(\uw):=(s_\rho(w_{m,n}) )_{m+n \geq
0} $ and,  for all $(m,n) \in \NN_0^2$,
\begin{equation} \label{eq-III-2-6}
s_\rho(w_{m,n}) \big[ r , k_{(m,n)} \big] \ = \ \rho^{m+n - 1} \:
w_{m,n}\big[ \rho \, r \; , \; \rho \, k_{(m,n)} \big] \period
\end{equation}
%
%
%
We note that by Theorem~\ref{thm-III.1}, the operator norm of
$W_{m,n} \big[ s_\rho(w_{m,n}) \big]$ is controlled by the norm

\begin{eqnarray*}
 \| s_\rho(w_{m,n}) \|_{\mu} & =& \max_j \sup_{r \in I, k \in B_1^{m+n}} \ \rho^{m+n - 1} \:
\frac{\big|w_{m,n}[\rho \, r \; , \; \rho \, k_{(m,n)}] \big|}{|
k_j|^{\mu}}\\
& \leq &
\rho^{m+n +\mu- 1} \, \| w_{m,n} \|_{\mu}.
\end{eqnarray*}
Hence, for $m+n \geq 1$, we have that
\begin{equation} \label{eq-III-2-8}
 \| s_\rho(w_{m,n}) \|_{\mu}  \leq \ \; \rho^{\mu}
\, \| w_{m,n} \|_{\mu}
\end{equation}
Since $\mu >0$, this estimate shows that $S_\rho$ contracts $\|
w_{m,n} \|_{\mu}$ by at least a factor of $\rho^{\mu} < 1$. The next
result shows that this contraction is actually a property of the
renormalization map $\cR_\rho$ along the 'stable' directions.
Recall, $\chi_{1}$ is the cut-off function introduced at the
beginning of Section III. Define the constant
\begin{equation} \label{Cchi}
C_\chi:=\frac{4}{3}\big(\sum_{n=0}^s \sup |
\partial_r^n \chi_1| + \sup |\partial_r \chi_1 |^2 \big) \le 200.
\end{equation}
Clearly, for, say, $s=1,\ C_\chi \ge 4/3$. We keep the constant
$C_\chi$ below in order to relate the analysis of this paper to that
of \cite{BachChenFroehlichSigal2003}.
%
\begin{theorem}
\label{thm-III-2-5} Let $\epsilon_0:H\rightarrow \la H\ra_\Omega$
and $\mu>0$.
Then for the absolute constant $C_\chi$ given in \eqref{Cchi} and
%
for any $s \ge 1,\ 0<\rho<1/2,\ \alpha,\beta \le \frac{\rho}{8}$ and
$\gamma \le \frac{\rho}{8C_\chi}$ we have that
\begin{equation}
\cR_\rho-\rho^{-1}\epsilon_0:\cD^{\mu,s}(\alpha,\beta,\gamma)\rightarrow
\cD^{\mu,s}(\alpha',\beta',\gamma'), \label{eqn:23}
\end{equation}
continuously, with $\xi:=\frac{\sqrt{\rho}}{4C_\chi}$ (in the
definition of the polydiscs, see \eqref{disc}) and
\begin{equation}
\alpha'=3C_\chi\lb\gamma^2/2\rho\rb,
\beta'=\beta+3C_\chi\lb\gamma^2/2\rho\rb, \gamma'=256
C_\chi^2\rho^\mu\gamma . \label{eqn:24}
\end{equation}
\end{theorem}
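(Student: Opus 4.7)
The plan is to compute $\cR_\rho(H)-\rho^{-1}\epsilon_0(H)\one$ explicitly in generalized normal form via a Neumann expansion of the Feshbach-Schur formula, identify the new coefficients as integrals of the old ones against explicit contraction kernels, and verify the polydisc bounds claimed in \eqref{eqn:24}.

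First, I write $H=H_0+W$ with $H_0=\tau(H)=w_{0,0}(H_f)$ and insert into \eqref{II-5} to get
\begin{equation*}
F_\rho(H) \;=\; w_{0,0}(H_f)\;+\;\chi_\rho W\chi_\rho\;-\;\chi_\rho W\,\bchi_\rho\, H_{\tau,\bchi_\rho}^{-1}\,\bchi_\rho\,W\chi_\rho.
\end{equation*}
By Lemma~\ref{lem-III-2-2} and \eqref{IV.4a}, $\|\bchi_\rho W\bchi_\rho(E+T)^{-1}\bchi_\rho\|\leq 1/4$, so the resolvent $H_{\tau,\bchi_\rho}^{-1}$ expands in a norm-convergent Neumann series around $(E+T)^{-1}$ and I obtain
\begin{equation*}
F_\rho(H) \;=\; w_{0,0}(H_f)\;+\;\sum_{L\geq 1}(-1)^{L-1}\chi_\rho W\bigl(R_0 W\bigr)^{L-1}\chi_\rho,\qquad R_0:=\bchi_\rho^{\,2}(E+T)^{-1}.
\end{equation*}

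Second, I substitute $W=\sum_{M+N\geq 1}\chi_1 W_{M,N}[w_{M,N}]\chi_1$ into each $L$-th order term and normal-order the resulting product of $L$ Wick monomials intertwined with $R_0$'s via generalized Wick's theorem. This puts $F_\rho(H)$ in the form $\sum_{m+n\geq 0}\chi_1 W_{m,n}[w'_{m,n}]\chi_1$, with each $w'_{m,n}$ an explicit sum over $L$ and over contraction schemes of integrals of products $w_{M_1,N_1}\cdots w_{M_L,N_L}$ against kernels built from $R_0$ evaluated at arguments shifted by the momenta of uncontracted and contracted legs. This combinatorial identity is established in Appendix~I, following \cite{BachChenFroehlichSigal2003}. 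Applying $\rho^{-1}S_\rho$ and using \eqref{eq-III-2-6}--\eqref{eq-III-2-8}, I obtain the scaled coefficients $\tilde{w}'_{m,n}=s_\rho(w'_{m,n})$, and subtracting $\rho^{-1}\epsilon_0(H)\one=\rho^{-1}E\one$ enforces $\tilde{w}'_{0,0}(0)=\rho^{-1}(w'_{0,0}(0)-E)$, a pure Feshbach shift.

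Third, I estimate these scaled coefficients pointwise in their integrand representations. Each $w$-factor is bounded by $\gamma\xi^{M_i+N_i}/\sqrt{M_i!\,N_i!}$ via Theorem~\ref{thm-III.1}, each $R_0$ factor by $2/\rho$ on the support of $\bchi_\rho$ by \eqref{IV.4}, and each $r$-derivative of the kernel contributes at most a factor $C_\chi$ arising from $\sup_j|\partial_r^j\chi_1|$ and $\sup|\partial_r\chi_1|^2$ (this is how $C_\chi$ enters the bounds). For the off-diagonal part, the scaling factor $\rho^{m+n-1+\mu}$ from \eqref{eq-III-2-8} cancels the single remaining $1/\rho$ from the resolvent, leaving a net factor $\rho^\mu$; combined with $\xi=\sqrt{\rho}/(4C_\chi)$ and the geometric series in $L$ (which converges since $\gamma/\rho\leq 1/(8C_\chi)$), this yields $\|(\tilde{w}'_{m,n})_{m+n\geq 1}\|_{\mu,s,\xi}\leq 256 C_\chi^{2}\rho^{\mu}\gamma$, giving $\gamma'$. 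For $\tilde{w}'_{0,0}$ the $L=1$ term contributes nothing since $M+N\geq 1$ forces $\langle\Omega,W_{M,N}\Omega\rangle=0$; the $L=2$ term gives $|w'_{0,0}(0)-E|\leq O(\gamma^{2})$ uniformly in $\rho$, because the relevant integrand, typified by $|w_{M,0}|^{2}/(|K|\cdot|E+T(|K|)|)$ over $B_1^{M}$, is integrable with bound independent of $\rho$ (the $1/|K|$ from the resolvent is absorbed by the weight $|w|^{2}\leq \gamma^{2}|K|^{2\mu}$ since $\mu>0$). Scaling by $\rho^{-1}$ then produces $\alpha'=3C_\chi\gamma^{2}/(2\rho)$, and the analogous estimate on $\partial_r\tilde{w}'_{0,0}$ yields $\beta'=\beta+3C_\chi\gamma^{2}/(2\rho)$.

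The hard part is step two: organizing the Wick-ordering of arbitrarily long products $W_{M_1,N_1}R_0 W_{M_2,N_2}R_0\cdots R_0 W_{M_L,N_L}$ so that the integral representations of the $w'_{m,n}$ are simultaneously explicit enough to admit pointwise bounds and combinatorially tractable enough for the summed bound to come out cleanly as in \eqref{eqn:24}. This is the content of the key lemma relegated to Appendix~I, which I would invoke rather than rederive. Continuity of $\cR_\rho-\rho^{-1}\epsilon_0$ in the norm $\|\cdot\|_{\mu,s,\xi}$ then follows from the uniform convergence of the Neumann series on $\cD^{\mu,s}(\alpha,\beta,\gamma)$.
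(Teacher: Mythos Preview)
Your proposal is correct and follows essentially the same approach as the paper: Neumann expansion of $F_\rho$, Wick normal-ordering via the combinatorial machinery of Appendix~I (Theorems~\ref{thm-III-2-3}--\ref{thm-III-2-4} and Lemma~\ref{lem-VIII.3}), then term-by-term estimates summed as a geometric series. One small imprecision: your heuristic for $\alpha'$ attributes the $\rho$-independence of $|w'_{0,0}(0)-E|$ to the $\mu>0$ decay of the coupling functions absorbing the resolvent; in fact the paper does not use $\mu>0$ for this bound at all---the mechanism is that each $\sum_{p+q\geq 1}\|w_{p,q}\|$ is bounded by $\xi\|\uw_1\|_{\mu,s,\xi}\leq \xi\gamma$, and the choice $\xi=\sqrt{\rho}/(4C_\chi)$ provides exactly the power of $\rho$ needed to cancel one resolvent factor $C_\chi/\rho$ per contraction, leaving the stated $3C_\chi\gamma^2/(2\rho)$ after the overall rescaling by $\rho^{-1}$.
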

With some modifications, this theorem follows from Theorem 3.8 in
\cite{BachChenFroehlichSigal2003} and its proof; especially
Equations (3.104), (3.107) and (3.109). For the sake of
completeness, we present a proof of this theorem in Appendix I.

\begin{remark} \label{remIV-4} Subtracting the term $\rho^{-1}\epsilon_0$ from $\cR_\rho$
allows us to control the expanding direction during the iteration of
the map $\cR_\rho$. In \cite{BachChenFroehlichSigal2003} such
control was achieved by using a change of the spectral parameter
$\lambda$, which controls $\la H\ra_\Omega$.
%
\end{remark}

\secct{Renormalization Group} \label{sec-V}
%
In this section we describe some dynamical properties of iterations,
$\cR_\rho^n\ \forall n \ge 1 $, of the renormalization map
$\cR_\rho$.
A closely related iteration scheme is used in
\cite{BachChenFroehlichSigal2003}. First, we observe that
$$\forall \tau \in \mathbb{C},\ \cR_\rho(\tau H_f) = \tau H_f\
\mbox{and}\ \cR_\rho(\tau \textbf{1}) =\frac{1}{\rho} \tau
\textbf{1}.$$ Hence we define $\cM_{fp}:=\mathbb{C}H_f$  and
$\cM_{u}:=\mathbb{C}\textbf{1}$ as candidates for the manifold of
fixed points of $\cR_\rho$ and the unstable manifold.
The next result identifies the stable manifold of $\cM_{fp}$ which
turns out to be of (complex) codimension $\one$ and is foliated by
(complex) co-dimension $2$ stable manifolds, for each fixed point in
$\cM_{fp}$. This implies, in particular, that, in a vicinity of
$\cM_{fp}$, there are no other fixed points, and that $\cM_{u}$ is
the entire unstable manifold of $\cM_{fp}$ (see the figure on page
5).

We introduce some definitions. Recall that $D(\lambda,r):=\{z \in
\mathbb{C} | |z-\lambda| \le r \}$, a disc in the complex plane.
As an initial set of operators we take
$$\cD:=\cD^{\mu,s'}(\alpha_0,\beta_0,\gamma_0),$$ with $\alpha_0,
\beta_0,\gamma_0 \ll 1$ and $ s' \ge 1$. We also let
$$\cD_s:=\cD^{\mu,s'}(0,\beta_0,\gamma_0).$$ (The subindex $s$ stands for
'stable', not to be confused with the smoothness index $s$, which,
in this section, is denoted $s'$.) For $H \in \cD$ we write
$$H_u:=\la H\ra_\Omega\ \mbox{and}\ H_s:=H - \la H\ra_\Omega\
\mathbf{1}$$ (the unstable- and stable-central-space components of
$H$, respectively). Note that $H_s \in \cD_s$.

We fix the scale $\rho$ so that
\begin{equation}
\alpha_0, \beta_0,\gamma_0 \ll\rho\le 1/2.
\label{rho}
\end{equation}
Below, we use the $n-$th iteration of the numbers $\alpha_0,
\beta_0$ and $\gamma_0$ under the map \eqref{eqn:24}:
$$\alpha_n:=c\rho^{-1}(c\rho^\mu)^{2(n-1)}\gamma_0^2,$$
\begin{equation*}
\beta_n=\beta_0+\frac{c\gamma_0^2}{\rho}\sum_{j=0}^{n-1}(c\rho^\mu)^{2j},
\end{equation*}
$$\gamma_n=(c\rho^\mu)^n\gamma_0.$$

Recall that a vector-function $f$ from an open set $\cD$ in a
complex Banach space $\cB_1$ into a complex Banach space $\cB_2$ is
said to be \textit{analytic} iff $\forall H \in \cD$ and $\ \forall \xi \in \cB_1,\ f(H+ \tau \xi)$ is analytic in the complex variable $\tau$ for $|\tau|$ sufficiently small (see \cite{Berger}). One can show that $f$ is analytic iff it is
G\^{a}teaux-differentiable (\cite{Berger,
HillePhillips}). A stronger notion of analyticity, requiring in addition that $f$ is locally bounded, is used in
\cite{HillePhillips}. Furthermore, if $f$ is analytic in $\cD$ and $g$ is an analytic vector-function from an open
set $\Omega$ in
$\mathbb{C}$ into  $\cD$, then the composite function $f\circ g$ is
analytic on $\Omega$. In what follows $\cB_1$ is the space 
of $H_f$-bounded operators on $\cF$ and $\cB_2$ is either
$\mathbb{C}$ or $\cB (\cF)$.
%

For a Banach space $X$ the symbol $O_X(\alpha)$ will stand for an
element of $X$ bounded in its norm by $\textrm{const}\ \alpha$.

%
%

\begin{theorem} \label{stable-manif} Let $\delta_n :=\nu_n\rho^{n}$ with
$4 \alpha_n \leq \nu_{n} \leq\frac{1}{18}$. There is
an analytic  map $e:\cD_s \rightarrow D(0, 4 \alpha_0)$ s.t. $e(H)
\in \mathbb{R}$ for $H=H^*$, and
\begin{equation}
U_{\delta_n} \subset D(\cR_\rho^{n})\ \mbox{and}\
\cR_\rho^{n}(U_{\delta_n}) \subset
\cD^{\mu,s'}(\rho/8,\beta_{n},\gamma_{n}) \label{eqn:30aaa}
\end{equation}
where $U_\delta:= \{H \in \cD|\ |e(H_s)+ H_u|  \leq \delta\ \}.$
Moreover, $\forall H \in U_{\delta_n}$ and $\forall n \geq 1$, there
are $E_{n} \in \mathbb{C}$ and $\tau_n(r)\in \mathbb{C}$ s.t.
$|E_{n}| \leq 2\nu_n$, $|\ \tau_n(r)-1| \leq \beta_n$,  $ \tau_n$ is
$C^{s'}$,
\begin{equation}
\cR_\rho^n(H)=E_{n}+\tau_n(H_f)H_f +O_{\cW_{op}^{\mu,s'}}(\gamma_n),
\label{eqn:30b}
\end{equation}
(the spaces $\cW_{op}^{\mu,s'}$ are defined in Section \ref{sec-III}), $E_{n}$ and $\tau_n(r)$ are real if $H$
is self-adjoint and, as $ n
\rightarrow \infty$,  $\tau_n(r) $ converge in $L^\infty$ to some
number (constant function) $\tau \in \mathbb{C}$.
\end{theorem}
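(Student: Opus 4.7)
The plan is to combine the one-step estimate Theorem \ref{thm-III-2-5} with a contraction-mapping (or shooting) argument to pin down the unstable direction. For fixed $H_s \in \cD_s$, consider the one-parameter family $H(\lambda) := H_s + \lambda \mathbf{1}$. Applying $\cR_\rho$ and using Theorem \ref{thm-III-2-5}, I would decompose
\begin{equation*}
\cR_\rho(H(\lambda)) \;=\; \rho^{-1}\lambda\,\mathbf{1} \;+\; R(H(\lambda)), \qquad R := \cR_\rho - \rho^{-1}\epsilon_0,
\end{equation*}
where $R(H(\lambda)) \in \cD^{\mu,s'}(\alpha_1,\beta_1,\gamma_1)$ with $\alpha_1 = 3C_\chi\gamma_0^2/(2\rho)$. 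Iterating, the scalar component satisfies $E_{n+1} = \rho^{-1}E_n + \varepsilon_n$ with $|\varepsilon_n|\le\alpha_{n+1}$, hence
\begin{equation*}
E_n \;=\; \rho^{-n}\Bigl(\lambda + \textstyle\sum_{k=0}^{n-1}\rho^{k+1}\varepsilon_k(\lambda)\Bigr).
\end{equation*}
Since $\gamma_k = (c\rho^\mu)^k\gamma_0$ decays geometrically, the tail $\sum_{k\ge 0}\rho^{k+1}\varepsilon_k(\lambda)$ converges absolutely, so boundedness of $(E_n)$ forces $\lambda$ to satisfy the fixed-point equation $\lambda = -\sum_{k=0}^{\infty}\rho^{k+1}\varepsilon_k(\lambda)$.

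I would then define $e(H_s)$ as the negative of the unique solution $\lambda^*(H_s)$ of this equation, obtained via the Banach fixed-point theorem on the disc $D(0,4\alpha_0)$. Contraction follows because perturbing $\lambda$ by $\eta$ changes each $\varepsilon_k$ by at most $O(\gamma_k^2\rho^{-2}\cdot\eta)$ (from the sensitivity of the resolvent in $H^{-1}_{\tau,\bchi_\rho}$, estimated via \eqref{IV.4a}), and the weighted sum $\sum_k \rho^{k+1}\cdot O(\gamma_k^2/\rho^2)$ is comparable to $\gamma_0^2/\rho$, which is much smaller than $1$ by our choice of $\gamma_0 \ll \rho$. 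Analyticity of $e$ on $\cD_s$ is inherited from analyticity of $\cR_\rho$ (which in turn comes from analyticity of the Feshbach map in its argument, via Neumann series for $H^{-1}_{\tau,\bchi_\rho}$) together with uniform convergence of the defining series in a neighborhood of each point. Reality of $e(H)$ for $H=H^*$ follows because $\cR_\rho$ sends self-adjoint operators in $\cD_s$ to self-adjoint ones, so every $\varepsilon_k$ is real on the self-adjoint slice.

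Next I would establish \eqref{eqn:30aaa} and \eqref{eqn:30b} by induction on $n$. For $H\in U_{\delta_n}$, the scalar shift relative to the stable orbit of $H_s$ is $H_u + e(H_s)$, of size at most $\delta_n = \nu_n\rho^n$. At each iteration this shift is amplified by $\rho^{-1}$, giving $|E_n|\le\rho^{-n}\delta_n + \text{(corrections)} = \nu_n + O(\alpha_n)$; the assumption $4\alpha_n\le\nu_n$ controls the corrections so that $|E_n|\le 2\nu_n$ and in particular $|E_n|\le\rho/8$. The $T$-component, $\tau_n(H_f)H_f$, is tracked through \eqref{eqn:24}: the bound $|\tau_n(r)-1|\le\beta_n$ is immediate, and the fact that $\tau_n$ converges in $L^\infty$ to a constant $\tau$ comes from the estimate $|\tau_{n+1}(r)-\tau_n(r)|\lesssim 3C_\chi\gamma_n^2/(2\rho)$ (a standard consequence of the proof of Theorem \ref{thm-III-2-5}; the increment comes from contributions generated by the Feshbach resolvent), summable geometrically in $n$ since $\gamma_n^2=(c\rho^\mu)^{2n}\gamma_0^2$. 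The remainder $\uw_{1,n}$ is controlled directly by $\gamma_n$ from \eqref{eqn:24}, giving the $O_{\cW^{\mu,s'}_{op}}(\gamma_n)$ term.

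The main obstacle is controlling the expanding $E$-direction while defining $e$ analytically. The naive iterate $E_{n+1}=\rho^{-1}E_n+\varepsilon_n$ would cause $|E_n|$ to blow up like $\rho^{-n}$ unless the initial condition is chosen on a codimension-one stable manifold. The delicate point is that $\varepsilon_n$ itself depends on the entire future trajectory (through the Feshbach resolvent appearing in each $\cR_\rho$-step), so the construction of $e$ is implicit; it is exactly here that one needs the tight bound $\alpha_{n+1}\le\gamma_n^2/\rho$ from Theorem \ref{thm-III-2-5}, which makes the sensitivity of $\varepsilon_n$ to the initial shift small enough to run the contraction. Everything else (analyticity, reality, the explicit form \eqref{eqn:30b}) is then bookkeeping on top of this fixed-point construction.
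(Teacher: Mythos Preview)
Your overall strategy coincides with the paper's: tame the expanding scalar direction by a fixed-point argument in the parameter $\lambda$, and propagate the $T$- and $W$-estimates by iterating Theorem~\ref{thm-III-2-5}. The gap is in the construction of $e$. The map $\lambda\mapsto-\sum_{k\ge0}\rho^{k+1}\varepsilon_k(\lambda)$ is \emph{not} defined on the fixed disc $D(0,4\alpha_0)$: the correction $\varepsilon_k(\lambda)$ exists only if $H^{(k)}(\lambda)\in\dom(\cR_\rho)$, i.e.\ only if $|E_k(\lambda)|\le\rho/8$. Since $E_k(\lambda)\sim\rho^{-k}(\lambda-\lambda^\ast)$, for any $\lambda\ne\lambda^\ast$ the iterates leave the domain at some finite $k$, so the infinite series is undefined off the single point you are trying to find. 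The Banach fixed-point theorem therefore does not apply to your map as stated. (Your remark that ``$\varepsilon_n$ depends on the entire future trajectory'' is also not quite right: it depends on the \emph{past} iterates up to step $n$; the circularity is between the domain of definition and the location of the fixed point, not between past and future.)

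The paper resolves this by never writing the infinite sum. In Proposition~\ref{prop-V.4} one constructs inductively a sequence $e_n(H_s)$, where $e_n$ is the unique zero of the \emph{finite-step} function $\lambda\mapsto E_n(H_s-\lambda)$, equivalently the fixed point of the partial sum $E_{0n}(\lambda)=\sum_{k=1}^n\rho^k\Delta_kE(\lambda)$. This map is well defined on the disc $D(e_{n-1},\tfrac{1}{12}\rho^{n+1})$ produced at the previous step, and the contraction estimate $|\partial_\lambda E_{0n}|\le1/5$ is obtained from $|\Delta_kE|\le\alpha_k$ on that disc via the Cauchy inequality (not from a direct resolvent bound). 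One gets $|e_n-e_{n-1}|\le2\alpha_n\rho^n$, hence $e:=\lim e_n$, and analyticity of $e$ from analyticity of each $e_n$ together with uniform convergence.

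Two smaller points. Your sensitivity estimate $\partial_\lambda\varepsilon_k=O(\gamma_k^2\rho^{-2})$ omits the chain-rule factor $\partial_\lambda E_k\sim\rho^{-k}$ coming from the unstable component of $H^{(k)}$; the Cauchy argument in the paper automatically accounts for this and still sums to a contraction. And your argument that $|\tau_{n+1}-\tau_n|\lesssim\gamma_n^2/\rho$ shows only that $(\tau_n)$ is Cauchy in $L^\infty$; it does not show the limit is a \emph{constant}. That requires the rescaling structure $\Delta_nT(r):=T_n(r)-\rho^{-1}T_{n-1}(\rho r)$, from which (the paper notes, citing \cite{BachChenFroehlichSigal2003}) one gets $\tau=\lim_n\tau_n(0)$.
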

\begin{center}
\psset{unit=1cm}
\begin{pspicture}(-1,-3)(8,3.5)

\psset{linewidth=0.5pt} \psline(0,-2)(0,3)
\psline(0,0)(6,0) 

\psbezier[linewidth=1pt](0,0)(2,0)(4,-0.5)(5.5,-2) 
\psbezier[linewidth=0.5pt,linestyle=dashed](4.5,-0.5)(2.5,0.5)(0.5,1)(0.5,2.8)
\qdisk(1,1.46){2pt}

\qdisk(4.5,-0.5){2pt} \qdisk(4.5,-1.19){2pt}
\psline[linewidth=0.5pt,linestyle=dashed](4.5,0)(4.5,-1.2) 
\psline[linewidth=0.5pt,linestyle=dashed](0,-0.5)(4.5,-0.5)
\psline[linewidth=0.5pt,linestyle=dashed](0,-1.19)(4.5,-1.19)
\rput(-0.7,-1.19){$-e(H_s)$}
\rput(1.8,1.5){$\mathcal{R}^{n}_{\rho}(H)$}

\rput(-0.3,-0.5){$H_u$} \rput(4.8,-0.5){$H$}\rput(4.6,0.25){$H_s$}

\rput(6,-2){$\mathcal{M}_s$} \rput(0,3.2){$\mathcal{M}_u$}

\rput(6.6,0){$\mathcal{W}_{1,op}$}
\end{pspicture}
\end{center}

This theorem implies that $\cM_{fp}:=\mathbb{C}H_f$  is (locally) a
manifold of fixed points of $\cR_\rho$ and
$\cM_{u}:=\mathbb{C}\textbf{1}$ is the unstable manifold,
and the set
\begin{equation} \cM_s:=\bigcap_n U_{\delta_n}= \{H\in \cD |\ e(H_s)=-H_u\}\label{eqn:30a}
\end{equation}
is a local stable manifold for the fixed point manifold $\cM_{fp}$
in the sense that, $\forall H \in \cM_s,\ \exists \tau \in
\mathbb{C}$ s.t.
\begin{equation}\cR_\rho^n(H)\rightarrow \tau H_f\ \mbox{in the norm of}\
 \cW_{op}^{\mu,s'},  \label{eqn:30aa}
\end{equation}
as $ n \rightarrow \infty$. Moreover, $\cM_s$ is an invariant
manifold for $\cR_\rho$: $\cM_s \subset D(\cR_\rho)$ and
$\cR_\rho(\cM_s) \subset \cM_s$, though we do not need this property
here and thus we will not prove it.


\pspicture(-4,-5)(8,3)

\pscustom[linestyle=none,fillstyle=solid,fillcolor=lightgray]{
\psbezier(2,1)(3,0.5)(4,0)(4.5,-1.5)
\psbezier[liftpen=1](0,-4.5)(-0.25,-3)(-1,-2)(-2,-1)}
\psbezier[linewidth=1pt]{<-}(1,0.5)(2.2,-0.1)(3,-0.5)(3.5,-2.25) 
\rput(0.5,-3.5){$\mathcal{M}_s$}

\psline[linewidth=0.5pt](0,-1)(0,2)\rput(0,2.3){$\mathcal{M}_u$}
\psline[linewidth=0.5pt](-3,-1.5)(3,1.5)
\rput(3.5,1.5){$\mathcal{M}_{fp}$} \rput(0.8,0.8){$\tau H_f$}
\qdisk(1.85,0.05){2pt}

\rput(4.3,0.7){$\mathcal{R}^{n}_{\rho}(H)$}
\psline[linewidth=0.3pt]{<-}(1.95,0.1)(3.5,0.6)
\qdisk(3,-1.05){2pt}\rput(2.7,-1.2){$H$}

\endpspicture


The next result reveals the spectral significance of the map $e$:
\begin{theorem} \label{thm-V.2}
Let $H_s\in \cD_s$. Then the number $e(H_s)$ is
an eigenvalue of the operator $H_s$ and
$\sigma(H_s) \subset e(H_s) +S$ where
\begin{equation}\label{eqn:S}
S:=\{w\in \mathbb{C}| \rRe w \ge 0,
|\rIm w| \le \frac{1}{3} \rRe w \}.
\end{equation}
\end{theorem}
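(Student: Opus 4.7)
I handle the two claims of the theorem separately.

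\emph{Eigenvalue.} Set $H_0 := H_s - e(H_s)\mathbf{1}$. Since $H_s \in \cD_s$ has vanishing constant component, the decomposition of Section~\ref{sec-III} gives $(H_0)_u = -e(H_s)\mathbf{1}$ and $(H_0)_s = H_s$, so $|e((H_0)_s) + (H_0)_u| = 0 \leq \delta_n$ for every $n$. Thus $H_0 \in U_{\delta_n}$ for all $n$, and Theorem~\ref{stable-manif} yields
\begin{equation*}
\cR_\rho^n(H_0) \;=\; E_n + \tau_n(H_f)H_f + O_{\cW_{op}^{\mu,s'}}(\gamma_n),
\end{equation*}
with $|E_n| \to 0$ (taking $\nu_n \downarrow 0$), $\tau_n \to \tau \in \CC$ in $L^\infty$ with $|\tau-1|$ small, and $\gamma_n \to 0$. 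Hence $\cR_\rho^n(H_0) \to \tau H_f$ in $\cW_{op}^{\mu,s'}$, and the limit operator $\tau H_f$ has $0$ as a simple eigenvalue with eigenvector $\Omega$. To lift $\Omega$ back to an eigenvector of $H_0$ I apply Theorem~\ref{thm-II-1}(iii) repeatedly along the chain $\cR_\rho^n(H_0) \to \cR_\rho^{n-1}(H_0) \to \cdots \to H_0$: given a nontrivial element of $\ker \cR_\rho^k(H_0)$, the map $Q_{\tau,\chi_\rho}$ (together with unitary rescaling by $S_\rho$) produces a nontrivial element of $\ker \cR_\rho^{k-1}(H_0)$. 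This backward construction, starting from a perturbative eigenvector at large $n$ close to $\Omega$, is the content of Appendix~II, and yields $\psi \in \ker H_0 \setminus\{0\}$. So $e(H_s) \in \sigma_{pp}(H_s)$.

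\emph{Sector containment.} Fix $z \notin e(H_s) + S$ and put $\lambda := z - e(H_s) \notin S$; I must invert $K(\lambda) := H_0 - \lambda\mathbf{1} = H_s - z\mathbf{1}$. Since $S$ is the closed cone of opening $\arctan(1/3)$ around $[0,\infty)$ and $\lambda \notin S$, one has $\dist(\lambda,[0,\infty)) \geq c|\lambda|$ for a fixed $c > 0$. If $|\lambda| \geq \rho/18$, then $z = \lambda + e(H_s)$ stays at distance at least $c|\lambda| - |e(H_s)| - O(\beta_0|\lambda|) \geq c\rho/36$ from $\sigma(w_{0,0}(H_f)) = \overline{w_{0,0}([0,\infty))}$ (using $\alpha_0,\beta_0 \ll \rho$), so
\begin{equation*}
K(\lambda)^{-1} \;=\; \bigl(\mathbf{1} + (w_{0,0}(H_f) - z)^{-1}W\bigr)^{-1}\,(w_{0,0}(H_f) - z)^{-1}
\end{equation*}
converges as a Neumann series thanks to $\|W\| \leq \xi\gamma_0 \ll c\rho/36$. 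For $|\lambda| < \rho/18$, choose $n \geq 1$ with $\rho^{n+1}/18 \leq |\lambda| \leq \rho^n/18 = \delta_n$ (taking $\nu_n = 1/18$); then $K(\lambda) \in U_{\delta_n}$ and Theorem~\ref{stable-manif} gives
\begin{equation*}
\cR_\rho^n(K(\lambda)) \;=\; E_n(\lambda) + \tau_n(H_f)H_f + R_n, \qquad \|R_n\| \leq C\gamma_n.
\end{equation*}
The map $\lambda \mapsto E_n(\lambda)$ is analytic, and because $E_n(0) \to 0$ (as $K(0) = H_0 \in \cM_s$) and the recursion $E_{k+1}(\lambda) = \rho^{-1}(E_k(\lambda) + O(\gamma_k^2/\rho))$ amplifies the spectral shift by $\rho^{-1}$ at each step while the Feshbach corrections are quadratically small, one extracts the linearized asymptotic $E_n(\lambda) = -\rho^{-n}\lambda\,(1 + o(1))$. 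Hence $-E_n(\lambda)$ lies outside $S$ at distance at least $c\rho^{-n}|\lambda| \geq c\rho/18$ from the set $\{\tau_n(r)r : r \geq 0\} \subset S$ (using $|\tau_n - 1| \leq \beta_n$ small to ensure that inclusion). Since $\rho^n\gamma_n \leq (c\rho^{1+\mu})^n \gamma_0 \to 0$, we have $C\gamma_n \ll \rho^{-n}|\lambda|$, so a final Neumann series renders $\cR_\rho^n(K(\lambda))$ invertible; iterated application of Theorem~\ref{thm-II-1}(i) and the unitarity of $S_\rho$ transport invertibility back to $K(\lambda)$.

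\emph{Principal obstacle.} The one genuinely nonroutine step is the linearized estimate $E_n(\lambda) \approx -\rho^{-n}\lambda$, which sharpens the uniform bound $|E_n| \leq 2\nu_n$ of Theorem~\ref{stable-manif}. Establishing it requires differentiating the stable-manifold iteration in the spectral parameter $\lambda$ and showing that the $\rho^{-1}$ amplification of the constant under $S_\rho$ dominates the $O(\gamma_k^2/\rho)$ Feshbach corrections accumulated along the orbit---a quantitative form of the transversality of $\cM_u = \CC\mathbf{1}$ to $\cM_s$ at the fixed-point manifold $\cM_{fp}$.
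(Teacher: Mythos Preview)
Your strategy matches the paper's: defer the eigenvalue to the explicit eigenvector construction in Appendix~II (iterated $Q_{\tau,\chi_\rho}$ maps applied to $\Omega$), and prove sector containment by transporting invertibility through the RG chain via Theorem~\ref{thm-II-1}(i), exploiting the $\rho^{-n}$ amplification of the constant term. The packaging differs: you use the geometric bound $\dist(\lambda,[0,\infty)) \geq |\lambda|/\sqrt{10}$ for $\lambda \notin S$ together with a single scale choice $|\lambda| \sim \rho^n$, whereas the paper splits into cases on the relative sizes of $\rRe\mu$ and $|\rIm\mu|$ (cases a and b, bounding $\rRe(E_n+\tau_n r)$ and $|\rIm(E_n+\tau_n r)|$ separately) and then runs a nested covering $\Omega^{(3)}_\theta \subset \bigcup_k(\Omega^{(1)}_{3^k\theta} \cup \Omega^{(2)}_{3^{k+1}\theta})$. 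Your explicit treatment of the large-$|\lambda|$ regime by a direct Neumann series on $H_s - z$ is actually cleaner than what the paper writes there.

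There is one genuine imprecision in your key step. The linearization $E_n(\lambda) = -\rho^{-n}\lambda\,(1 + o(1))$ is too optimistic: the identity $E_n = (\lambda - e_n)\,g$ with $|g + \rho^{-n}| \le \tfrac{1}{5}\rho^{-n}$ (equations~\eqref{V.29}--\eqref{V.31}, coming from \eqref{eqn:55}--\eqref{eqn:57}) gives a \emph{fixed} $20\%$ relative error, not one that vanishes as $n\to\infty$. Thus $-E_n$ equals $\rho^{-n}\lambda$ only up to multiplication by a factor in $D(1,\tfrac15)$, which can rotate it across $\partial S$; the intermediate claim ``$-E_n(\lambda)$ lies outside $S$'' is therefore unjustified as written. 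The lower bound you actually need, $\inf_{r\ge 0}|E_n + \tau_n(r)r| \gtrsim \rho^{-n}|\lambda|$, does survive --- but only because $1/\sqrt{10} - 1/5 - O(\beta_n) > 0$, i.e.\ the angular margin guaranteed by $\lambda \notin S$ strictly dominates the worst-case rotation induced by $g$ (and the $\tau_n$-curvature). This numerical check is exactly what the paper's case analysis with constants $\tfrac{4}{5}, \tfrac{3}{5}, \tfrac{1}{10}$ in \eqref{V.32}--\eqref{V.36} extracts, and you should make it explicit rather than burying it in an $o(1)$.
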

This theorem implies Theorem \ref{thm-main} formulated in the
introduction. We begin with some preliminary results, collected in
Proposition \ref{prop-V.4} below, from which we derive Theorems
\ref{stable-manif} and \ref{thm-V.2}.


\begin{proposition} \label{prop-V.4}
Let $V_{-1}\equiv \cD$   and $e_{-1}(H_s)=0\ \forall H_s$. The
triples $(V_n, E_{n}, e_{n})$, $n=0, 1, ...$, where $V_n$ is a
subset of $\cD$, $E_{n}$ is a map of $V_{n-1}$ into $\mathbb{C}$,
and $e_{n}$ is a map of $\cD_s$ into $\mathbb{C}$, are defined
inductively in $n \geq 0$ by the formulae
\begin{equation}V_n:=\{H\in \cD|\ |H_u+e_{n-1}(H_s)| \leq
\frac{1}{12}  \rho^{n+1}\}, \label{eqn:52aa}
\end{equation}
\begin{equation}
E_{n}(H):= \big(\cR_\rho^{n}(H)\big)_u, \label{eqn:52aaa}
\end{equation}
\begin{equation}e_{n}(H_s)\ \mbox{is the unique zero
of the function}\ E_{n}(H_s-\lambda)\
\label{eqn:52b}
\end{equation}
in the disc $D(e_{n-1}(H_s),\frac{1}{12} \rho^{n+1})$. Moreover,
these objects have the following properties:
\begin{equation}V_{n} \subset V_{n-1}\ \mbox{and}\ V_n \subset
D(\cR_\rho^{n+1}),\label{eqn:52cc}
\end{equation}
$E_{n}(H_s-\lambda)$ is analytic in $\lambda \in
D(e_{n-1}(H_s),\frac{1}{12} \rho^{n+1})$ and in $H_s \in \cD_s$,
$e_{n}(H_s)\in \mathbb{R}$, if $H=H^*$, and
\begin{equation}|\ e_{n}(H_s)-e_{n-1}(H_s)| \leq
2\alpha_n \rho^n.\label{eqn:52dd}
\end{equation}
\end{proposition}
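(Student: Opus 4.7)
The proof goes by strong induction on $n$, with the base case $n=0$ immediate and the inductive step reduced to a Rouché-type argument built on Theorem \ref{thm-III-2-5}.

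\emph{Base case.} By definition, $E_0(H) = H_u$, so $E_0(H_s - \lambda) = -\lambda$ has the unique zero $e_0 \equiv 0 \in D(0, \rho/12)$. The inclusion $V_0 \subset \cD = V_{-1} \subset D(\cR_\rho)$ follows from Lemma \ref{lem-III-2-2} since $\alpha_0, \beta_0, \gamma_0 \ll \rho$. Analyticity, real-valuedness, and \eqref{eqn:52dd} at $n=0$ are immediate.

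\emph{Inductive step.} Assume the statement up to level $n-1$. To verify $V_n \subset V_{n-1}$ I apply the triangle inequality
\[
|H_u + e_{n-2}(H_s)| \le |H_u + e_{n-1}(H_s)| + |e_{n-1}(H_s) - e_{n-2}(H_s)| \le \tfrac{\rho^{n+1}}{12} + 2\alpha_{n-1}\rho^{n-1},
\]
which is bounded by $\rho^n/12$ for $\gamma_0$ (hence $\alpha_{n-1}$) sufficiently small and $\rho < 1/2$. The inclusion $V_n \subset D(\cR_\rho^{n+1})$ then follows by iterating Theorem \ref{thm-III-2-5} through $n$ steps (the polydisc images $\cD^{\mu,s'}(\alpha_k, \beta_k, \gamma_k)$ remain in the domain of the next decimation), while the $V_n$-constraint is tuned precisely to force the unstable coordinate of $\cR_\rho^n(H)$ into the $\rho/8$-disc required by Lemma \ref{lem-III-2-2}.

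The heart of the step is the construction of $e_n(H_s)$. Starting from the identity
\[
\cR_\rho(H)_u = \rho^{-1}\bigl(H_u - \la W\bar\chi_\rho H_{\tau,\bar\chi_\rho}^{-1}\bar\chi_\rho W\ra_\Omega\bigr),
\]
a consequence of $\la W\ra_\Omega = 0$ for any $W$ with only $m+n\ge 1$ contributions, together with the resolvent bound \eqref{IV.4a}, the map $\cR_\rho$ acts on the $\mathbf{1}$-direction as multiplication by $\rho^{-1}$ up to an $O(\gamma^2/\rho^2)$ correction. Iterating this and invoking the inductive zero relation $\bigl(\cR_\rho^{n-1}(H_s - e_{n-1}(H_s))\bigr)_u = 0$ to cancel the leading-order error, I obtain
\[
E_n(H_s - \lambda) \;=\; -\rho^{-n}\bigl(\lambda - e_{n-1}(H_s)\bigr) \;+\; R(H_s, \lambda), \qquad |R(H_s,\lambda)| \le C\alpha_n,
\]
uniformly on $\lambda \in D(e_{n-1}(H_s), \rho^{n+1}/12)$. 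On the boundary of this disc the linear term has modulus $\rho/12 \gg C\alpha_n$, so Rouché's theorem produces a unique zero $e_n(H_s)$ there, and dividing $R$ by the effective slope $\rho^{-n}$ yields $|e_n - e_{n-1}| \le 2\alpha_n\rho^n$, i.e.\ \eqref{eqn:52dd}. Joint analyticity of $E_n$ in $(H_s, \lambda)$ propagates from that of $\cR_\rho$ (a composition of bounded linear operations with $H_{\tau,\bar\chi_\rho}^{-1}$, analytic where the latter is invertible); analyticity of $e_n$ follows from the Cauchy formula $e_n(H_s) = \frac{1}{2\pi i}\oint \lambda\, \partial_\lambda E_n(H_s-\lambda)/E_n(H_s-\lambda)\, d\lambda$. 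For $H=H^*$, $\cR_\rho$ preserves self-adjointness, making $E_n(H_s-\lambda)$ real for real $\lambda$, so $e_n(H_s)\in\mathbb{R}$.

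\emph{Main obstacle.} The technical crux is the quantitative linearization $E_n(H_s - \lambda) \approx -\rho^{-n}(\lambda - e_{n-1}(H_s))$ with remainder $R$ genuinely of size $\alpha_n$ rather than the coarser $\gamma_n$-bound one reads off directly from Theorem \ref{thm-III-2-5}. Securing both the clean factor $\rho^{-n}$ in the leading term and the correct scale of $R$ requires careful bookkeeping of the action of $\cR_\rho$ on $\CC\mathbf{1}$ across all $n$ iterations, together with the use of the inductive zero relation to absorb what would otherwise be a larger error. Once this linearization is in hand, Rouché and the analyticity arguments are routine.
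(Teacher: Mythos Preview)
Your overall architecture matches the paper's: induction on $n$, with $V_n\subset V_{n-1}$ via the triangle inequality and \eqref{eqn:52dd} at the previous level, and the zero of $E_n(H_s-\lambda)$ located by a perturbative argument. The paper recasts the zero problem as a fixed-point problem for the auxiliary map $\lambda\mapsto E_{0j}(\lambda):=\sum_{n=1}^j\rho^n\Delta_nE(\lambda)$, where $\Delta_nE:=E_n-\rho^{-1}E_{n-1}$, obtaining $E_j(\lambda)=\rho^{-j}(E_{0j}(\lambda)-\lambda)$ and the Lipschitz bound $|\partial_\lambda E_{0j}|\le 1/5$ from Cauchy's estimate applied to $|\Delta_nE|\le\alpha_n$. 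Your Rouch\'e route is a legitimate alternative, but two points need repair.

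First, the claimed uniform bound $|R(H_s,\lambda)|\le C\alpha_n$ on the full disc $D(e_{n-1},\rho^{n+1}/12)$ is not correct. Writing $E_n(\lambda)=\rho^{-1}E_{n-1}(\lambda)+\Delta_nE(\lambda)$ and using $E_{n-1}(e_{n-1})=0$ together with the representation $E_{n-1}(\lambda)=\rho^{-(n-1)}(E_{0,n-1}(\lambda)-\lambda)$ gives
\[
E_n(\lambda)=-\rho^{-n}(\lambda-e_{n-1})+\underbrace{\rho^{-n}\bigl(E_{0,n-1}(\lambda)-E_{0,n-1}(e_{n-1})\bigr)}_{(\ast)}+\Delta_nE(\lambda).
\]
The term $(\ast)$ is bounded by $\tfrac15\rho^{-n}|\lambda-e_{n-1}|$, which on the boundary of the full disc is of order $\rho/60$, not $\alpha_n$. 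Rouch\'e still applies there (since $\rho/12>\rho/60+\alpha_n$), yielding uniqueness, but to extract the sharp estimate \eqref{eqn:52dd} you must run Rouch\'e again on the \emph{smaller} disc $D(e_{n-1},2\alpha_n\rho^n)$, where $(\ast)$ shrinks to $\le\tfrac25\alpha_n$ and the linear term dominates. The paper gets both conclusions at once by showing the contraction $\lambda\mapsto E_{0j}(\lambda)$ maps every disc $D(e_{j-1},r\rho^j)$, $2\alpha_j\le r\le\rho/12$, into itself.

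Second, analyticity of $\cR_\rho$ is not just ``bounded linear operations composed with $H_{\tau,\bchi_\rho}^{-1}$''. The Feshbach map uses $\tau(H)=w_{0,0}[H_f]$, i.e.\ one must first extract the diagonal component $w_{0,0}$ from the full operator $H$. That this extraction preserves analyticity (in $\lambda$ and in $H_s$) is the content of Proposition~\ref{analytic00} in Appendix~III, which the paper invokes explicitly in its inductive analyticity argument. Your sketch should cite it rather than asserting analyticity is automatic.
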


\begin{proof}
We proceed by induction in the index $n$. For $n=0$ the
proposition is trivially true.
We assume that the statements of the proposition hold for all $0
\leq n \leq j-1$ and prove them for $n=j$. Let $e_{n}(H_s)$ and
$E_{n}(H_s-\lambda)$,   $0 \leq n \leq j-1$, be as defined in the
proposition. Since $e_{j-1}(H_s)$ is defined by \eqref{eqn:52b}
with $n=j-1$ we can define $V_j$ using \eqref{eqn:52aa} with
$n=j$. Next, by \eqref{eqn:52cc} with $n=j-1$, $V_{j-1} \subset
D(\cR_\rho^{j})$ and therefore the map $E_{j}$ is well defined.

Let $ H\in V_{j-1}$ and denote $\lambda:=-H_u$ so that $H:=H_s
-\lambda$. Let $H^{(j)}(\lambda):= \cR_\rho^j(H^{(0)}(\lambda))$
with $H^{(0)}(\lambda):=H_s -\lambda$ (\textit{we suppress the
dependence of} $H^{(j)}(\lambda)$  \textit{on}  $H_s $). Write
inductively $H^{(j)}(\lambda):= \cR_\rho(H^{(j-1)}(\lambda))$.

We claim that $H^{(j)}(\lambda) 
$ is analytic  (in the sense specified in the paragraph preceding
Theorem \ref{stable-manif}) in $\lambda \in
D(e_{j-1}(H_s),\frac{1}{12} \rho^{j+1})$ and in $ H_s \in \cD_s $.
We prove this statement by induction in $j$. Clearly,
$H^{(0)}(\lambda)=H_s -\lambda$  is analytic in $\lambda \in
D(e_{-1}(H_s),\frac{1}{12} \rho)$ and in $H_s \in \cD_s$. Now,
assume that $H^{(j-1)}(\lambda)$ is analytic in $\lambda \in
D(e_{j-2}(H_s),\frac{1}{12} \rho^{j})$ and in $H_s \in \cD_s$. Then
by  Proposition~\ref{analytic00}, Appendix III,
$H^{(j-1)}_{0}(\lambda):= E^{(j-1)}(\lambda)+T^{(j-1)}(\lambda)$ and
$W^{(j-1)}(\lambda)$ are analytic. By the properties of
$T^{(j-1)}(\lambda)$, the inverse $
H^{(j-1)}_{0}(\lambda)^{-1}\bchi_\rho$ is well-defined and is
analytic and therefore so is
$$\bchi_\rho H^{(j-1)}(\lambda)_{\tau,\bchi_\rho}^{-1}\bchi_\rho =\sum_{n=0}^\infty \bchi_\rho (-H^{(j-1)}_{0}(\lambda)^{-1}\bchi_\rho W^{(j-1)}(\lambda)\bchi_\rho )^n H^{(j-1)}_{0}(\lambda)^{-1}\bchi_\rho .$$
By the definition of the decimation map, \eqref{IV.1}-\eqref{IV.2},
$$F_\rho (H^{(j-1)}(\lambda)) =H^{(j-1)}_{0}(\lambda)+ \chi_\rho W^{(j-1)}(\lambda)\bchi_\rho  H^{(j-1)}(\lambda)_{\tau,\bchi_\rho}^{-1}\bchi_\rho W^{(j-1)}(\lambda)\chi_\rho ,$$
is analytic. Hence, by the definition of the renormalization
map $\cR_\rho$ in \eqref{IV.5} - \eqref{IV.6}, $\cR_\rho(H^{(j-1)}(\lambda))$ is analytic as well.

This implies that
$E_{j}(H_s-\lambda)$ is analytic in $\lambda \in D(e_{j-1},$
$\frac{1}{12} \rho^{j+1})$ and in $H_s \in \cD_s$.


In the remaining part of the proof we will use the shorthand
$e_{n}\equiv e_{n}(H_s)$ and (abusing notation) $E_{n}(\lambda)\equiv
E_{n}(H_s-\lambda)$. Now, we prove \eqref{eqn:52b} and
\eqref{eqn:52dd} with $n=j$. We begin with some preliminary
estimates. Let $H \in V_{j-1}$. For $1 \leq n \leq j$ denote
\begin{equation}
\Delta_n E(\lambda):=E_{n}(\lambda)-\rho^{-1}E_{n-1}(\lambda).
\label{eqn:53}
\end{equation}
Since $\cR_\rho^{n}(H)= \cR_\rho\big(\cR_\rho^{n-1}(H)\big)$, we
have, by Theorem \ref{thm-III-2-5}, that $|\Delta_n E(\lambda)| \leq
\alpha_n$. This and the analyticity of $\Delta_n E(\lambda)$ in
$D(e_{n-1},\frac{1}{12}  \rho^{n+1})$
together with the Cauchy formula imply that
\begin{equation}
|\partial_{\lambda}^m \Delta_n E(\lambda)|\le\alpha_n (\frac{1}{12}
\rho^{n+1})^{-m}\ for\ n\le j\ \mbox{and}\ m= 0,1.
\label{eqn:54}
\end{equation}
Iterating \eqref{eqn:53} we find for $i \leq j$
\begin{equation}
E_{i}(\lambda)=\rho^{-i}(E_{0 i}(\lambda)-\lambda), \label{eqn:55}
\end{equation}
where
\begin{equation}
E_{0 i}(\lambda):=\sum_{n=1}^i\rho^n\Delta_n E(\lambda).
\label{eqn:56}
\end{equation}
By the estimate  \eqref{eqn:54} with $m=1$
we have for $i \le j$
\begin{equation*}
|\partial_\lambda E_{0 i}(\lambda)|\le
\sum_{n=1}^i\rho^n|\partial_\lambda\Delta_n E(\lambda)|\le
c\sum_{n=1}^i c^{2n-1}\rho^{2\mu(n-1)-2}\gamma_0^2,
\end{equation*}
which, by the conditions on the parameters, \eqref{rho}, implies

\begin{equation}|\partial_\lambda E_{0 i}(\lambda)|\le
c\rho^{-2} \gamma_0^2 \le\frac{1}{5}\label{eqn:57}
\end{equation}
for $0 < i \leq j$.

Now, we are ready to
show the existence and properties of $e_{j}$, stated in
\eqref{eqn:52b} and \eqref{eqn:52dd} with $n=j$, i.e. to show that
$E_{j}(\lambda)$ has a unique zero, $e_{j}$, in every disc
$D(e_{j-1},r \rho^j)$  with $2\alpha_{j} \leq r \leq
\frac{1}{12}\rho$.
The latter is equivalent to showing that $e_{j}$ is a fixed point
of the map $\lambda \rightarrow E_{0 j}(\lambda)$ in the discs
$D(e_{j-1},r \rho^{j})$. Using the equations $ e_{j-1}=E_{0
j-1}(e_{j-1})$
and \eqref{eqn:56} with $i=j-1, j$ and using the triangle inequality
we obtain
$$|E_{0
j}(\lambda)-e_{j-1}| \leq \rho^{j}|\Delta_{j} E(\lambda)|+ |E_{0
j-1}(\lambda)-E_{0 j-1}(e_{j-1})|.$$ Now, remembering the estimate
\eqref{eqn:54} (with $m=0$ and $n=j$) and the estimate
\eqref{eqn:57} (with $i=j-1$) and using the mean-value theorem we
arrive at the inequality
\begin{equation} \label{17}|E_{0
j}(\lambda)-e_{j-1}| \leq
\rho^{j}\alpha_j + \frac{1}{5}|\lambda-e_{j-1}|, \end{equation} and
therefore, $|E_{0 j}(\lambda)-e_{j-1}| \leq r
\rho^{j}$, provided $|\lambda-e_{j-1}| \leq r\rho^{j} $ (remember
that $\alpha_j \le \alpha_0 \ll \rho \ll 1$). This inequality
together with Eqn \eqref{eqn:57} with $i=j$
implies that the map $\lambda \rightarrow E_{0 j}(\lambda)$ has a
unique fixed point, $e_{j}$, in the disc $D(e_{j-1},r \rho^{j})$.
For $r=  \frac{1}{12}\rho$ this gives \eqref{eqn:52b} with $n=j$.
Taking $r=2\alpha_{j}$ we arrive at \eqref{eqn:52dd} with $n=j$.

If $H$ is self-adjoint, then so is the operator $\cR_{\rho}(H)$,
and, consequently, $\cR_{\rho}^j(H)= \cR_{\rho}^j(H)^*$. Hence
$E_{j}(\lambda)$ and $e_{j}$ are real in this case.

Next, we show the first inclusion in \eqref{eqn:52cc} for $n=j$. Let
$H \in V_{j}$ and hence $|\lambda - e_{j-1}|\leq \frac{1}{12}
\rho^{j+1}$. Then, by the induction assumption \eqref{eqn:52dd} for
$n=j-1$, we have that $|\lambda - e_{j-2}|\leq\frac{1}{12}
\rho^{j+1}+ 2\alpha_{j-1} \rho^{j-1}\leq \frac{1}{12} \rho^{j}$ and
therefore
$H \in V_{j-1}$, as claimed.

We proceed to show the second inclusion in \eqref{eqn:52cc} for
$n=j$.
Let $H \in V_{j}$ and keep the notation as above. Since
$E_{j-1}(e_{j-1})=0$, we have that $|E_{j}(\lambda)|\le|\Delta_{j}
E(\lambda)| + \rho^{-1}|E_{j-1}(\lambda)-E_{j-1}(e_{j-1})|$ which by
\eqref{eqn:54}, \eqref{eqn:55} and \eqref{eqn:57} with $i=j-1$ gives
$|E_{j}(\lambda)| \leq \alpha_{j}+
\frac{6}{5}\rho^{-j}|\lambda-e_{j-1}|.$
Hence, since $\alpha_{j} \le \alpha_0$ and by \eqref{rho},
\begin{equation}
|E_{j}(\lambda)|
\le\frac{1}{8}\rho, \label{eqn:58}
\end{equation}
provided $|\lambda - e_{j-1}|\leq \frac{1}{12} \rho^{j+1}$. Thus,
using Theorem \ref{thm-III-2-5} and \eqref{eqn:58} we conclude that,
for $n:=j$,
\begin{equation}
\cR_\rho^{n}(V_{n}) \subset \cD^{\mu, 1}(\rho/8,\beta_n,\gamma_n)
\label{eqn:51a}
\end{equation}
with the numbers $\beta_n$ and $\gamma_n$ given inductively by
$\beta_n=\beta_{n-1}+3C_\chi\frac{\gamma_{n-1}^2}{2\rho}\
\mbox{and}\ \gamma_n=256 C_\chi^2 \rho^\mu\gamma_{n-1}$
and in final form, in the paragraph preceding Theorem VI.1.
Clearly, $\beta_n$, $\gamma_n\le\frac{\rho}{8}$.  E.g.
$\beta_n\le\beta_0+c\frac{\gamma_0^2}{\rho}\lb
1-(c\rho^\mu)^2\rb^{-1}<\frac{\rho}{8}$. Hence, by Lemma
\ref{lem-III-2-2}, $\cR_\rho^{j}(V_{j}) \subset D(\cR_\rho)$.
Thus \eqref{eqn:52cc} is proven for $n=j$.
\end{proof}

\begin{proof}[Proof of Theorem \ref{stable-manif}.]
By \eqref{eqn:52dd}, the
limit $e(H_s):=\lim_{j\rightarrow \infty} e_{j}(H_s)$ exists
pointwise for $H\in \cD$. Iterating Eqn \eqref{eqn:52dd} we find
the estimate
\begin{equation}
|e_{n}(H_s)-e(H_s)| \leq 3\alpha_{n+1} \rho^{n+1}. \label{V.20}
\end{equation}
Given that $\alpha_0 \le \frac{\rho}{108}$ (this is a condition on
the (bare) coupling constant $g$), this inequality implies that
\begin{equation}
V_{n} \subset U_{\delta_n} \subset V_{n-1}. \label{eqn:62}
\end{equation}
where  $\delta_n:=\frac{1}{18}\rho^{n}$.
%
%
%

To prove the analyticity of $e(H_s)$ we note that, since
$E_{j}(\lambda, H_s)$ is analytic in $H_s \in \cD_s$, then so is
$e_{j}(H_s)$. By \eqref{eqn:52dd} the limit
$e(H_s):=\lim_{j\rightarrow \infty} e_{j}(H_s)$ is also analytic in
$H_s \in \cD_s$.

Eqns \eqref{eqn:52cc}
and \eqref{eqn:62} imply the first part of \eqref{eqn:30aaa}. The
second part of \eqref{eqn:30aaa} follows from  Theorem
\ref{thm-III-2-5} and \eqref{eqn:58}.

Now we prove the last statement of Theorem \ref{stable-manif}.
%
%
%
%
%
Let $H\in U_{\delta_n}\subset V_{n} \subset D(\cR_\rho^{n+1})$.
According to \eqref{Hsplit}, $H^{(n)}:=\cR_{\rho}^{n}(H)$ can be
written as
\begin{equation} \label{Hnsplit} H^{(n)}=E_{n}\one + T_{n} +
W_{n},
\end{equation}
where $ T_{n}\equiv T_{n}(H_f)$ with $ T_{n}(r) \in C^1$ and $
T_{n}(0)= 0$. Hence the function $\tau_n(r):=T_{n}(r)/r$  is well
defined. By \eqref{eqn:51a}  we have $|\partial_r T_{n}(r)-1| \leq
\beta_n$ and $\|W_{n}\|_{\cW_{op}^s} \leq \gamma_n$. This gives the
desired estimates for the last two terms in \eqref{eqn:30b}. Let
$E_{n}(\lambda)\equiv E_{n}$ for $\lambda:=-H_u$. To prove the bound
on the first term on the r.h.s. of \eqref{eqn:30b} we
use the relation $E_{n}(e_{n}) =0$ and Eqns \eqref{eqn:55} and
\eqref{eqn:57} to obtain
%
%
\begin{equation} \label{21}
|E_{n}(\lambda)| = |E_{n}(\lambda)-E_{n}(e_{n})|
\leq \frac{6}{5}\rho^{-n}| \lambda - e_{n}|.
\end{equation}
This inequality together with \eqref{V.20} implies,
$|E_{n}(\lambda)|\leq \frac{6}{5} \nu_n +
\frac{18}{5}\alpha_{n+1}\rho \le 2\nu_n, $ provided $|\lambda-e|
\leq \nu_n \rho^n $ and $4 \alpha_n \leq \nu_n $. Finally, if $H$
is self-adjoint, then so is $\cR_{\rho}^{n}(H)$ and therefore
$E_n$ and $\tau_n(r):=T_{n}(r)/r$ are real.

To complete the proof of Theorem \ref{stable-manif} it remains to
show that  as $n \rightarrow \infty$, the functions $\tau_n(r) $
converge in $L^\infty$ to a constant function, $\tau$, as $n
\rightarrow \infty$.
To prove this property requires representing the operators $T^{(n)}$
as sums of the $j$-th step corrections,
\begin{equation}
\Delta_n T(r):=T_{n}(r)-\rho^{-1}T_{n-1}(\rho r), \label{eqn:Tcorr}
\end{equation}similarly to \eqref{eqn:55} and \eqref{eqn:56}. In fact,
this analysis gives that $\tau= \lim_{n \rightarrow \infty}
\tau_n(0)$. We omit the details here but refer the reader to
\cite{BachChenFroehlichSigal2003}.
\end{proof}

\begin{proof}[Proof of Theorem \ref{thm-V.2}.]
It is shown in Appendix II (Section \ref{sec-VII}), Theorem
\ref{thm-XI.1}, that $e(H_s)$ is an eigenvalue of $H_s$ (cf.
\cite{BachFroehlichSigal1998a,BachFroehlichSigal1998b,BachChenFroehlichSigal2003}).
Here we show the second statement of the theorem regarding the
spectrum of $H_s$. As above, we omit the reference to $H_s$ and set
$e\equiv e(H_s)$ and $e_{n}\equiv e_{n}(H_s)$.

We first consider the case of a  self-adjoint operator $H_s$.
%
Let $H^{(n)}(\lambda):=\cR_{\rho}^{n}(H_s-\lambda)$
and, recall, $E_{n}(\lambda):= H^{(n)}(\lambda)_u$.
Eqns \eqref{eqn:55} and \eqref{eqn:57} imply the estimate
$\partial_\lambda E_{n}(\lambda) \le -\frac{4}{5}\rho^{-n}$. Using
the equation $E_{n}(e_{n})=0$, the mean value theorem and the
estimate above, we obtain that $E_{n}(\lambda) \ge
-\frac{4}{5}\rho^{-n}(\lambda -e_{n})$, provided $\lambda\le e_{n}$.
Hence, if $\lambda\le e_{n} - \theta_n$, with $\theta_n \gg \gamma_n
\rho^n$ and $\theta_n \rightarrow 0$ as $n \rightarrow \infty$, then
$H^{(n)}(\lambda)\ge \frac{4}{5}\rho^{-n}\theta_n - O(\gamma_n) \ge
\frac{1}{2}\gamma_n$. This implies $0 \in \rho(H^{(n)}(\lambda))$
and therefore, by Theorem \ref{thm-II-1}, $0 \in \rho(H_s -
\lambda)$ or $\lambda \in \rho(H_s)$.
%
%
Since $e_{n} \rightarrow e$ and $ \theta_n \rightarrow 0$ as $n
\rightarrow \infty$, this implies that $\sigma(H_s) \subset [e,
\infty)$, which is the second statement of the theorem for
self-adjoint operators.

Now we consider a non-self-adjoint operator $H_s$. For all $n \ge
0$, we have shown that if $H_s \in \cD_s,\ e=e(H_s)$ and if $\mid
\lambda - e \mid \le \delta_n$, where $\delta_n = \nu_n \rho^n$,
then $H_s - \lambda \in \dom(\cR_{\rho}^{n})$ and $ H^{(n)}(\lambda)
: = \cR_{\rho}^{n} (H_s - \lambda) \in D^{\mu, 1} (\rho  \slash8,
\beta_n, \gamma_n)$. By Theorem \ref{thm-II-1} we have that
\begin{equation} \label{V.26}
\lambda \in \sigma (H_s)\quad \Leftrightarrow\quad 0 \in \sigma (H^{(n)}
(\lambda)),
\end{equation}
if $\mid \lambda - e \mid \le \delta_n$. By Theorem
\ref{stable-manif}, we can decompose
\begin{equation} \label{V.27}
H^{(n)}(\lambda) = E_n (\lambda) + \tau_n (H_f, \lambda) H_f + W_n
(\lambda),
\end{equation}
with $\|W_n (\lambda) \| \le \gamma_n$  on $\hbox{Ran} \chi_{H_f \le
\rho}$.  Hence
\begin{equation} \label{V.28}
0 \in \sigma \big(H^{(n)} (\lambda)\big) \Rightarrow\ \exists r \in
[0, \rho]:\ |E_n (\lambda) + \tau_n (r, \lambda) r| \le \gamma_n.
\end{equation}
Using that $E_n(e_n) =0$ ($e_n\equiv e_n(H_s))$ and 
the integral of derivative formula we find
\begin{equation} \label{V.29}
E_n (\lambda) = (\lambda - e_n)g(\lambda)
\end{equation}
with $g(\lambda):= \int_0^1 E'_n (e_n + s(\lambda -e_n))ds$. Note
that $ \bar{\lambda}: =e_n + s(\lambda -e_n)$ satisfies $\mid
\bar{\lambda} - e \mid \le \delta_n$ for $0 \le s \le 1$. This and
\eqref{eqn:55}, \eqref{eqn:57} and $\rho^{-1} \gamma_0 \ll 1$ imply
that
\begin{equation} \label{V.31}
\mid  g (\lambda) + \rho^{-n} \mid\ \le \frac{1}{5}\rho^{-n}.
\end{equation}
In addition, below we use the estimate \eqref{V.20} which we rewrite
as:
\begin{equation} \label{V.31'}
 \mid e_n - e \mid \le 3 \alpha_{n+1} \rho^{n+1}.
\end{equation}

We denote $\mu : = \lambda - e$ so that
\begin{equation} \label{V.30}
E_n (\lambda) = g(\lambda) (\mu + e - e_n).
\end{equation}
We consider separately two cases.
%

a) $\rRe \mu \le - \theta\ \hbox{and} \mid \hbox{Im} \mu \mid \le 3
\theta$ with $\theta \ge 36 \alpha_{n+1} \rho^{n+1}$. Using
\begin{equation*}
\rRe (E_n +  \tau_n r) = \rRe g \rRe \mu - \rIm g \rIm \mu + \rRe (g
(e - e_n)) + \rRe \tau_n  r
\end{equation*}
and using \eqref{V.31}, we obtain
\begin{equation*}
\rRe (E_n + \tau_n r) \ge \frac{4}{5} \rho^{-n} \theta - \frac{3}{5}
\rho^{-n} \theta
\end{equation*}
$$- \frac{6}{5} \rho^{-n} 3 \alpha_{n+1} \rho^{n+1}
+(1 - \beta_n) r.$$ Since $\theta \ge 36 \alpha_{n+1} \rho^{n+1}$
this gives
\begin{equation} \label{V.32}
\rRe(E_n+\tau_nr)\ge \frac{1}{10} \rho^{-n} \theta+(1-\beta_n)r.
\end{equation}

b) $|\hbox{Im} \mu| \ge \theta$ and $|\hbox{Re} \mu| \le 3 \theta$.
If $r\le 10 \theta \rho^{-n}$, then
\begin{eqnarray*}
| \rIm ( E_n+\tau_n r)| &=& \rRe  g \rIm \mu +\rIm g \rRe \mu+
\rIm(E'_n(e-e_n))\\
&+& \rIm \tau_n r| \ge \frac{4}{5} \rho^{-n}\theta -\frac{3}{5}
\rho^{-n}\theta\\
&-& \frac{6}{5}\rho^{-n} 3\ \alpha_{n+1} \rho^{n+1} - \beta_n
10\theta \rho^{-n}.
\end{eqnarray*}
This gives
\begin{equation} \label{V.33}
| \rIm (E_n + \tau_n r)| \ge \frac{1}{10} \theta \rho^{-n},
\end{equation}
provided $\theta \ge 72 \alpha_{n+1}\rho^{n+1}$ and $\beta_n\le
10^{-3}.$

Now, if $r \ge 10\theta \rho^{-n}$, then we estimate by \eqref{V.30}
and \eqref{V.31}
\begin{equation} \label{V.34}
| E_n+\tau_n r| \ge | g \mu +\tau_n r| -\frac{6}{5} \rho^{-n}
3\alpha_{n+1} \rho^{n+1}.
\end{equation}
Furthermore, we have
\begin{eqnarray*}
|g\mu +\tau_n r|^2 &=&(\rRe g \rRe \mu -\rIm g \rIm \mu +
\rRe \tau_n r)^2\\
&+&(\rRe g \rIm \mu +\rIm g \rRe \mu + \rIm \tau_n r)^2\\
&\ge& (\rRe g \rRe \mu +\rRe \tau_n r)^2 - (\frac{1}{5} \rho^{-n}
\rIm \mu)^2\\ &+& \frac{4}{5} \rho^{-n} | \rIm \mu| -  \frac{3}{5}
\rho^{-n} | \rIm \mu|^2 - (\beta_n r)^2.
\end{eqnarray*}
Since $\rRe g \rRe \mu +  \frac{1}{2} \rRe \tau_n r \ge 0$, we have
$| g \mu + \tau_n r |^2 \ge (\frac{1}{2} \rRe \tau_n r)^2 - (\beta_n
r)^2$
which gives, for $\theta \le 10^{-3}$,
\begin{align*}
|g \mu + \tau_n r| &\ge \frac{1}{2} (1 - 2 \beta_n)
r \\
&\ge 5 (1-2 \beta_n) \theta \rho^{-n} \ge 2 \theta \rho^{-n}.
\end{align*}
This together with \eqref{V.34} yields
$\mid E_n + \tau_n r \mid \ge \theta \rho^{-n},$
provided $\theta \ge 4 \alpha_{n+1} \rho^{n+1}$.  This together with
\eqref{V.33} gives for the case b)
\begin{equation} \label{V.36}
| E_n + \tau_n r |\ \ge \frac{1}{10}  \theta \rho^{-n},
\end{equation}
provided $\theta \ge 72 \alpha_{n+1} \rho^{n+1}\ \hbox{and}\ \theta
\le 10^{-3}$.

The inequalities \eqref{V.32} and \eqref{V.36} and relations
\eqref{V.26} and \eqref{V.28} show that $\lambda \in \rho (H_s)$ if
either $\rRe \mu \le - \theta$ and $|\rIm \mu | \le 3 \theta$
or $\rIm \mu \ge \theta$ and $| \rRe\mu | \le 3 \theta$
with $\mu : = \lambda - e$, provided
\begin{equation} \label{V.37}
\theta \ge \max(20 \rho^n \gamma_n, 72 \alpha_{n+1} \rho^{n+1})\
\hbox{and}\ \beta_n \le 10^{-3}.
\end{equation}
This can be written as
\begin{equation} \label{V.36'}
\Omega^{(1)}_{\theta}, \Omega^{(2)}_{\theta} \subset \rho (H_s)
\end{equation}
where $\theta$ satisfies \eqref{V.37} and
$$
\Omega^{(1)}_{\theta} : = \{\lambda \in \mathbb{C} \mid \rRe \mu \le - \theta\ \hbox{and}\
\mid\rIm \mu\mid \le 3 \theta\}
$$
and
$$\Omega^{(2)}_{\theta} : = \{ \lambda
\in \mathbb{C} \mid \mid \rIm \mu \mid \ge \theta\ \hbox{and}\ \mid
\rRe \mu \mid \le 3 \theta\}.$$ 

Define the new subset
$\Omega^{(3)}_{\theta} : = \{\lambda \in \mathbb{C} \mid \rRe \mu
\le - \theta\}.$
We claim that \begin{equation} \label{V.37'}\Omega^{(3)}_{\theta}
\subset \cup _{n=0}^\infty (\Omega^{(1)}_{3^{n}\theta} \cup
\Omega^{(2)}_{3^{n+1} \theta} ).\end{equation}
 Indeed,
$\Omega^{(3)}_{\theta} /(\Omega^{(1)}_{\theta} \cup \Omega^{(2)}_{3
\theta} )\bigcap\Omega^{(3)}_{\theta} \subset \{\lambda \in
\mathbb{C} \mid \rRe \mu \le - 3\theta,\ |\rIm \mu | \ge 3\theta \}
\subset \Omega^{(3)}_{3\theta}$ and therefore $\Omega^{(3)}_{\theta}
\subset \Omega^{(1)}_{\theta} \cup \Omega^{(2)}_{3 \theta}\cup
\Omega^{(3)}_{3\theta}.$ Iterating the last inclusion we arrive at
the desired relation. Eqns \eqref{V.36'} and \eqref{V.37'} imply
that
\begin{equation*}
\Omega^{(2)}_{\theta} \cup \Omega^{(3)}_{\theta} \subset \rho (H_s).
\end{equation*}
for any $\theta$ satisfying \eqref{V.37}.

Now assume $\lambda \notin e + S$, where $S$ is defined in
\eqref{eqn:S}. Then either $\rRe \mu < 0$ or $\rRe \mu \ge 0$ and
$|\rIm \mu|
> \frac{1}{3}  \rRe \mu$.  In the first case $\exists n$ s.t. $\rRe
\mu < - \theta_n$ where $\theta_n : = \max (20 \rho^n \delta_n, 72
\delta_{n+1} \rho^{n+1}),$ and therefore $\lambda \in
\Omega^{(3)}_{\theta_n} \subset \rho (H_s)$.  In the second case,
assuming $\mu > 0$, we choose $n$ s.t. $\rRe \mu \approx 3
\theta_n$. Then $| \rIm \mu |\ \ge \theta_n$ and $| \rRe \mu |\
\le 3 \theta_n$ so that $\lambda \in \Omega^{(2)}_{\theta_n}
\subset \rho (H_s)$.  Hence $\mathbb{C} \slash \{ e + S\} \subset
\rho (H_s)$ which implies $\sigma(H_s) \subset e(H_s) +S$.
\end{proof}

\begin{remark} \label{rem-2} Define $E_{0\infty}(e(H_s), H_s):= \lim_{j\rightarrow \infty} E_{0j}(e(H_s),
H_s)$. Then
\begin{equation}
E_{0\infty}(e(H_s), H_s)= \sum_{i=1}^\infty\rho^i\Delta_i E(e(H_s),
H_s), \label{eqn:60}
\end{equation}
where the series on the right hand side converges absolutely by
estimate \eqref{eqn:54}, and $e(H_s)$ satisfies the relation
\begin{equation}
e(H_s)=E_{0\infty}(e(H_s), H_s). \label{eqn:61}
\end{equation}
Indeed, Eqns \eqref{eqn:55} and \eqref{eqn:58} yield $|E_{0
j}(\lambda, H_s)-\lambda|
\leq \frac{1}{8} \rho^j,$ provided $|\lambda - e_{j-1}(H_s)| \leq
\frac{1}{12} \rho^{j+1}$, which together with \eqref{V.20}
implies \eqref{eqn:61}.

Eqns \eqref{eqn:60} - \eqref{eqn:61}, \eqref{eqn:55} and
\eqref{eqn:54} (with $m=0$) imply that
\begin{align} \nonumber
|E_{0 n}(\lambda)-e|&\le|E_{0 n}(\lambda)-E_{0 n}(e)|+|E_{0
n}(e)-E_{0\infty}(e)|\\
&\le \sup_{\lambda\in A_{\delta_n}
}(|E_{0 n}'(\lambda)|)|\lambda-e|+\sum_{i=n+1}^\infty
\rho^{i}\alpha_i.
\end{align}
Now, using Eqn \eqref{eqn:57} and the definition of $\alpha_i$ we
obtain, furthermore, that
\begin{equation}
|E_{0 n}(\lambda)-e| \le\frac{1}{5}|\lambda-e|+
(1-\rho)^{-1}\rho^{n+1} \alpha_{n+1}.
\end{equation}
This estimate is used in our further work,
\cite{FroehlichGriesemerSigal2008b}.
\end{remark}

%

\secct{Appendix I:
Proof of Theorem \ref{thm-III-2-5}} \label{sec-VIII}
%


%
The proof below is similar to and relies on some parts of the
corresponding proof in \cite{BachChenFroehlichSigal2003}. We proceed
in two steps. First we determine $\huw$ s.t. $H(\huw) =:
\cR_\rho(H(\uw))$. In fact, we find explicit formulae expressing
$\huw$ in terms of $\uw$. Then, using these formulae, we estimate
$\huw$.
%

Let $H(\uw)  \  \in \cD^{\mu,0}(\rho/8, 1/8, \rho/8)$. We write this
operator as $H(\uw)  \ = \   H_0 \; + \; W$ where $H_0:=E+T$.
According to the definition (Eqns \eqref{II-3}) and \eqref{IV.2}) of
the smooth Feshbach map, $F_{\rho}$, we have that
\begin{eqnarray} \label{eq-III-2-18}
\lefteqn{ F_{\rho} \big( H(\uw) \big) \ = \   H_0 \; + \; \chi_\rho
W \chi_\rho }
\\[1mm] \nonumber & & \hspace{-3mm}
\; - \; \chi_\rho \, W \, \bchi_\rho \big(H_0 + \, \bchi_\rho W
\bchi_\rho \big)^{-1} \bchi_\rho \, W \, \chi_\rho \period
\end{eqnarray}
%
%
%
Here, recall, the cut-off operators $\chi_\rho \equiv \chi_{H_f \le
\rho}$ are defined in Section \ref{sec-III} and $\bchi_\rho:= 1 -
\chi_\rho$. Note that, because of $H(\uw) \in \cD^{\mu,0}(\rho/8,
1/8, \rho/8)$
and of \eqref{eq-III-1-25.1}
%
\begin{equation} \label{eq-III-2-19}
\big\|  H_0^{-1} \bchi_\rho^2 \big\| \ \leq \ \frac{2}{\rho}
\hspace{5mm} \mbox{and} \hspace{5mm} \| W \| \ \leq \ \frac{\xi \,
\rho}{8} \period
\end{equation}
Eq.~(\ref{eq-III-2-19}) implies that the Neumann series expansion in
$W_{\bchi_\rho}:= \bchi_\rho W \bchi_\rho$ of the resolvent   in
(\ref{eq-III-2-18}) is norm convergent and yields
\begin{equation} \label{eq-III-2-20}
 F_{\rho} \big( H(\uw) \big) \ = \ H_0  + \sum_{L=1}^\infty
(-1)^{L-1} \, \chi_\rho W \bigg(  H_0^{-1}\bchi_\rho^2 \; W
\bigg)^{L-1} \chi_\rho \period
\end{equation}

To write the Neumann series on the right side of (\ref{eq-III-2-20})
in the generalized normal form we use Wick's theorem,
which we formulate now. We begin with some notation. We introduce
the operator families
\begin{eqnarray} \label{VIII.4}
\lefteqn{ W_{p,q}^{m,n} \big[ \uw \big| r ; \, k_{(m,n)} \big] \ :=
\chi_1 \;\int_{B_1^{p+q}} \frac{ dx_{(p,q)} }{ |x_{(p,q)}|^{1/2} }
\; a^*( x_{(p)} ) \, \times }
\\ \nonumber & & \hspace{-7mm}
 w_{m+p,n+q} \big[ \hf + r ; \, k_{(m)}, x_{(p)} \, , \,
                       \tk_{(n)}, \tx_{(q)} \big] \, a( \tx_{(q)} )
\: \chi_1 \comma
\end{eqnarray}
for $m+n \geq 0$ and a.e $k_{(m,n)} \in B_1^{m+n}$. Here we use the
notation for $x_{(p,q)}$, $x_{(p)}$, $\tx_{(q)}$, etc. similar to
the one introduced in Eqs.~(\ref{III.2})--(\ref{III.4}). For $m=0$
and/or $n=0$, the variables $k_{(0)}$ and/or $\tk_{(0)}$ are dropped
out.
%
Denote by $S_m$ the group of permutations of $m$ elements. Define
the symmetrization operation as
\begin{eqnarray} \label{eq-III-2-24}
\lefteqn{ w_{m,n}^{(\sym)}[ r ; \, k_{(m,n)} ] }
\\ \nonumber & := &
\frac{1}{m! \, n!} \sum_{\pi \in S_m} \sum_{\tpi \in S_n} w_{m,n} [
r ; \, k_{\pi(1)}, \ldots, k_{\pi(m)} \, ; \, \tk_{\tpi(1)}, \ldots,
\tk_{\tpi(n)} \, ] .
\end{eqnarray}
Finally, below we will use the notation
%
%
%
\begin{eqnarray}  \label{eq-III-1-6.2} && \Sigma[ k_{(m)} ]  :=  |k_1| + \ldots +
|k_m| ,\\ \label{eq-III-2-25.2} &&  k_{(M,N)} =
(k^{(1)}_{(m_1,n_1)}, \ldots, k^{(L)}_{(m_L,n_L)}) \comma
\hspace{5mm} k^{(\ell)}_{(m_\ell,n_\ell)} \ \; = \ \;
(k^{(\ell)}_{(m_\ell)}, \tk^{(\ell)}_{(n_\ell)}) \comma
\\ \label{eq-III-2-26}
&& r_\ell :=  \Sigma[\tk^{(1)}_{(n_1)}]  + \ldots +
\Sigma[\tk^{(\ell-1)}_{(n_{\ell-1})}] \, + \,
\Sigma[k^{(\ell+1)}_{(m_{\ell+1})}] + \ldots +
\Sigma[k^{(L)}_{(m_L)}] \comma \hspace{11mm}
\\  \label{eq-III-2-27}
&& \tr_\ell  :=  \Sigma[\tk^{(1)}_{(n_1)}]  + \ldots +
\Sigma[\tk^{(\ell)}_{(n_{\ell})}] \, + \,
\Sigma[k^{(\ell+1)}_{(m_{\ell+1})}]  + \ldots +
\Sigma[k^{(L)}_{(m_L)}],
\end{eqnarray}
with $r_\ell =0$ if $n_1 =\ldots n_{\ell-1} = m_{\ell+1} = \ldots
m_L =0$ and similarly for $\tr_\ell$ and $m_1 + \ldots + m_L = M,\
n_1 + \ldots + n_L = N$.
%
%
%
\begin{theorem}[Wick Ordering] \label{thm-III-2-3}
Let $\uw = (w_{m,n})_{m+n \geq 1} \in \cW_{1}^s$ and $F_j\equiv
F_j(H_f), j= 0 \ldots L,$ where the functions $F_j(r)$ are $C^s$ and
are bounded together with their derivatives.
Write $W := \sum_{m+n \geq 1} W_{m,n}$ with $W_{m,n} :=
W_{m,n}[w_{m,n}]$. Then
\begin{equation} \label{eq-III-2-23}
F_0 \, W \, F_1 \, W \cdots W \, F_{L-1} \, W \, F_L \ = \ \tW
\comma
\end{equation}
where $\tW:=\tW[ \tuw ],\ \tuw := (\tw_{M,N}^{(\sym)})_{M+N \geq 0}
$ with $\tw_{M,N}^{(\sym)}$ given by the symmetrization w.~r.~t.\
$k_{(M)}$ and $\tk_{(N)}$, of the coupling functions
\begin{eqnarray}  \nonumber
\lefteqn{ \tw_{M,N}[ r ; \, k_{(M,N)} ]  =  \sum_{m_1 + \ldots + m_L
= M, \atop n_1 + \ldots + n_L = N} \sum_{p_1, q_1, \ldots, p_L, q_L:
\atop m_\ell + p_\ell + n_\ell + q_\ell \geq 1} \prod_{\ell = 1}^L
\bigg\{ {m_\ell + p_\ell \choose p_\ell} {n_\ell + q_\ell \choose
q_\ell} \bigg\}}
\\ \nonumber
\\ \nonumber
& \hspace{-6mm} & F_0[r+\tr_0] \, \bigg\la \Om \bigg| \, \tW_1 \big[
r+r_1 ; \, k^{(1)}_{(m_1, n_1)} \big] \; F_1[\hf+r+\tr_1] \; \tW_2
\big[ r+r_2 ; \, k^{(2)}_{(m_2, n_2)} \big]
\\  \label{eq-III-2-25}
& \hspace{-6mm} &  \cdots F_{L-1}[\hf+r+\tr_{L-1}] \; \tW_L \big[
r+r_L ; \, k^{(L)}_{(m_L, n_L)} \big] \; \Om \bigg\ra
F_L[r+\tr_L]\comma
\end{eqnarray}
with
%
\begin{eqnarray} \label{VIII.12}
\tW_\ell \big[ r ; \, k_{(m_\ell, n_\ell)} \big] & := &
W_{p_\ell,q_\ell}^{m_\ell,n_\ell} [ \uw \big| \: r ; \, k_{(m_\ell,
n_\ell)} ].
\end{eqnarray}
\end{theorem}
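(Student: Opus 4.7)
The plan is to establish the identity by first expanding the product into a multisum using $W=\sum_{m+n\ge 1}W_{m,n}$, then commuting all factors of $F_\ell(\hf)$ past the creation and annihilation operators via the pull-through formulas $a(k)\,f(\hf)=f(\hf+|k|)\,a(k)$ and $a^*(k)\,f(\hf)=f(\hf-|k|)\,a^*(k)$, and finally bringing the resulting product into generalized normal form using the canonical commutation relations $[a(k),a^*(k')]=\delta(k-k')$.

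Concretely, after expansion I am confronted with a sum over $L$-tuples $((M_1,N_1),\ldots,(M_L,N_L))$ with $M_\ell+N_\ell\ge1$ of the expressions
$$F_0(\hf)\,W_{M_1,N_1}\,F_1(\hf)\,W_{M_2,N_2}\cdots W_{M_L,N_L}\,F_L(\hf).$$
In each such term I would split the creation (resp.\ annihilation) operators inside the $\ell$-th vertex into ``external'' ones carrying momenta $k^{(\ell)}_{(m_\ell)}$ (resp.\ $\tk^{(\ell)}_{(n_\ell)}$) and ``internal'' ones carrying momenta $x^{(\ell)}_{(p_\ell)}$ (resp.\ $\tx^{(\ell)}_{(q_\ell)}$), with $M_\ell=m_\ell+p_\ell$ and $N_\ell=n_\ell+q_\ell$. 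Since $w_{M_\ell,N_\ell}$ is totally symmetric in each of its two momentum groups, all $\binom{m_\ell+p_\ell}{p_\ell}\binom{n_\ell+q_\ell}{q_\ell}$ ways of making the split give equal contributions, accounting for the binomial coefficients in \eqref{eq-III-2-25}.

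Next I would use the pull-through formulas to move all external $a^*$'s to the far left of the product and all external $a$'s to the far right, at the cost of shifting the arguments of every intermediate $F_j(\hf)$ by the total energies of the operators pulled past. A direct counting shows that these shifts are exactly $r+r_\ell$ and $r+\tr_\ell$ as defined in \eqref{eq-III-2-26}--\eqref{eq-III-2-27}, where $r$ will be identified with the variable appearing in $\tw_{M,N}[r;k_{(M,N)}]$. What remains between the external operators is a product of internal vertices $W^{m_\ell,n_\ell}_{p_\ell,q_\ell}[\uw|\cdot]$ alternating with shifted $F_\ell$'s; normal-ordering it using the CCR and noting that only fully paired contractions survive against $\Om$ on both sides produces exactly the vacuum expectation $\la\Om|\tW_1\,F_1[\hf+\cdots]\,\tW_2\cdots\tW_L|\Om\ra$ in \eqref{eq-III-2-25}.

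Finally, the resulting expression attaches external momenta $k_{(M)}$, $\tk_{(N)}$ to specific vertices according to the chosen split, whereas the kernel $\tw_{M,N}$ of a generalized Wick operator as defined in \eqref{III.1} is assumed totally symmetric in the two groups $k_{(M)}$ and $\tk_{(N)}$. Since the measure and the operator-product $a^*(k_{(M)})\cdots a(\tk_{(N)})$ in \eqref{III.1} are themselves $S_M\times S_N$-symmetric, averaging over $S_M\times S_N$ replaces $\tw_{M,N}$ by $\tw_{M,N}^{(\sym)}$ without changing the operator, yielding the claimed identity. The hard part will be purely combinatorial bookkeeping: tracking exactly which momenta are pulled past which $F_j$ to confirm the argument shifts $r_\ell,\tr_\ell$, and checking that the external/internal count agrees at every vertex. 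I would handle this by induction on $L$, with the step $L\mapsto L+1$ reducing to a single application of the pull-through formula together with one Wick contraction; the analytic content — convergence of the resulting sums and norm estimates on $\tw_{M,N}^{(\sym)}$ — is then routine via Theorem \ref{thm-III.1}.
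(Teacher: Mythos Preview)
Your proposal is correct and follows essentially the same route as the paper's own sketch: expand $W$ into monomials, separate at each vertex the ``external'' creation/annihilation operators (those that reach the extreme left/right) from the ``internal'' ones (those that contract), use the pull-through formulas to produce the argument shifts $r_\ell,\tr_\ell$, and then recognize that the sum over all internal contractions is precisely the vacuum expectation $\la\Om|\cdots|\Om\ra$. The paper makes the same point you do about the combinatorics --- the binomials come from the symmetry of the coupling functions, and packaging the internal contractions as a vacuum expectation is exactly the resummation that avoids the naive $L!$ blow-up --- so there is no substantive difference between your plan and the paper's argument.
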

For a proof of this theorem see
\cite[Theorem~A.4]{BachFroehlichSigal1998b}. Here we sketch the idea
of this proof. Substituting the expansion $W := \sum_{m+n \geq 1}
W_{m,n}$ into \eqref{eq-III-2-23} we find
$$\tW= \sum_{m'_1, n'_1, \ldots, m'_L, n'_L \atop m'_\ell
+ n'_\ell  \geq 1}F_0 \prod_{i = 1}^L \bigg (W_{m'_i,n'_i} F_i
\bigg).$$ Now we want to transform each product on the r.h.s. to the
generalized normal form, see Eqn \eqref{III.12}. Each factor has the
creation and annihilation operators entering it explicitly and
through the operators $H_f$. We do not touch the latter and
reshuffle the former.

We pull the annihilation operators, $a$, entering the
$W_{m'_i,n'_i}$'s explicitly, to the left and the creation
operators, $a^*$, to the left. The creation and annihilation
operators interchange positions according to the formula
$$a(k)a^*(k') = a^*(k')a(k) +\delta(k-k').$$
Thus they either pass through each other without a change or produce
the $\delta-$function (contract with each other). Furthermore, they
pass through functions of the photon Hamiltonian operator $H_f$
according to the Pull-Through formulae
%
\begin{equation} \label{eq-III-1-10}
a(k) \, F[\hf] \ = \ F[ \hf + |k| ] \, a(k),\
F[\hf] \, a^*(k) \ = \ a^*(k) \, F[ \hf +
|k| ]  \comma
\end{equation}
which hold on $\cH_\red$ in the sense of operator-valued
distributions for every bounded and measurable function $F$, see
\cite[Lemma~A.1]{BachFroehlichSigal1998b}.

Some of the creation and annihilation operators reach the extreme
left and right positions, while the remaining ones contract. The
terms with $M$ creation operators reaching the extreme left
positions and $N$ annihilation operators reaching the extreme right
positions contribute to the $(M,N)-$ formfactor, $\tw_{M,N}$, of the
operator $\tW$.

This is the standard way for proving the Wick theorem on the
reduction of operators on Fock spaces to their normal (or Wick)
forms, modified by presence of $H_f-$ dependent factors. The problem
here is that the number of terms generated by various contractions,
which is the number of pairs which can be formed by creation and
annihilation operators, is, very roughly, of order of $L!$ for a
product of $L$ terms. Therefore a simple majoration of the series
for $\tw_{M,N}$ will diverge badly. Thus we have to re-sum this
series in order to take advantage of possible cancelations. The
latter is done by, roughly, representing, for a given $M$ and $N$,
the sum over all contractions by a vacuum expectation which effects
only the 'contracting' creation and annihilation operators and does
not apply to the 'external' ones, i.e. those which reached the
extreme positions on the left and right.


As a direct consequence of Theorem~\ref{thm-III-2-3} and
Eqns.~(\ref{IV.6}), (\ref{eq-III-2-5})--(\ref{eq-III-2-6}) and
(\ref{eq-III-2-20}), we find a sequence $\huw$ such that $H(\huw) =
\cR_\rho(H(\uw))= S_\rho \big(F_{\rho} ( \, H(\uw)\, ) \big)$ as
follows.

\begin{theorem} \label{thm-III-2-4}
Let
$H(\uw) \in \cD^{\mu,s}(\rho/8,
\rho/8, \rho/8)$ . Then $\cR_\rho(H(\uw))$ $= H(\huw)$ where $\huw =
(\hw_{M,N}^{(sym)})_{M+N \geq 0}$ with $\hw_{M,N}^{(sym)}$, the
symmetrization w.~r.~t.\ $k^{(M)}$ and $\tk^{(N)}$ (as in
Eq.~(\ref{eq-III-2-24})) of the kernels
\begin{eqnarray} \label{VIII.14}
\lefteqn{ \hw_{M,N}[ \, r ; \, k_{(M,N)} ] \ = \ \rho^{M+N-1}
\,\sum_{L=1}^\infty (-1)^{L-1} \: \times }
\\ \nonumber
& \hspace{-5mm} & \sum_{m_1 + \ldots + m_L = M, \atop n_1 + \ldots +
n_L = N} \sum_{p_1, q_1, \ldots, p_L, q_L:
      \atop m_\ell + p_\ell + n_\ell + q_\ell \geq \delta_L}
\prod_{\ell = 1}^L \bigg\{ {m_\ell + p_\ell \choose p_\ell} {n_\ell
+ q_\ell \choose q_\ell} \bigg\} \;  V_{\umpnq} [ r ; k_{(M,N)} ],
\end{eqnarray}
for $M+N \geq 1$, and
\begin{eqnarray} \label{VI.15}
\hw_{0,0}[ \, r ] \ = r +\ \rho^{-1} \,\sum_{L=2}^\infty (-1)^{L-1}
%
\sum_{p_1, q_1, \ldots, p_L, q_L:
      \atop p_\ell +  q_\ell \geq 1}
\prod_{\ell = 1}^L
 \;  V_{\upq} [ r ],
\end{eqnarray}
%
%
for $M=N =0$. Here
%
$\umpnq := (m_1, p_1, n_1, q_1,$ $\ldots,$ $m_L, p_L, n_L, q_L) \in
\NN_0^{4L}$, and
\begin{eqnarray} \label{VIII.15}
\lefteqn{ V_{\umpnq} [ r ; k_{(M,N)} ] \ := \ }
\\ \nonumber
& & \hspace{-6mm} \bigg\la \Om , \; F_0[\hf+r] \, \prod_{\ell = 1}^L
\Big\{ \tW_\ell \big[ \rho(r+r_\ell) ;
               \, \rho k^{(\ell)}_{(m_\ell, n_\ell)} \big]
\; F_\ell[\hf+r] \Big\} \; \Om \bigg\ra ,
\end{eqnarray}
with $M := m_1 + \ldots + m_L$,  $N := n_1 + \ldots + n_L$, $F_0[r]
:= \chi_1[r+\tr_0]$, $F_L[r] := \chi_1[r+\tr_L]$ and
%
$F_\ell[r] \ := \ \frac{ \bchi_1[r+\tr_\ell]^2 }{ T[\rho
(r+\tr_\ell)] + E } \comma$
%
for $\ell = 1, \ldots, L-1$.  Here
the notation introduced in Eqs.~\eqref{VIII.4}--\eqref{eq-III-2-27}
and \eqref{VIII.12} is used.
\end{theorem}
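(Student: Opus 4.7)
The plan is to compute $\cR_\rho(H(\uw))=\rho^{-1}S_\rho\circ F_\rho(H(\uw))$ in three stages — expand $F_\rho$ as a Neumann series, reduce each term to generalized Wick form by Theorem \ref{thm-III-2-3}, then apply the scaling map $\rho^{-1}S_\rho$ — with a small dedicated argument for the $(0,0)$ sector.

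For stage one, the bounds $\|H_0^{-1}\bchi_\rho^2\|\le 2/\rho$ and $\|W\|\le\xi\rho/8$ from \eqref{eq-III-2-19} make the Neumann expansion \eqref{eq-III-2-20} norm-convergent. Substituting $W=\sum_{m+n\ge 1}W_{m,n}[w_{m,n}]$ in each of the $L$ slots of the $L$-th Neumann term recasts it as a sum over tuples $(m_\ell,n_\ell)_{\ell=1}^L$ with $m_\ell+n_\ell\ge 1$ of operators
$$F_0\,W_{m_1,n_1}\,F_1\,W_{m_2,n_2}\cdots F_{L-1}\,W_{m_L,n_L}\,F_L,$$
with $F_0=F_L=\chi_\rho(\hf)$ and $F_\ell=\bchi_\rho^2(\hf)/(E+T(\hf))$ for $1\le\ell\le L-1$.

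For stage two, each such product has precisely the form required by Theorem \ref{thm-III-2-3}. Applying that theorem I would replace it by the Wick-ordered operator whose kernels $\tw_{M,N}^{(\sym)}$ are the symmetrized versions of the vacuum expectation values \eqref{eq-III-2-25}. The multinomial weights $\binom{m_\ell+p_\ell}{p_\ell}\binom{n_\ell+q_\ell}{q_\ell}$ arise from choosing, for each $\ell$, which $m_\ell$ (respectively $n_\ell$) of the creators (respectively annihilators) of $\tW_\ell=W_{p_\ell,q_\ell}^{m_\ell,n_\ell}$ survive as external operators versus which $p_\ell$ (respectively $q_\ell$) are contracted with adjacent $W$-factors. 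The pull-through formulae \eqref{eq-III-1-10} then move each $F_\ell$ across the surviving creators and annihilators, producing the argument shifts $r+r_\ell$ and $r+\tr_\ell$ defined in \eqref{eq-III-2-26}--\eqref{eq-III-2-27}.

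For stage three, applying $\rho^{-1}S_\rho$ to the resulting Wick-ordered product uses the elementary identities $S_\rho(\chi_\rho[\hf])=\chi_1[\hf]$, $S_\rho(\bchi_\rho^2/(E+T(\hf)))=\bchi_1^2/(E+T(\rho\hf))$, and $S_\rho(W_{m,n}[w])=\rho^{m+n}W_{m,n}[w(\rho\cdot,\rho\cdot)]$. The overall $\rho^{-1}$ combined with a factor $\rho^{m_\ell+n_\ell}$ from each of the $L$ rescaled $W$-factors collects to the prefactor $\rho^{M+N-1}$ in \eqref{VIII.14}, while the rescaling of arguments to $\rho(r+r_\ell)$ and $\rho k^{(\ell)}$ inside each $\tW_\ell$ matches formula \eqref{VIII.15}. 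For the $(0,0)$ sector, since $W$ has no $(0,0)$ component, the Wick contributions to $\hw_{0,0}$ come only from fully-contracted terms ($m_\ell=n_\ell=0$ for all $\ell$), which exist only for $L\ge 2$; this gives the sum in \eqref{VI.15}. The isolated ``$r$'' term represents the rescaled $H_0$ contribution under the identification $\cW_{0,0}^{\mu,s}\simeq\mathbb{C}\oplus\cT$ from \eqref{eq-III-1-22b}, using $w_{0,0}(0)=0$.

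The main obstacle is stage two: verifying that the symmetrization combined with the multinomial weights in \eqref{eq-III-2-25} accounts for all $a^\#$-contractions between adjacent $W$-factors in the Neumann product without overcounting, and that the pull-through shifts of resolvent arguments are tracked correctly. This is the genuine combinatorial content of Wick's theorem in the $H_f$-dependent setting. Once Theorem \ref{thm-III-2-3} is available, the Neumann and scaling steps reduce to bookkeeping, which is why the proof can be deferred to Appendix I and closely mirrors that in \cite{BachChenFroehlichSigal2003}.
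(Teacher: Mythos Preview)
Your three-stage outline --- Neumann expansion \eqref{eq-III-2-20}, Wick ordering via Theorem~\ref{thm-III-2-3}, then the scaling relations \eqref{eq-III-2-5}--\eqref{eq-III-2-6} --- is exactly the argument the paper gives; it states the theorem as a ``direct consequence'' of precisely these ingredients and offers no further detail. Your bookkeeping of the binomial weights, the pull-through shifts, and the collected prefactor $\rho^{M+N-1}$ is accurate.

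One correction in the $(0,0)$ sector: your justification ``using $w_{0,0}(0)=0$'' is not available --- the hypothesis $H(\uw)\in\cD^{\mu,s}(\rho/8,\rho/8,\rho/8)$ only gives $|E|=|w_{0,0}(0)|\le\rho/8$, and even if $E=0$ one has $\rho^{-1}T(\rho r)$ rather than $r$. The $H_0$ contribution to $\hw_{0,0}$ is $\rho^{-1}w_{0,0}(\rho r)=\rho^{-1}E+\rho^{-1}T(\rho r)$; the bare ``$r$'' printed in \eqref{VI.15} is evidently a slip in the paper, since the subsequent estimate \eqref{est-E} of $|\hE-\rho^{-1}E|$ and the bound \eqref{est-T'} on $\hT'$ both require that $\rho^{-1}w_{0,0}(\rho r)$ be present. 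With that term restored your account of the $(0,0)$ sector (the $L=1$ fully-contracted term vanishes because $\langle\Omega, W_{p,q}^{0,0}\Omega\rangle=0$ for $p+q\ge 1$, so the sum starts at $L=2$) is correct.
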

%

We remark that Theorem~\ref{thm-III-2-4} determines $\huw$ from $\uw
 \in \cW^{\mu,s}$
only as a sequence of integral kernels that define an operator in
$\cB[\cF]$. Now we show that $\huw \in \cW^{\mu,s}$, i.e. $\| \huw
\|_{\mu, s,\xi} < \infty$. In what follows we use the notation
introduced in Eqs.~\eqref{VIII.4}--\eqref{eq-III-2-27}
and \eqref{VIII.12}. To estimate $\huw$, we start with the following
preparatory lemma
\begin{lemma} \label{lem-VIII.3}
For fixed $L \in \NN$ and $\umpnq \in \NN_0^{4L}$, we have
$V_{\umpnq} \in \cW_{M,N}^{\mu,s}$ and
\begin{eqnarray} \label{VIII.16}
\big\| V_{\umpnq} \|_{\mu,s}
\leq 4 C_{\chi}^{2} \, \rho^{\mu}L^s \,
\Big(\frac{C_{\chi}}{\rho}\Big)^{L-1}\; \prod_{\ell = 1}^L \frac{
\big\| w_{m_\ell+p_\ell, n_\ell+q_\ell}
       \big\|_{\mu,s} }
     { \sqrt{p_\ell^{p_\ell} \, q_\ell^{q_\ell}} } \comma
\end{eqnarray}
with the convention that $p^p := 1$ for $p=0$.  Here the constant
$C_{\chi}$ is given by \eqref{Cchi}.
%
%
\end{lemma}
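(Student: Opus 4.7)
The plan is to estimate $V_{\umpnq}$ by bounding each factor of the vacuum matrix element \eqref{VIII.15} separately, using Theorem~\ref{thm-III.1} on each $\tW_\ell$ and a direct resolvent bound on each intermediate $F_\ell$, and then accounting for derivatives in $r$ via the Leibniz rule. First I would handle the middle resolvents: for $\ell=1,\dots,L-1$ we have $F_\ell[H_f+r]=\bchi_1[H_f+r+\tr_\ell]^2\,(T[\rho(H_f+r+\tr_\ell)]+E)^{-1}$, and the same lower bound $\rRe(T[\rho s]+E)\ge\rho/2$ for $s\ge 3/4$ used in the proof of Lemma~\ref{lem-III-2-2} gives $\|F_\ell[H_f+r]\|\le 2\sup|\bchi_1|^2/\rho$ on $\Ran\,\bchi_1$. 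The outer factors $F_0,F_L=\chi_1[\cdot]$ contribute scalars of modulus $\le 1$. Combining these produces the $(C_\chi/\rho)^{L-1}$ prefactor appearing in \eqref{VIII.16}, with the absolute constants absorbed into $C_\chi$ via \eqref{Cchi}.

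Next I would bound each $\tW_\ell$. Since $\tW_\ell$ is always sandwiched between $\chi_1$-type cut-offs (either the outer $\chi_1[\cdot+\tr_0]$, $\chi_1[\cdot+\tr_L]$ or the $\bchi_1[\cdot+\tr_\ell]$ factors built into the adjacent $F_\ell$'s), I would apply the operator-norm estimate of Theorem~\ref{thm-III.1}, adapted to the symbol-valued setting where the coupling function $w_{m_\ell+p_\ell,n_\ell+q_\ell}$ carries the external momenta $k^{(\ell)}_{(m_\ell,n_\ell)}$ as parameters. Together with the support of $\chi_1$, which localizes $H_f\le 1$, this yields schematically
\[
\bigl\|\chi_1\,\tW_\ell[\rho(r+r_\ell);\rho k^{(\ell)}_{(m_\ell,n_\ell)}]\,\chi_1\bigr\|\le \frac{C_\chi}{\sqrt{p_\ell^{p_\ell}\,q_\ell^{q_\ell}}}\,\sup_{x}\bigl|w_{m_\ell+p_\ell,n_\ell+q_\ell}[\rho(\cdot);\rho k^{(\ell)},\rho x]\bigr|,
\]
with the weakened combinatorial factor $1/\sqrt{p^p q^q}$ (convention $p^p:=1$ for $p=0$) absorbing the inverse factorials from Theorem~\ref{thm-III.1}. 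To convert this supremum into $\|w_{m_\ell+p_\ell,n_\ell+q_\ell}\|_\mu$, I would single out one external variable $k_j$ (in, say, the $\ell_0$-th factor) and apply the definition of $\|\cdot\|_\mu$: $|w[\cdot;\rho k_j,\dots]|\le|\rho k_j|^\mu\,\|w\|_\mu=\rho^\mu|k_j|^\mu\,\|w\|_\mu$, which extracts the single factor $\rho^\mu$ that appears in \eqref{VIII.16}. For all remaining external and internal arguments one uses only the trivial bound $|w|\le\|w\|_\mu$ (valid since those arguments lie in unit balls after scaling by $\rho<1$).

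The derivatives in $r$ up to order $s$ are handled by Leibniz applied to the product of the $2L+1$ factors $F_0,\tW_1,F_1,\dots,\tW_L,F_L$. Each $\partial_r$ either hits a cut-off $\chi_1$ or $\bchi_1$ (bounded via $C_\chi$), the resolvent $(T+E)^{-1}$ (whose derivatives are controlled using $|T'|\le 1+\beta\le 2$ and the lower bound on $|T+E|$), or a coupling function $w_{m_\ell+p_\ell,n_\ell+q_\ell}$ (controlled by $\|\cdot\|_{\mu,s}$). Since $s$ derivatives can be distributed among the $2L+1$ factors in at most $(2L+1)^s\le c\,L^s$ ways, this produces the $L^s$ factor in \eqref{VIII.16}. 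Measurability of $V_{\umpnq}$ in $k_{(M,N)}$ and $C^s$-regularity in $r$ follow from the corresponding properties of the $w_{m,n}$'s, the smoothness of $\chi_1,\bchi_1$, and dominated convergence.

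The main obstacle I foresee is the refined operator-norm estimate on $\tW_\ell$ yielding the combinatorial weight $1/\sqrt{p_\ell^{p_\ell}q_\ell^{q_\ell}}$ together with the correct power of $\rho$ after scaling the internal measures. This requires extending the proof of Theorem~\ref{thm-III.1} to symbol-valued kernels with external parameters and interpolating the inverse-factorial bounds through Stirling-type estimates $p!\ge p^p e^{-p}$; the argument closely follows the Wick-ordering estimates in Appendix~B of \cite{BachChenFroehlichSigal2003}, and I would defer to that reference for the remaining technical details.
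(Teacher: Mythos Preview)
Your approach is essentially the paper's: bound the outer $F_0,F_L$ by $1$, the intermediate $F_\ell$ by $C_\chi/\rho$, apply Theorem~\ref{thm-III.1} to each $\tW_\ell$ with one distinguished factor carrying the anisotropic weight $|k_j|^{-\mu}$ to extract $\rho^\mu$, and handle $\partial_r$-derivatives by Leibniz to produce the $L^s$. One correction: the $\bchi_1$ factors inside $F_\ell$ do \emph{not} localize $H_f\le 1$ (they cut off the region $H_f\le 9/10$), so they are not what lets you invoke \eqref{III.11}; the relevant $\chi_1$'s are already built into $\tW_\ell$ through the definition \eqref{VIII.4} of $W^{m,n}_{p,q}$, and it is these that the paper uses to apply Theorem~\ref{thm-III.1} pointwise in the external momenta.
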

This lemma is proven in \cite{BachChenFroehlichSigal2003} (Lemma
III.10) for
the $L^2-$version of the norms \eqref{III.5} and \eqref{III.7} with
$s=1$
The extension of this lemma to the norms \eqref{III.5} and
\eqref{III.7} with $s=2$, used in this paper, is straightforward. We
present here the proof for $s=0$ and point out how it extends to the
$s>0$ case in order to illustrate its simple structure and for
references needed later.

\begin{remark} \label{rem-VI.4}
The proof of Lemma~\ref{lem-VIII.3} requires taking derivatives of
$\chi_1[r]$ and $\bchi_1[r]$. Here the main advantage of using the
smooth Fesh\-bach map, rather than the (projection) Feshbach map,
becomes manifest. If $\chi_1[r]$ and $\bchi_1[r]$ were projections,
i.e., characteristic functions of intervals, we would inevitably
encounter $\delta$-distributions. In fact, the appearance of these
$\delta$-distributions are the reason for using (a rather involved
mixture of) supremum and $L^1$-norms in
\cite{BachFroehlichSigal1998a,BachFroehlichSigal1998b}. In contrast,
the proof of Lemma~\ref{lem-VIII.3}  is quite straightforward and
merely requires summation of geometric series.
\end{remark}
\Proof First we note that by the definition of the cut-of function
$\chi_1(r) \equiv \chi_{r \le 1}$ (see the paragraph after
\eqref{III.9}), $|F_i[r]| \leq 1,\ i=0, L$.
Moreover, since $T(r) \ge \frac{7}{8}r,\ \supp\bchi_1 \subset \{r
\ge 1\}$ and $|E| \le \frac{1}{8}\rho$, we have that, for $\ell = 1,
\ldots, L-1$,
\begin{eqnarray} \label{eq-III-3-5}
\big|F_\ell[r]\big|
 \leq  \bigg|\frac{ \bchi_1^2[r+\tr_\ell] }{T[\rho (r+\tr_\ell)] -
E} \bigg|
\leq  \frac{4}{3\rho} \period
\end{eqnarray}
%
%
%
%

Now, we estimate $|V_{\umpnq}|$, using that $|\la \Om , A \Om \ra |
\leq \|A\|_\op$, for any $A \in \cB[\cH_\red]$. We have that
\begin{eqnarray} \label{eq-III-3-6}
\lefteqn{ \big| V_{\umpnq} [ r ; k_{(M,N)} ] \big| }
\\ \nonumber
& \leq & \prod_{\ell = 0}^L \big\| F_\ell[\hf+r] \big\|_\op \;
\prod_{\ell = 1}^L \Big\| \tW_\ell \big[  \rho(r+r_\ell) ; \, \rho
k^{(\ell)}_{(m_\ell, n_\ell)} \big] \Big\|_\op \period
\end{eqnarray}
Using \eqref{III.17} and letting $\ell_j$ to be defined by the
property that the vector $k^{(\ell_j)}_{(m_{\ell_j}, n_{\ell_j})}$
contains $k_j$ among its $3-$dimensional components,
we arrive at
%
$$ \big\| V_{\umpnq} \big\|_{\mu}  =
\max_j \sup_{r \in I, k_{(M,N)} \in B_1^{M+N}} \big| |
k_j|^{-\mu}V_{\umpnq} [ r ; k_{(M,N)} ] \big|$$
$$\leq
(\frac{4}{3\rho})^{L - 1}\; \max_j\prod_{\ell \neq \ell_j}^{1,L}
\bigg\{
 \sup_{r \in I, k^{(\ell)}_{(m_\ell, n_\ell)} \in B_1^{m_\ell +
n_\ell}}  \Big\| \tW_\ell \big[
 \,\rho  r ; \, \rho k^{(\ell)}_{(m_\ell, n_\ell)} \big] \Big\|_\op
\; \bigg\}$$
$$\sup_{r \in I, k^{(\ell_j)}_{(m_{\ell_j}, n_{\ell_j})} \in B_1^{m_{\ell_j} +
n_{\ell_j}}}  | k_j|^{-\mu} \Big\|\tW_{\ell_j} \big[ \,\rho  r ;
\,\rho k^{(\ell_j)}_{(m_{\ell_j}, n_{\ell_j})} \big] \Big\|_\op$$
$$\le(\frac{4}{3\rho})^{L - 1} \rho^{\mu} \max_j\prod_{\ell \neq \ell_j}^{1,L}
\bigg\{
 \sup_{r \in I, k^{(\ell)}_{(m_\ell, n_\ell)} \in B_1^{m_\ell +
n_\ell}}  \Big\| \tW_\ell \big[
 r ; \, k^{(\ell)}_{(m_\ell, n_\ell)} \big] \Big\|_\op
\; \bigg\}$$
\begin{eqnarray} \label{eq-III-3-14}
\sup_{r \in I, k^{(\ell_j)}_{(m_{\ell_j}, n_{\ell_j})} \in
B_1^{m_{\ell_j} + n_{\ell_j}}}  | k_j|^{-\mu} \Big\|\tW_{\ell_j}
\big[ r ; \, k^{(\ell_j)}_{(m_{\ell_j}, n_{\ell_j})} \big]
\Big\|_\op.
\end{eqnarray}
We now convert the operator norms on the right side of
(\ref{eq-III-3-14}) into the coupling functions norms. To this end
we use, pointwise in $k^{(\ell)}_{(m_\ell, n_\ell)}$ a.e.,
inequality~(\ref{III.11}) in Theorem~\ref{thm-III.1}  to obtain for
any $\mu \ge 0$
%
%
%
\begin{eqnarray*}
\lefteqn{\max_j \sup_{r \in I,\ k^{(\ell)}_{(m_\ell, n_\ell)} \in
B_1^{m_\ell + n_\ell}} | k_j|^{-\mu} \Big\| \tW_\ell \big[
 r ; \, k^{(\ell)}_{(m_\ell, n_\ell)} \big] \Big\|_\op
}
\\ \nonumber
& \leq & \frac{1}{\sqrt{p^{(\ell)}_{p_\ell} \, q^{(\ell)}_{q_\ell}}
} \,
\max_j\sup_{r \in I, k^{(\ell)}_{(m_\ell, n_\ell)} \in B_1^{m_\ell +
n_\ell}}| k_j|^{-\mu}\Big\| w_{m_\ell+p_\ell, n_\ell+q_\ell} [ \,
\cdot \, ; \, k^{(\ell)}_{(m_\ell)}, \, \cdot \, ;
 \, \tk^{(\ell)}_{(n_\ell)}, \, \cdot \, \big] \Big\|_{0}
\\ \nonumber
& \le & \frac{1}{\sqrt{p^{(\ell)}_{p_\ell} \, q^{(\ell)}_{q_\ell}} }
\, \Big\| w_{m_\ell+p_\ell, n_\ell+q_\ell}
       \Big\|_{\mu} \period
\end{eqnarray*}
This estimate with $\mu=0$ if $\ell \ne \ell_j$ and $\mu \ge 0$ if
$\ell = \ell_j$, inserted into the $\ell^{th}$ factor on the right
side of (\ref{eq-III-3-14}), yields \eqref{VIII.16} with $s=0$.
%
%

%
To estimate the norm $\big\| V_{\umpnq} \big\|_{\mu,s}$ with $s=1,2$
we need the bounds
\begin{equation} \label{dF}
\big|\partial_r^s F_\ell[r]\big|   \leq   \frac{C_\chi}{\rho}
\end{equation}
where the constant $C_\chi$ is given in \eqref{Cchi}. These bounds
are obtained similarly to \eqref{eq-III-3-5}, using the inequality
\begin{eqnarray*}\big|\partial_r F_\ell[r]\big|  & \; \leq &
 \bigg| \frac{2 \bchi_1[r+\tr_\ell] \, \partial_r
\bchi_1[r+\tr_\ell] }  { T[\rho (r+\tr_\ell)] - E} \bigg| \nonumber
\\ & \; + & \; \bigg| \frac{ \bchi_1^2[r+\tr_\ell] \; \rho \,
\partial_r T[z; \rho (r+\tr_\ell)]} {(T[\rho (r+\tr_\ell)] - E)^2}
\bigg|
\end{eqnarray*}
and a similar inequality for $\big|\partial_r^2 F_\ell[r]\big|$.

To estimate $\big\| V_{\umpnq} \big\|_{\mu,s}$ with $s=1,2$ we apply
the operator $\partial_r^n$
to \eqref{VIII.15} and use the Leibnitz rule of
differentiation of products $s$ times  to obtain
\eqref{VIII.16}.\QED
%

We are now prepared to prove the estimates in
Theorem~\ref{thm-III-2-5}.
Recall that we assume $\rho \le 1/2$
%
and  we choose $\xi = 1/4$. First, we apply Lemma~\ref{lem-VIII.3}
to \eqref{VIII.14} and use that ${m+p \choose p} \leq 2^{m+p}$. This
yields
\begin{eqnarray} \label{VIII.22}
\lefteqn{ \big\| \hw_{M,N} \big\|_{\mu,s} \ \leq \ \sum_{L=1}^\infty
4 \, C_{\chi}\rho^\mu \, L^s \, \Big( \frac{C_{\chi}}{\rho} \Big)^L
\, \big(2 \,\rho \big)^{M+N} \: }
\\ \nonumber
& & \hspace{-7mm}
\sum_{m_1 + \ldots + m_L = M, \atop n_1 + \ldots + n_L = N} \!
\sum_{p_1, q_1, \ldots, p_L, q_L:
      \atop m_\ell + p_\ell + n_\ell + q_\ell \geq 1}
\prod_{\ell = 1}^L \bigg\{ \Big( \frac{ 2 }{ \sqrt{p_\ell} }
\Big)^{p_\ell} \! \Big( \frac{ 2 }{ \sqrt{q_\ell} } \Big)^{q_\ell}
\big\| w_{m_\ell+p_\ell, n_\ell+q_\ell}
       \big\|_{\mu,s} \bigg\} \period
\end{eqnarray}
%
Using the definition \eqref{III.17}, the inequality $2\rho \le 1$,
we derive the following bound for $\hat{\uw}_1:=(\hw_{M,N})_{M+N
\geq 1}$,
\begin{eqnarray} \nonumber
\lefteqn{ \big\| \hat{\uw}_1 \|_{\mu, s, \xi} \ := \ \sum_{M+N \geq
1} \xi^{-(M+N)} \, \big\| \hw_{M,N} \|_{\mu,s} \hspace{30mm} }
\\ \nonumber
& \hspace{-5mm} \leq & 8\, C_{\chi} \, \rho^{1+\mu}
\sum_{L=1}^\infty L^s \, \bigg( \frac{C_{\chi}}{\rho} \bigg)^{L}
\sum_{M+N \geq 1} \sum_{m_1 + \ldots + m_L = M, \atop n_1 + \ldots +
n_L = N} \sum_{p_1, q_1, \ldots, p_L, q_L:
      \atop m_\ell + p_\ell + n_\ell + q_\ell \geq 1}
\\ \nonumber
& \hspace{-5mm} &
\prod_{\ell = 1}^L \bigg\{
\Big( \frac{ 2 \, \xi }{ \sqrt{p_\ell} } \Big)^{p_\ell} \,
\Big( \frac{ 2 \, \xi }{ \sqrt{q_\ell} } \Big)^{q_\ell} \;
\xi^{- ( m_\ell + p_\ell + n_\ell + q_\ell )}\:
\big\| w_{m_\ell+p_\ell, n_\ell+q_\ell}
       \big\|_{\mu,s} \bigg\}
\\ \nonumber
& \hspace{-5mm} \leq & 8\, C_{\chi} \, \rho^{1+\mu}
\sum_{L=1}^\infty L^s \, \bigg( \frac{C_{\chi}}{\rho} \bigg)^{L}
\\ \nonumber
& \hspace{-5mm} & \bigg\{ \sum_{m+n \geq 1} \bigg( \sum_{p=0}^m
\Big( \frac{ 2 \, \xi }{ \sqrt{p} } \Big)^{p} \bigg) \, \bigg(
\sum_{q=0}^n \Big( \frac{ 2 \, \xi }{ \sqrt{q} } \Big)^{q} \bigg) \,
\xi^{- (m+n)}\: \| w_{m,n} \|_{\mu, s} \bigg\}^L .
\end{eqnarray}
Using the assumption $\xi
= 1/4$ and the estimate $\sum_{p=0}^m ( 2 \, \xi/\sqrt{p} )^{p} \le
\sum_{p=0}^\infty \big( 2 \, \xi \big)^{p} \ =  \frac{1}{1 \: - \: 2
\, \xi }$, and recalling the definitions $\uw_1 := (w_{m,n})_{m+n
\geq 1}$ and $\big\| \uw_1 \|_{\mu, s, \xi} \ := \\$ $\sum_{M+N \geq
1} \xi^{-(m+n)} \, \big\| w_{m,n} \|_{\mu,s} $, we obtain
\begin{eqnarray}
\label{eq-III-3-18}
\big\| \hat{\uw}_1 \|_{\mu, s, \xi} \leq & 8\C \, \rho^{\mu+1}
\sum_{L=1}^\infty L^s \, B^L,
\end{eqnarray}
where
\begin{equation} \label{VI.25}
B:= \frac{C_{\chi} }{\rho(1 \: - \: 2 \, \xi)^2} \, \big\| \uw_1
\big\|_{\mu, s,\xi} .
\end{equation}

%
%

Note that in (\ref{eq-III-3-18}) we have dropped the factor
$p^{-p/2}$ gained in Theorem~\ref{thm-III.1}. Our assumption,
$\gamma \leq  (8 \, C_{\chi})^{-1} \rho$,  also insures that
\begin{equation} \label{eq-III-3-20}
B \
\leq \ \frac{4 \, C_{\chi} \, \gamma}{\rho} \ \leq \ \frac{1}{2} .
\end{equation}
Thus the geometric series in the last line of (\ref{eq-III-3-18}) is
convergent. We obtain for $s=0, 1, 2$
\begin{equation} \label{eq-III-3-21}
\sum_{L=1}^\infty L^s \, B^L \
\leq \ 8 \, B \period
\end{equation}
Inserting (\ref{eq-III-3-21}) into (\ref{eq-III-3-18}), we see that
the r.h.s. of \eqref{eq-III-3-18} is bounded by $64 \, C_{\chi} \,
\rho^{1+\mu} \, B $ which, remembering the definition of $B$ gives
\begin{equation} \label{est-W}
\big\| \hat{\uw}_1 \|_{\mu, s,\xi} \leq 256 \, C_{\chi}^2 \,
\rho^{\mu} \: \big\| \uw_1 \big\|_{\mu,s, \xi} \period
\end{equation}
%

Next, we estimate $\hw_{0,0}$. We analyze the expression
\eqref{VI.15}.
%
%
%
%
%
%
%
%
Using estimate Eq.~(\ref{VIII.16}) with $\underline{m}=0,
\underline{n}=0$ (and consequently, $M=0, N=0$), we find
\begin{equation} \label{eq-III-3-24}
\rho^{- 1} \: \big\| V_{\upq} \|_{\mu,s} \ \leq \ 2 L^s \,
C_{\chi}^{L+1} \, \rho^{- L} \, \prod_{\ell = 1}^L \frac{ \big\|
w_{p_\ell, q_\ell}
       \big\|_{\mu,s} }
     { \sqrt{p_\ell^{p_\ell} \, q_\ell^{q_\ell}} } .
\end{equation}
In fact, examining the proof of Lemma \ref{lem-VIII.3}  more
carefully we see that the following, slightly stronger estimate is
true
\begin{equation} \label{eq-III-3-24a}
\rho^{- 1} \: \big\| \partial_r^sV_{\upq} \|_{\mu,0} \ \leq \ 2 L^s
\, C_{\chi}^{L+1} \, \rho^{- L+s} \, \prod_{\ell = 1}^L \frac{
\big\|
w_{p_\ell, q_\ell}
       \big\|_{\mu,s} }
     { \sqrt{p_\ell^{p_\ell} \, q_\ell^{q_\ell}} } .
\end{equation}
Now, using
\eqref{eq-III-3-24a} and $\sum_{p+q \geq 1} \big\| w_{p,q}
\big\|_{\mu,s}  \le \xi \sum_{p+q \geq 1}\xi^{-p-q} \big\| w_{p,q}
\big\|_{\mu,s}  =: \|
\uw_1 \|_{\mu,s, \xi}$, where, recall, $\uw_1:=( w_{m,n} )_{m+n \geq
1}$,
we obtain
\begin{eqnarray} \nonumber
&& \: \rho^{-1} \sum_{L=2}^\infty \sum_{p_1, q_1, \ldots, p_L, q_L:
\atop p_\ell + q_\ell \geq 1} \sup_{r \in I} \big|
\partial_r^s V_{\upq} [ r ] \big|
\\ \nonumber
& \leq &  \: 2 \, C_{\chi} \rho^s\, \sum_{L=2}^\infty L^s \, \Big(
\frac{C_{\chi}}{\rho} \Big)^L \, \bigg\{ \sum_{p+q \geq 1} \big\|
w_{p,q} \big\|_{\mu,s} \bigg\}^L\\ 
& \leq &  \: 2 \, C_{\chi} \rho^s\, \sum_{L=2}^\infty L^s \, D^L,
\end{eqnarray}
where $D := C_{\chi}\rho^{-1}  \xi \|
\uw_1 \|_{\mu,s, \xi}$. Now, since $D \leq C_{\chi} \xi \rho^{-1}
\gamma \le \xi/8 = 1/16$, we have, similarly to (\ref{eq-III-3-21}),
that $\sum_{L=2}^\infty L^s D^L \leq 12 D^2$ for $s=0,1,2$. Hence
we find
\begin{eqnarray} \nonumber
& \: \rho^{-1} & \sum_{L=2}^\infty \sum_{p_1, q_1, \ldots, p_L, q_L:
\atop p_\ell + q_\ell \geq 1} \sup_{r \in I} \big|
\partial_r^s V_{\upq} [ r ] \big|
\\ \label{eq-III-3-24b}
& \leq &  \: 24 \, C_{\chi} \rho^s\, \Big( \frac{C_{\chi} \,
\xi}{\rho} \, \big\| \uw_1 \big\|_{\mu,s, \xi} \Big)^2,
\end{eqnarray}
for $s= 0, 1, 2$.

We set $\hE := \hw_{0,0}[ 0]$. Since $E = w_{0,0}[ 0]$,
Eqs.~\eqref{VI.15} and \eqref{eq-III-3-24b} yield
\begin{equation} \label{est-E}
\big| \hE \, - \, \rho^{-1} E \big| \ \leq \ 24 \, C_{\chi} \, \Big(
\frac{C_{\chi} \, \xi}{\rho} \, \big\| \uw_1 \big\|_{\mu,0, \xi}
\Big)^2.
\end{equation}
Next, writing $\hT[ r] := \hw_{0,0}[  r ] - \hw_{0,0}[  0 ]$, we
find furthermore that
\begin{eqnarray} \nonumber
\lefteqn{ \sup_{r \in [0,\infty)}\big| \hT'[ r] - 1 \big| \ \; = \
\; \sup_{r \in [0,\infty)} \big| \partial_r \hw_{0,0}[  \, r ] \, -
\, 1 \big| }
\\  \label{est-T'}
& \leq & \sup_{r \in [0,\infty)}\big| T'[ r] - 1 \big| \: + \: 24 \,
C_{\chi} \rho\, \Big( \frac{C_{\chi} \, \xi}{\rho} \, \big\| \uw_1
\big\|_{\mu,1, \xi} \Big)^2.
\end{eqnarray}

Now, recall that $\big| T'[ r] - 1 \big| \le \beta$ and $\big\|
\uw_1 \big\|_{\mu,s, \xi}\le \gamma$. Hence Eqns \eqref{est-E},
\eqref{est-T'},
and \eqref{est-W} give \eqref{eqn:23} with $\alpha' = 24 \, C_{\chi}
\, \Big( \frac{C_{\chi} \, \xi\gamma }{\rho} \, \Big)^2$, $\beta'=
\beta  + 24C_{\chi} \, \Big( \frac{C_{\chi} \, \xi\gamma}{\rho}
\Big)^2$ and $\gamma'= 256 \, C_{\chi}^2 \, \rho^{\mu}\gamma$.
Remembering that $\xi = \sqrt{\rho} / (4 C_{\chi})$ we conclude that
the statement of Theorem~\ref{thm-III-2-5} holds.\QED

\begin{remark} \label{rem-VI.5}
In the proof the limiting absorption principle (LAP) in
\cite{FroehlichGriesemerSigal2008b} to estimate $\big\| V_{\umpnq}
\big\|_{\mu,s}$, with $s=1,2$, (see Lemma \ref{lem-VIII.3}) instead
of the operator $\partial_r^n$, we apply the operator  $\partial_r^n
(k\partial_k)^q$ to \eqref{VIII.15}. Here $q:= (q_1,  \ldots,
q_{M+N}),$ $ (k\partial_k)^q: = \prod_1^{M+N}(k_j \cdot
\nabla_{k_j})^{q_j}$, with $k_{m+j} := \tk_j$, and
the indices $n$ and $q$ satisfy $0 \le n+|q| \leq s$.
\end{remark}

\begin{remark} \label{rem-VI.6}
For the proof of the limiting absorption principle in
\cite{FroehlichGriesemerSigal2008b} we also need the following
estimate (here we use that $\hT''[ r] = \partial_r ^2 \hw_{0,0}[  \,
r ]$)
\begin{equation} \label{est-T''}
\sup_{r \in [0,\infty)}\big| \hT''[ r]  \big| \ \;
 \leq  \rho \sup_{r \in [0,\infty)}\big| T''[ r]  \big| \: + \: 24
\, C_{\chi} \rho^2\, \Big( \frac{C_{\chi} \, \xi}{\rho} \, \big\|
\uw_1 \big\|_{\mu,2, \xi} \Big)^2.
\end{equation}

\end{remark}

\secct{Appendix II: Construction of Eigenvalues and Eigenvectors}
\label{sec-VII}
%
In this appendix we prove that the value $E:=e(H_s)+H_u$ we
constructed in Section V is the ground state energy of the
Hamiltonian $H$ under consideration (see Theorem V.3) and we
construct the corresponding the ground state energy. We use the
definitions of Section V.


%
%


%
%
%
%
%
\begin{theorem}
\label{thm-XI.1}
Let $H \in \cD$
Then the value $E:=e(H_s)+H_u$ where $e(H_s)$ is given in Theorem
V.1,
is a simple eigenvalue of the operator $H$. The corresponding
eigenfunction is given constructively in Eq.~(\ref{eq-III-4-25.7})
below.
%
%
\end{theorem}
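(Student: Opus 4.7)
The plan is to exploit the isospectrality of the smooth Feshbach--Schur map (Theorem~\ref{thm-II-1}) iteratively, pulling back the vacuum $\Omega$---the exact eigenvector at $0$ of the limiting Hamiltonian $\tau \hf$---to a genuine eigenvector of $H$ at eigenvalue $E$. First, since $H_u = \la H\ra_\Omega \in \CC$ is a scalar, one has $H - E\one = H_s - e(H_s)\one =: K$, so it suffices to exhibit a simple zero eigenvalue of $K$. Choosing $\nu_n := 4\alpha_n$ in Theorem~\ref{stable-manif} makes $K \in U_{\delta_n}\subset D(\cR_\rho^n)$ for every $n\ge 0$, with the iterates $K^{(n)}:=\cR_\rho^n(K)$ admitting the decomposition $K^{(n)} = E_n\one + \tau_n(\hf)\,\hf + W_n$, where $|E_n|\to 0$ (since $e_n \to e(H_s)$ and $E_n(\lambda)$ has its unique zero at $e_n$, so the estimates \eqref{eqn:55}, \eqref{eqn:57}, \eqref{V.20} give $|E_n|=O(\rho\,\alpha_{n+1})$) and $\|W_n\|_{\op}\to 0$.

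Let $U_\rho$ be the unitary on $\cF$ implementing the scaling $S_\rho$, so that $S_\rho(A) = U_\rho A U_\rho^*$, $U_\rho\Omega = \Omega$, and $\cR_\rho(A) = \rho^{-1}U_\rho F_\rho(A) U_\rho^*$. Writing $Q^{(k)}:=Q_{\tau,\chi_\rho}(K^{(k)})$, I propose the candidate eigenvector
\begin{equation*}
\psi_n \; := \; Q^{(0)}\,U_\rho^*\,Q^{(1)}\,U_\rho^*\cdots Q^{(n-1)}\,U_\rho^*\,\Omega .
\end{equation*}
A short direct computation from the definitions in Section~\ref{sec-II} yields the key identity $A\cdot Q_{\tau,\chi}(A) = \chi\,F_{\tau,\chi}(A)$ (which refines Theorem~\ref{thm-II-1}(iii)); combined with $F_\rho(K^{(k)}) = \rho\,U_\rho^*\,K^{(k+1)}\,U_\rho$, this telescopes to
\begin{equation*}
K\,\psi_n \; = \; \rho^{n}\,(\chi_\rho U_\rho^*)^{n}\,K^{(n)}\,\Omega .
\end{equation*}
Since $K^{(n)}\Omega = E_n\Omega + W_n\Omega$ and $\|\chi_\rho U_\rho^*\|\le 1$, the right-hand side has norm at most $\rho^{n}(|E_n|+\|W_n\Omega\|)\to 0$. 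Moreover, $\bchi_\rho\Omega = 0$ gives $Q^{(k)*}\Omega = \Omega$, and $U_\rho\Omega = \Omega$, so $\la\Omega,\psi_n\ra = 1$ for every $n$; any norm-limit of $\psi_n$ is therefore nonzero.

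The remaining step---convergence of $(\psi_n)$ in $\cH$---is where the main work lies. I would write
\begin{equation*}
\psi_{n+1}-\psi_n \; = \; Q^{(0)} U_\rho^*\cdots Q^{(n-1)} U_\rho^*\bigl(Q^{(n)}-\one\bigr)\Omega ,
\end{equation*}
and bound $\|(Q^{(n)}-\one)\Omega\| = \|\bchi_\rho (K^{(n)}_{\tau,\bchi_\rho})^{-1}\bchi_\rho W_n\,\chi_\rho\Omega\|$ using \eqref{IV.4a} together with Theorem~\ref{thm-III.1} applied to $W_n\Omega$, obtaining a geometric bound of order $\gamma_n/\rho$. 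Combined with a uniform operator bound on compositions of the $Q^{(k)} U_\rho^*$, this will deliver summability of the telescoping series, hence convergence $\psi_n\to\psi$ in $\cH$; closedness of $K$ then gives $K\psi=0$. The main obstacle will be precisely this uniform control of the compositions, since $Q_{\tau,\chi_\rho}(A)$ is not itself a contraction and a naive Neumann estimate on each factor does not suffice; one must exploit that $\bchi_\rho$ appears on both sides inside each $Q^{(k)}-\one$, allowing successive applications to compound the smallness of $\gamma_k$ rather than merely multiply bounded factors. Finally, simplicity of the eigenvalue will follow from Theorem~\ref{thm-II-1}(iv) by the same iteration, reducing $\dim\cern K$ to $\dim\cern K^{(n)}|_{\Ran\chi_\rho}$, which is at most one for $n$ large because $K^{(n)}$ is then a small perturbation of $\tau_n(\hf)\hf+E_n\one$ on $\Ran\chi_\rho$ whose kernel is spanned by $\Omega$.
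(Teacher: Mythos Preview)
Your approach is essentially identical to the paper's: the same candidate vector $\psi_n = Q^{(0)}U_\rho^*\cdots Q^{(n-1)}U_\rho^*\Omega$, the same intertwining identity $K\psi_n = \rho^{n}(U_\rho^*\chi_1)^{n}K^{(n)}\Omega$ (note $\chi_\rho U_\rho^* = U_\rho^*\chi_1$), and the same telescoping difference. Your observation $\la\Omega,\psi_n\ra=1$ is a clean way to ensure nontriviality of the limit; the paper instead bounds $\|\Psi_\infty-\Omega\|$ directly, but the effect is the same.

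The one place you hesitate---uniform control of the compositions $Q^{(0)}U_\rho^*\cdots Q^{(n-1)}U_\rho^*$---is actually straightforward, and the paper handles it without any ``compounding of smallness''. From \eqref{eq-II-4} one has $Q^{(j)}-\chi_\rho = -\bchi_\rho(K^{(j)}_{\tau,\bchi_\rho})^{-1}\bchi_\rho W_j\chi_\rho$, so by \eqref{IV.4a} and $\|W_j\|\le\gamma_j$,
\[
\|Q^{(j)}-\chi_\rho\|_\op \ \le\ \frac{16\,\gamma_j}{\rho}\,,\qquad
\|Q^{(j)}\|_\op \ \le\ 1 + \frac{16\,\gamma_j}{\rho}\,.
\]
Since $\sum_{j\ge 0}\gamma_j \le 2\gamma_0$, the elementary inequality $\prod_j(1+a_j)\le\exp(\sum_j a_j)$ gives the uniform bound
\[
\prod_{j=0}^{n-1}\|Q^{(j)}U_\rho^*\|_\op \ \le\ \exp\!\Big(\tfrac{32\gamma_0}{\rho}\Big),
\]
and hence $\|\psi_{n+1}-\psi_n\|\le (16\gamma_n/\rho)\exp(32\gamma_0/\rho)$, which is summable. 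No subtler mechanism is needed. Regarding simplicity, the paper's Appendix~II does not spell out an argument; your proposal to iterate Theorem~\ref{thm-II-1}(iv) and reduce to $\dim\cern(K^{(n)}|_{\Ran\chi_1})$ for large $n$ is a natural way to complete it.
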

\Proof Let  $H^{(0)} := H-E \in \cM_s$. We define a sequence of
operators $( H^{(n)} )_{n = 0}^\infty$ in $\cW_{op}^{\mu,s}
\subseteq \cB(\cH_\red)$ by $H^{(n)}:= \cR_\rho^n \big( H^{(0)}
\big)$. We will also need the following representation for $S_\rho$:
\begin{equation} \label{eq-III-2-1}
S_\rho(A) \ =: \
\Gamma_\rho \; A \; \Gamma_\rho^*
\comma
\end{equation}
where $\Gamma_\rho$ is the unitary dilatation on $\cF$ defined by
this formula and $\Gamma_\rho \Om = \Om $.
%
Then the definition (\ref{eqn:23}) of $\cR_\rho$ implies that, for
all integers $n \geq 0$,
%
\begin{eqnarray} \label{eq-III-4-22}
H^{(n)} =  \frac{1}{\rho} \, \Gamma_\rho \: \Big( F_{\rho} \big( \,
H^{(n-1)} \
 \, \big) \Big)\:
\Gamma_\rho^*,
\end{eqnarray}
%
%
where, recall, $F_{\rho} := F_{\tau \chi_\rho}$ with
$\tau(H):=W_{0,0}$ (see Eqn (IV.1)).
We will use the operators $Q_{\tau \chi}$ defined in
\eqref{eq-II-4}. It is easy to show (see
\cite{BachChenFroehlichSigal2003}) that these operators satisfy the
identity $HQ_{\tau \chi}=\chi F_{\tau \chi}(H)$. Let
\begin{equation}\\ \label{eq-III-4-22.1b} Q^{(n)}  :=
Q_{\tau \chi_\rho} \big( \, H^{(n)}  \
 \, \big) \comma
\hspace{9mm}
\end{equation}
Then the equation $H^{(n)}Q^{(n)}=\chi_\rho F_{\rho}(H^{(n)})$
%
%
together with (\ref{eq-III-4-22}), implies the
intertwining property
\begin{equation} \label{eq-III-4-24}
H^{(n-1)} \; Q^{(n-1)} \; \Gamma_\rho^* \ = \ \rho \; \Gamma_\rho^*
\; \chi_1 \; H^{(n)} \period
\end{equation}
Eq.~(\ref{eq-III-4-24}) is the \emph{key identity} for the proof of
the \emph{existence of an eigenvector} with the eigenvalue $e$.

For the construction of this eigenvector, for non-negative integers
$ \beta $ we define vectors $\Psi_{k}$ in $\cH_\red$ by setting
$\Psi_{0} := \Om$ and
\begin{equation} \label{eq-III-4-25}
\Psi_{k} \ := \ Q^{(0)} \; \Gamma_\rho^* \; Q^{(1)} \; \Gamma_\rho^*
\cdots Q^{(k-1)} \; \Om \period
\end{equation}
We first show that this sequence is convergent, as $k \to \infty$.
To this end, we observe that $\Om = \Gamma_\rho^* \, \chi_\rho \,
\Om$ and hence
\begin{equation} \label{eq-III-4-25.1}
\Psi_{k+1} - \Psi_{k} \ = \ Q^{(0)} \; \Gamma_\rho^* \; Q^{(1)} \;
\Gamma_\rho^* \cdots Q^{(k-1)} \; \Gamma_\rho^* \, \big( Q^{(k)} -
\chi_\rho \big) \, \Om \period
\end{equation}
Since $\|\chi_\rho\| \leq 1$, this implies that
\begin{equation} \label{eq-III-4-25.2}
 \big\| \Psi_{k+1} - \Psi_{k} \big\|
\ \leq \ \big\| Q^{(k)} - \chi_\rho \big\|_\op \:
\prod_{j=0}^{\beta-1} \Big\{ 1 \, + \, \big\| Q^{(j)} - \chi_\rho
\big\|_\op \Big\} \period
\end{equation}

To estimate the terms on the r.h.s. we consider the $j$-th step
Hamiltonian $H^{(j)}$. As in the proof of proposition VI.5 we write
$H^{(j)}$ as
%
\begin{equation} \label{eq-III-4-21}
H^{(j)} \ = \ E_j \, \cdot \one \, + T_{j} \, + \, W_{j}
  \ \comma
\end{equation}
with
%
%
%
\begin{equation} \label{eq-III-4-25.4}
| E_j| \le 8 \alpha_j\ \mbox{and}\    \| W_{j} \|_\op \ \leq \
\gamma_j \ \leq \ \frac{\rho}{16} \period
\end{equation}
Recalling the definition (\ref{eq-II-4}) of $Q_{(j)}$, we have
\begin{eqnarray} \label{eq-III-4-25.3}
\lefteqn{ \chi_\rho - Q_{(j)} \ = \ }
\\ \nonumber & & \hspace{-5mm}
\bchi_\rho \big(E_{j}  + T_{j} \, + \, \bchi_\rho \, W_{j} \,
\bchi_\rho \big)^{-1} \bchi_\rho \, W_{j} \, \chi_\rho \period
\end{eqnarray}
%
By \eqref{eq-III-4-25.4}, for all $j \in \NN$, we may estimate
\begin{equation} \label{eq-III-4-25.5}
\| \chi_\rho - Q_{(j)} \|_\op \ \leq \ \Big( \frac{\rho}{8} \, - \,
\| W_{j} \|_\op \Big)^{-1} \,  \| W_{j} \|_\op \ \leq \ \frac{16 \,
\gamma_j}{\rho} \period
\end{equation}
Inserting this estimate into (\ref{eq-III-4-25.2}) and using that
$\prod_{j=0}^\infty (1 + \lambda_j) \leq \exp\Big[\sum_{j=0}^\infty
\lambda_j \Big]$, for $\lambda_j \geq 0$, we obtain
\begin{eqnarray} \label{eq-III-4-25.6}
\big\| \Psi_{k+1} - \Psi_{k} \big\| & \leq & \frac{16 \,
\gamma_{k}}{\rho} \: \prod_{j=0}^{k-1} \Big\{ 1 \, + \, \frac{16 \,
\gamma_j}{\rho}  \Big\}
\nonumber \\
& \leq & \frac{16 \, \gamma_{k}}{\rho} \, \exp\big[
 32 \, \gamma_0 \, \rho^{-1} \big] ,
\end{eqnarray}
where we have used that $\sum_{j=0}^\infty \gamma_j \le 2\gamma_0$
(recall the definition of $\gamma_{j}$ after Eqn (\ref{rho})). Since
$\sum_{j=0}^\infty \gamma_j < \infty$, we see that the sequence $(
\Psi_{k} )_{k \in \NN_0}$ of vectors in $\cH_\red$ is convergent,
and its limit
\begin{equation} \label{eq-III-4-25.7}
\Psi_{\infty} \ := \ \lim_{k \to \infty} \Psi_{k} \comma
\end{equation}
satisfies the estimate
\begin{equation} \label{eq-III-4-25.8}
\big\| \Psi_{ \infty} - \Om \big\| \ = \ \big\| \Psi_{ \infty} -
\Psi_{ 0} \big\| \ \leq \
 \frac{32 \, \gamma_0}{\rho} \, \exp\big[  32 \,
\gamma_0 \, \rho^{-1} \big] \comma
\end{equation}
which guarantee that $\Psi_{( \infty)} \neq 0$.

The vector $\Psi_{ \infty}$ constructed above is an element of the
kernel of $H^{(0)}$, as we will now demonstrate. Observe that,
thanks to (\ref{eq-III-4-24}),
\begin{eqnarray} \label{eq-III-4-26}
H^{(0)} \, \Psi_{k} & = & \big( H^{(0)} \; Q^{(0)} \; \Gamma_\rho^*
\big) \, \big( Q^{(1)} \; \Gamma_\rho^* \cdots Q^{(k-1)} \; \Om
\big)
\nonumber \\
& = & \rho \, \Gamma_\rho^* \, \chi_1\, \big( H^{(1)} \; Q^{(1)} \;
\Gamma_\rho^* \big) \big( Q^{(2)} \; \Gamma_\rho^* \cdots Q^{(k-1)}
\; \Om \big)
\nonumber \\
& \vdots &
\nonumber \\
& = & \rho^{k } \, \big( \Gamma_\rho^* \; \chi_1 \big)^{k } \;
H^{(k)} \: \Om \period
\end{eqnarray}
%

%
%
Eq \eqref{eq-III-4-21} together with the estimate
(\ref{eq-III-4-25.4}) and the relation $T_{k}\Om =0$ implies that
\begin{eqnarray} \label{eq-III-4-27}
\big\|  \: H^{(k)} \: \Om \big\| & = & \big\|  \: (W_{k} \,  + \;
E_{k}) \, \Om \big\|
\\ \nonumber
& \leq & \gamma_{k}  + \: 8\alpha_{k}^2 \ \leq \ 2\gamma_{k} \,
\period
\end{eqnarray}
%
Summarizing (\ref{eq-III-4-26})--(\ref{eq-III-4-27}) and using that
the operator norm of $\Gamma_\rho^* \: \chi_1$ is bounded by $1$, we
arrive at
\begin{equation} \label{eq-III-4-28}
\big\| H^{(0)} \, \Psi_{k} \big\| \ \leq \ 2\gamma_{k} \ \to \ 0
\end{equation}
as $k \to \infty$.  Since $H^{(0)} \in \cB(\cH_\red)$ is continuous,
(\ref{eq-III-4-28}) implies that
\begin{equation} \label{eq-III-4-28.1}
H^{(0)} \, \Psi_{ \infty} \ = \ \lim_{k \to \infty} \big( H^{(0)} \,
\Psi_{k} \big) \ = \ 0 .
\end{equation}

Thus $0$ is an eigenvalue of the operator $H^{(0)}:=H-E$, i.e. $E$
is an eigenvalue of the operator $H$, with the eigenfunction $\Psi_{
\infty}$.
\QED

%
%

\secct{Appendix III: Analyticity of all Parts of $H(\underline{w})$}
\label{sect-analytic}

Let $S$ be an open set in a Banach space $\cB$. Below the
analyticity is understood in the sense described in the paragraph
preceding Theorem \ref{stable-manif}.

\begin{proposition}[\cite{GriesemerHasler2}] \label{analytic00}
Suppose that $\lambda\mapsto H(\underline{w}^{\lambda}) 
$ is analytic in $\lambda\in S 
$ and that $H(\underline{w}^{\lambda})$ belongs to some polydisc
$\cD(\alpha,\beta,\gamma)$ for all $\lambda\in S$. Then:
$$
     \lambda\mapsto w_{0,0}^{\lambda}(H_f)\ \quad  
\mbox{and}\ \quad \lambda\mapsto W(\uw^{\lambda}) $$
are analytic in $\lambda\in S$.
\end{proposition}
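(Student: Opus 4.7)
The plan is to split $H(\uw^\lambda)$ according to the direct-sum decomposition \eqref{Hsplit}, namely $H(\uw^\lambda) = E^\lambda \one + T^\lambda + W(\uw^\lambda)$, and prove analyticity of each summand separately. Since $w_{0,0}^\lambda(H_f) = E^\lambda\one + T^\lambda$, this will yield both conclusions of the proposition simultaneously. The main tool is the recovery of individual coupling functions from $H(\uw^\lambda)$ via matrix elements against explicit Fock-space vectors, which transfers the operator-valued analyticity hypothesis to analyticity of each piece.

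The easy constant part $E^\lambda = w_{0,0}^\lambda(0)$ is extracted as the vacuum expectation $E^\lambda = \langle\Omega, H(\uw^\lambda)\Omega\rangle$. Indeed, $T^\lambda\Omega = 0$ because $H_f\Omega = 0$ and the function $w_{0,0}^\lambda - w_{0,0}^\lambda(0)$ vanishes at the origin; moreover, each monomial $\chi_1 W_{m,n}[\uw^\lambda]\chi_1$ with $m+n\geq 1$ applied to $\Omega$ either vanishes (if $n\geq 1$) or produces a vector in the $m$-particle sector orthogonal to $\Omega$ (if $n=0$, $m\geq 1$). Because $A\mapsto \langle\Omega, A\Omega\rangle$ is a continuous linear functional on the Banach space $\cB$ of $H_f$-bounded operators, the scalar $E^\lambda$ is analytic in $\lambda$, and hence so is $\lambda\mapsto E^\lambda\one$ as a map into $\cB$.

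To recover the full $w_{0,0}^\lambda(H_f)$, and thus $T^\lambda$, I would extract $w_{0,0}^\lambda(r)$ pointwise in $r>0$ from matrix elements on one-particle states. For a test function $f\in C_c(\mathbb{R}^3)$ supported in the unit ball and concentrated near a momentum of magnitude $r$, the matrix element $\langle a^*(f)\Omega, H(\uw^\lambda)a^*(f)\Omega\rangle$ equals $\int|f(k)|^2 w_{0,0}^\lambda(|k|)\,d^3k$ plus a $w_{1,1}^\lambda$-contribution (all other monomials either annihilate $a^*(f)\Omega$ or shift its particle number out of the one-particle sector and are therefore orthogonal to $a^*(f)\Omega$). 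Localizing $f$ tightly around a point of the sphere $\{|k|=r\}$ and using the uniform polydisc bound on $\|w_{1,1}^\lambda\|_\mu$ via Theorem~\ref{thm-III.1} to control the second term, one recovers $w_{0,0}^\lambda(r)$ as a pointwise-in-$r$ analytic function of $\lambda$. The polydisc control $|\partial_r w_{0,0}^\lambda(r) - 1|\leq\beta$ furnishes uniform-in-$\lambda$ equicontinuity in $r$, so by a Vitali-type argument this pointwise analyticity upgrades to analyticity of the function $r\mapsto w_{0,0}^\lambda(r)$ in an appropriate Banach space of $C^1$ functions; the bounded functional calculus then yields analyticity of $\lambda\mapsto w_{0,0}^\lambda(H_f)$ as a map into $\cB$.

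With $w_{0,0}^\lambda(H_f)$ shown analytic, the remaining piece $W(\uw^\lambda) = H(\uw^\lambda) - w_{0,0}^\lambda(H_f)$ is automatically analytic as a difference of two analytic $\cB$-valued maps. The main obstacle I foresee is in the third paragraph: cleanly isolating $w_{0,0}^\lambda$ from the contaminating $w_{1,1}^\lambda$-contributions and lifting the resulting pointwise-in-$r$ analyticity to operator-norm analyticity of $w_{0,0}^\lambda(H_f)$. Both difficulties are controlled by the polydisc hypothesis, which provides the uniform bound $\|\uw^\lambda\|_{\mu,s,\xi}\leq \alpha+\beta+\gamma$ required to tame the error terms in the matrix-element extraction and to apply the Vitali-type upgrade; without this uniformity the limiting and summation steps could fail.
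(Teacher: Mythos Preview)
Your proposal is correct and follows the paper's overall route: restrict to the one-boson sector, recover $w_{0,0}^\lambda(r)$ pointwise via sharply localized one-particle states, upgrade to operator analyticity, and subtract to get $W(\uw^\lambda)$. The one substantive difference is that the paper first proves the contaminating Hilbert--Schmidt piece $K_\lambda$ (the $W_{1,1}$-part on the one-boson sector) is itself analytic, via a partition-of-unity trick exploiting that the diagonal multiplication operator $D_\lambda$ vanishes on off-diagonal blocks of a fine partition of $B_1$; only after subtracting $K_\lambda$ does the paper localize. You instead localize against $P_1 H(\uw^\lambda) P_1$ directly and use that $\langle f_{k_0,n}, K_\lambda f_{k_0,n}\rangle\to 0$ uniformly in $\lambda$. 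This shortcut is legitimate, but the reason it works is that the kernel of $K_\lambda$ is pointwise bounded uniformly in $\lambda$ --- which comes straight from the sup-norm definition of the polydisc (so $|w_{1,1}^\lambda|\le\gamma$), not from Theorem~\ref{thm-III.1} as you write.

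One caution on the upgrade step: your claimed analyticity in a $C^1$ function space overshoots, since nothing you have done gives analyticity of $\partial_r w_{0,0}^\lambda$ in $\lambda$, and you do not need it. The paper's route here is cleaner and is what you should use: from pointwise-in-$r$ analyticity of $w_{0,0}^\lambda(r)(r+1)^{-1}$ together with the uniform bound from the polydisc, the spectral theorem, dominated convergence, Fubini, and Morera yield weak (hence norm) analyticity of $w_{0,0}^\lambda(H_f)(H_f+1)^{-1}$ directly.
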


\begin{proof}
Recall that $B_1=\{k\in\RR^3:|k|\leq 1\}$ and that an operator $A$ is called $H_f$-bounded
iff the operator $A(H_f+1)^{-1}$ is bounded.
Let $P_1$ denote the projection onto the one boson subspace of
$\mathcal{F}$, which is isomorphic to $L^2(\RR^3)$. Then
$P_1 H(\uw^{\lambda}) P_1$, like $H(\uw^{\lambda})$, is analytic. We write
\begin{eqnarray}
P_1 H(\uw^{\lambda}) (H_f+1)^{-1} P_1 &=& P_1 w_{0,0}^{\lambda}(H_f)(H_f+1)^{-1} P_1 +
P_1 W_{1,1}(\underline{w}^{\lambda})(H_f+1)^{-1} P_1 \nonumber \\
&=& D_{\lambda} + K_{\lambda}  \label{eq:h00star} \; ,
\end{eqnarray}
where $D_{\lambda}$ denotes multiplication with $w^{\lambda}_{0,0}(\omega)(\omega+1)^{-1},\ \omega:=|k|,$
and $K_{\lambda}$ is the Hilbert Schmidt operator with kernel
$$
M_{\lambda}(k, \tilde{k} ) =  w^{\lambda}_{1,1}(0,k, \tilde{k} )(\tilde{\omega}+1)^{-1},
$$
whose support belongs to $B_1\times B_1$. In what follows if an operator family has a factor $(H_f+1)^{-1} $ standing on its right, then the analyticity is understood in then operator norm.

Our strategy is to show first that $K_{\lambda}$ and hence $$P_1
w_{0,0}^{\lambda}(H_f)(H_f+1)^{-1} P_1 = P_1 H(w^{\lambda})(H_f+1)^{-1}  P_1 - K_{\lambda}$$ are
analytic. Then we show that $\lambda \mapsto w_{0,0}^{\lambda}(H_f)$ is analytic.
The analyticity of $\lambda\mapsto W(\uw^{\lambda})=H(\uw^{\lambda}) -w_{0,0}^{\lambda}(H_f)$ then follows.\\

\noindent\underline{Step 1}: $K_{\lambda}$ is analytic.

For each $n \in \NN$ let $\{ Q_i^{(n)}\}_i$ be a collection of $n$
measurable subsets of $B_1$ such that
\begin{equation} \label{eq:h001}
     B_1 = \bigcup_{i=1}^n Q_i^{(n)} \ , \quad Q^{(n)}_i \cap Q^{(n)}_j =
     \emptyset, \ \ i \neq j \; ,
\end{equation}
and
\begin{equation} \label{eq:h002}
   |Q^{(n)}_i | \leq \frac{\rm const}{n} \; .
\end{equation}
Let $\chi_i^{(n)}$ denote the operator on $L^2(B_1)$ of multiplication
with $\chi_{Q^{(n)}_i}$. Then for $i\neq j$, $\chi_i^{(n)}
D_{\lambda} \chi_j^{(n)} = 0$ because $\chi_i^{(n)}$ and
$\chi_j^{(n)}$ have disjoint support and commute with $D_{\lambda}$.
Together with (\ref{eq:h00star}) this implies that
$$
\chi^{(n)}_i K_{\lambda} \chi^{(n)}_j = \chi^{(n)}_i P_1
H(\underline{w}^{\lambda})(H_f+1)^{-1}  P_1 \chi_j^{(n)} \ , \quad {\rm for} \ \ i \neq j.
$$
Since the right hand side is analytic, so is the left hand side
and hence
$$
K^{(n)}_{\lambda}  = \sum_{i \neq j} \chi^{(n)}_i K_{\lambda}
\chi^{(n)}_j \;
$$
is analytic. It follows that $\lambda \mapsto \sprod{\varphi}{
K^{(n)}_{\lambda}\psi}$ is analytic for all $\varphi, \psi$ in
$L^2(B_1)$. Now let $\varphi, \psi \in C(B_1)$. Then
\begin{eqnarray*}
\lefteqn{ \left| \sprod{\varphi}{K^{(n)}_{\lambda} \psi } -
\sprod{\varphi}{ K_{\lambda} \psi }
\right| } \\
&= & \left|\int_{B_1\times B_1} \overline{\varphi}(x) \psi(y)
M_{\lambda}(x,y)\sum_{i=1}^n\chi^{(n)}_i(x) \chi^{(n)}_i(y) dx dy\right| \\
&\leq & \| \varphi \|_\infty \|\psi \|_\infty \| K_{\lambda} \|_{\rm
HS} \left(\sum_{i=1}^n |Q_i^{(n)}|^2\right)^{1/2} \longrightarrow 0
\; , \quad ( n \to \infty ),
\end{eqnarray*}
uniformly in $\lambda$, because the Hilbert Schmidt norm
$\|K_{\lambda} \|_{\rm HS}$ is bounded uniformly in $\lambda$ (in
fact, it is bounded by $\gamma$). This proves that $\sprod{\varphi}{
K_{\lambda}\psi}$ is analytic for all $\varphi, \psi \in C(B_1)$.
Since $C(B_1)$ is dense in $L^2(B_1)$, an other approximate argument
using $\sup_{\lambda} \| K_{\lambda} \|<\infty$ shows that
$\sprod{\varphi}{ K_{\lambda} \psi}$ is analytic for
all $\varphi, \psi \in L^2(B_1)$. Therefore $\lambda\mapsto K_{\lambda}$ is analytic \cite{Kato}.\\


\noindent \underline{Step 2}: For each $k \in\RR^3$,
$w^{\lambda}_{0,0}(|k|)(\omega+1)^{-1} $ is an analytic function of $\lambda$.

For each $n\in\NN$ let $f_{k,n}\in L^2(B_1)$ denote a multiple of the
characteristic function of $B_{1/n}(k)$ with $\|f_{k,n}\|
= 1$. By the continuity of $w^{\lambda}_{0,0}(|k|)$ as a function of $k$
\begin{eqnarray}
w^{\lambda}_{0,0}(|k|)(\omega+1)^{-1}  &=& \lim_{n \to \infty} \int_{\RR^3} | f_{k,n}(x) |^2
w^{\lambda}_{0,0}(|x|) (|x|+1)^{-1} dx\label{eq:h001a} \\
&=& \lim_{n \to \infty} \langle a^*(f_{k,n}) \Omega, w_{0,0}^{\lambda}(H_f)(H_f+1)^{-1}
a^*(f_{k,n}) \Omega \rangle.\nonumber
\end{eqnarray}
Since $a^*(f_{k,n}) \Omega \in P_1\cF$ the
expression $\langle\cdots\rangle$, before taking the limit, is an
analytic function of $\lambda$. By assumption on $w^{\lambda}_{0,0}$,
this function is Lipschitz continuous with respect to $|k|$
\emph{uniformly in $\lambda$}. Therefore the convergence in
(\ref{eq:h001a}) is uniform in $\lambda$ and hence
$w^{\lambda}_{0,0}(|k|)(\omega+1)^{-1} $ is analytic by the Weierstrass
approximation theorem from complex analysis.\\

\noindent\underline{Step 3}: $w^{\lambda}_{0,0}(H_f)$ is analytic.

By the spectral theorem
$$
\langle \varphi, w^{\lambda}_{0,0}(H_f)(H_f+1)^{-1} \varphi \rangle =
\int_{[0,\infty)} w^{\lambda}_{0,0}(x)(x+1)^{-1} d\mu_{\varphi}(x).
$$
By an application of Lebesgue's dominated convergence theorem, using
$\sup_{\lambda} \|w^{\lambda}_{0,0}(x+1)^{-1}\| < \infty$, we see that the
right hand side, which we call  $\varphi(\lambda)$, is a continuous
function of $\lambda$. Therefore
$$
\int_\Gamma \varphi(\lambda) d\lambda =
\int_{[0,1]}\left(\int_{\Gamma}w^{\lambda}_{0,0}(x)(x+1)^{-1}\,d\lambda\right) d\mu_{\varphi} (x)
$$
for all closed loops $\Gamma:t\mapsto \lambda(t)$ in $S$. The analyticity of $\lambda\mapsto\vphi(\lambda)$ now follows from the analyticity of
$w^{\lambda}_{0,0}(x)(x+1)^{-1} $ and the theorems of Cauchy and Morera.
By polarization, $w^{\lambda}_{0,0}(H_f)(H_f+1)^{-1} $ is weakly analytic and hence
analytic.

\end{proof}

\secct{Supplement: Background on the Fock space, etc}
\label{sect-SA}
%
%
Let $ \fh$ be either $ L^2 (\RR^3, \mathbb{C}, d^3 k)$ or  $ L^2
(\RR^3, \mathbb{C}^2, d^3 k)$. In the first case we consider $ \fh$
as the Hilbert space of one-particle states of a scalar Boson or a
phonon, and in the second case,  of a photon. The variable
$k\in\RR^3$ is the wave vector or momentum of the particle. (Recall
that throughout this paper, the velocity of light, $c$, and Planck's
constant, $\hbar$, are set equal to 1.) The Bosonic Fock space,
$\cF$, over $\fh$ is defined by
\begin{equation} \label{eq-I.10}
\cF \ := \ \bigoplus_{n=0}^{\infty} \cS_n \, \fh^{\otimes n} \comma
\end{equation}
where $\cS_n$ is the orthogonal projection onto the subspace of
totally symmetric $n$-particle wave functions contained in the
$n$-fold tensor product $\fh^{\otimes n}$ of $\fh$; and $\cS_0
\fh^{\otimes 0} := \CC $. The vector $\Om:=1
\bigoplus_{n=1}^{\infty}0$ is called the \emph{vacuum vector} in
$\cF$. Vectors $\Psi\in \cF$ can be identified with sequences
$(\psi_n)^{\infty}_{n=0}$ of $n$-particle wave functions,  which are
totally symmetric in their $n$ arguments, and $\psi_0\in\CC$. In the
first case these functions are of the form, $\psi_n(k_1, \ldots,
k_n)$, while in the second case, of the form $\psi_n(k_1, \lambda_1,
\ldots, k_n, \lambda_n)$, where $\lambda_j \in \{-1, 1\}$ are the
polarization variables.

In what follows we present some key definitions in the first case
only limiting ourselves to remarks at the end of this appendix on
how these definitions have to be modified for the second case. The
scalar product of two vectors $\Psi$ and $\Phi$ is given by
\begin{equation} \label{eq-I.11}
\la \Psi \, , \; \Phi \ra \ := \ \sum_{n=0}^{\infty}  \int
\prod^n_{j=1} d^3k_j \; \overline{\psi_n (k_1, \ldots, k_n)} \:
\vphi_n (k_1, \ldots, k_n) \period
\end{equation}

Given a one particle dispersion relation $\om(k)$, the energy of a
configuration of $n$ \emph{non-interacting} field particles with
wave vectors $k_1, \ldots,k_n$ is given by $\sum^{n}_{j=1}
\om(k_j)$. We define the \emph{free-field Hamiltonian}, $\hf$,
giving the field dynamics, by
%
\begin{equation} \label{eq-I.17a}
(\hf \Psi)_n(k_1,\ldots,k_n) \ = \ \Big( \sum_{j=1}^n \om(k_j) \Big)
\: \psi_n (k_1, \ldots, k_n) ,
\end{equation}
for $n\ge1$ and $(\hf \Psi)_n =0$ for $n=0$. Here
$\Psi=(\psi_n)_{n=0}^{\infty}$ (to be sure that the r.h.s. makes
sense we can assume that $\psi_n=0$, except for finitely many $n$,
for which $\psi_n(k_1,\ldots,k_n)$ decrease rapidly at infinity).
Clearly that the operator  $\hf$ has the single eigenvalue  $0$ with
the eigenvector $\Om$ and the rest of the spectrum absolutely
continuous.

With each function $\vphi \in \fh$ one associates an
\emph{annihilation operator} $a(\vphi)$ defined as follows. For
$\Psi=(\psi_n)^{\infty}_{n=0}\in \cF$ with the property that
$\psi_n=0$, for all but finitely many $n$, the vector $a(\vphi)
\Psi$ is defined  by
\begin{equation} \label{eq-I.12}
(a(\vphi) \Psi)_n (k_1, \ldots, k_n) \ := \ \sqrt{n+1 \,} \, \int
d^3 k \; \overline{\vphi(k)} \: \psi_{n+1}(k, k_1, \ldots, k_n).
\end{equation}
These equations define a closable operator $a(\vphi)$ whose closure
is also denoted by $a(\vphi)$. Eqn \eqref{eq-I.12} implies the
relation
\begin{equation} \label{eq-I.13}
a(\vphi) \Om \ = \ 0 \period
\end{equation}
The creation operator $a^*(\vphi)$ is defined to be the adjoint of
$a(\vphi)$ with respect to the scalar product defined in
Eq.~(\ref{eq-I.11}). Since $a(\vphi)$ is anti-linear, and
$a^*(\vphi)$ is linear in $\vphi$, we write formally
\begin{equation} \label{eq-I.14}
a(\vphi) \ = \ \int d^3 k \; \overline{\vphi(k)} \, a(k) \comma
\hspace{8mm} a^*(\vphi) \ = \ \int d^3 k \; \vphi(k) \, a^*(k)
\comma
\end{equation}
where $a(k)$ and $a^*(k)$ are unbounded, operator-valued
distributions. The latter are well-known to obey the \emph{canonical
commutation relations} (CCR):
\begin{equation} \label{eq-I.15}
\big[ a^{\#}(k) \, , \, a^{\#}(k') \big] \ = \ 0 \comma \hspace{8mm}
\big[ a(k) \, , \, a^*(k') \big] \ = \ \delta^3 (k-k') \comma
\end{equation}
where $a^{\#}= a$ or $a^*$.

Now, using this one can rewrite the quantum Hamiltonian $\hf$ in
terms of the creation and annihilation operators, $a$ and $a^*$, as
\begin{equation} \label{Hfa}
\hf \ = \ \int d^3 k \; a^*(k)\; \om(k) \; a(k) \comma
\end{equation}
acting on the Fock space $ \cF$.

More generally, for any operator, $t$, on the one-particle space $
\fh$ we define the operator $T$ on the Fock space $\cF$ by the
following formal expression $T: = \int a^*(k) t a(k) dk$, where the
operator $t$ acts on the $k-$variable ($T$ is the second
quantization of $t$). The precise meaning of the latter expression
can obtained by using a basis $\{\phi_j\}$ in the space $ \fh$ to
rewrite it as $T: = \sum_{j} \int a^*(\phi_j) a(t^* \phi_j) dk$.

To modify the above definitions to the case of photons, one replaces
the variable $k$ by the pair $(k, \lambda)$ and adds to the
integrals in $k$ also the sums over $\lambda$. In particular, the
creation and annihilation operators have now two variables: $a_
\lambda^\#(k)\equiv a^\#(k, \lambda)$; they satisfy the commutation
relations
\begin{equation} \label{eq-I.16}
\big[ a_{\lambda}^{\#}(k) \, , \, a_{\lambda'}^{\#}(k') \big] \ = \
0 \comma \hspace{8mm} \big[ a_{\lambda}(k) \, , \,
a_{\lambda'}^*(k') \big] \ = \ \delta_{\lambda, \lambda'} \delta^3
(k-k') .
\end{equation}
One can also introduce the operator-valued transverse vector fields
by
$$a^\#(k):= \sum_{\lambda \in \{-1, 1\}} e_{\lambda}(k) a_{\lambda}^\#(k),$$
where $e_{\lambda}(k) \equiv e(k, \lambda)$ are polarization
vectors, i.e. orthonormal vectors in $\mathbb{R}^3$ satisfying $k
\cdot e_{\lambda}(k) =0$. Then in order to reinterpret the
expressions in this paper for the vector (photon) - case one either
adds the variable $\lambda$ as was mentioned above or replaces, in
appropriate places, the usual product of scalar functions or scalar
functions and scalar operators by the dot product of
vector-functions or vector-functions and operator valued
vector-functions.



\vspace{3mm}   \noindent {\bf Acknowledgements:}

A part of this work was done while the third author was visiting ETH
Z\"urich, ESI Vienna and IAS Princeton. He is grateful to these
institutions for hospitality.
 \vspace{3mm}



\end{document}